\crefname{figure}{Figure}{Figure}
\tikzset{snake it/.style={decorate, decoration=snake}}
\newcommand{\figlabel}[1]{\label{fig:#1}}
\newcommand{\figref}[1]{Fig.~\ref{fig:#1}}
\newcommand{\seclabel}[1]{\label{sec:#1}}
\newcommand{\secref}[1]{Section~\ref{sec:#1}}
\newcommand{\deflabel}[1]{\label{def:#1}}
\newcommand{\defref}[1]{Definition~\ref{def:#1}}
\newcommand{\problabel}[1]{\label{prob:#1}}
\newcommand{\probref}[1]{Problem~\ref{prob:#1}}
\newcommand{\thmlabel}[1]{\label{thm:#1}}
\newcommand{\thmref}[1]{Theorem~\ref{thm:#1}}
\newcommand{\proplabel}[1]{\label{prop:#1}}
\newcommand{\propref}[1]{Proposition~\ref{prop:#1}}
\newcommand{\lemlabel}[1]{\label{lem:#1}}
\newcommand{\lemref}[1]{Lemma~\ref{lem:#1}}
\newcommand{\claimlabel}[1]{\label{claim:#1}}
\newcommand{\claimref}[1]{Claim~\ref{claim:#1}}
\newcommand{\itmlabel}[1]{\label{itm:#1}}
\newcommand{\equlabel}[1]{\label{eq:#1}}
\newcommand{\equref}[1]{Equation~(\ref{eq:#1})}
\newcommand{\applabel}[1]{\label{app:#1}}
\newcommand{\appref}[1]{Appendix~\ref{app:#1}}
\newcommand{\algolabel}[1]{\label{algo:#1}}
\newcommand{\algoref}[1]{Algorithm~\ref{algo:#1}}
\def\thm@space@setup{
\thm@postskip=0pt}
\theoremstyle{definition}
\newtheorem{example}{Example}
\numberwithin{example}{section}
\newtheorem{theorem}{Theorem}
\numberwithin{theorem}{section}
\newtheorem{definition}{Definition}
\numberwithin{definition}{section}
\newtheorem{problem}{Problem}
\numberwithin{problem}{section}
\newtheorem{lemma}{Lemma}
\numberwithin{lemma}{section}
\newtheorem{proposition}{Proposition}
\numberwithin{proposition}{section}
\numberwithin{corollary}{section}
\newtheorem{claim}{Claim}
\numberwithin{claim}{section}
\newtheorem{remark}{Remark}
\DeclareMathAlphabet{\mathpzc}{OT1}{pzc}{m}{it}
\newcommand{\natsp}{\mathbb{N}_{>0}}
\newcommand{\set}[1]{\{#1\}}
\newcommand{\setpred}[2]{\{#1 \,|\, #2\}}
\renewcommand{\emptyset}{\varnothing}
\newcommand{\concat}{}
\newcommand{\Transducer}{\mathcal{M}}
\newcommand{\range}[1]{[1 .. #1]}  
\newcommand{\partialto}{\hookrightarrow}
\newcommand{\vect}[1]{\overline{#1}}
\newcommand{\powset}[1]{\mathcal{P}(#1)}
\newcommand{\evt}[2]{\langle #2,#1 \rangle}
\newcommand{\ev}[1]{\langle #1\rangle}
\newcommand{\vars}{\mathcal{X}}
\newcommand{\threads}{\mathcal{T}}
\newcommand{\ThreadOf}[1]{\mathsf{thr}(#1)}
\newcommand{\OpOf}[1]{\mathsf{op}(#1)}
\newcommand{\VariableOf}[1]{\mathsf{var}(#1)}
\newcommand{\events}[1]{\mathsf{Events}_{#1}}
\newcommand{\tr}{w}
\newcommand{\rdf}[1]{\mathsf{rf}_{#1}}
\newcommand{\po}[1]{\mathsf{po}_{#1}}
\newcommand{\reads}[1]{\mathsf{Reads}_{#1}}
\newcommand{\eqcl}[2]{[#1]_{#2}}
\newcommand{\maz}{\mathcal{M}}
\newcommand{\mazeq}{\equiv_{\maz}}
\newcommand{\indrel}{\mathbb{I}}
\newcommand{\deprel}{\mathbb{D}}
\newcommand{\alphabet}{\Sigma}
\newcommand{\noval}{\mathsf{Conc}}
\newcommand{\rfnovaleq}{\equiv_{\rdf{}}}
\newcommand{\checkOrderText}{\textsf{causallyOrdered}}
\newcommand{\checkOrder}[4]{\checkOrderText_{#1}(#2, #3, #4)}
\newcommand{\rfcl}[1]{\eqcl{#1}{\rdf{}}}
\newcommand{\focalEv}{\diamond}
\newcommand{\shasha}{\maz}
\newcommand{\grain}{g} 
\newcommand{\subsq}[1]{\mathbf{i}}
\newcommand{\grains}{G} 
\newcommand{\grainind}{\indrel_{\grains}}
\newcommand{\actv}[1]{\mathsf{Active}_{#1}}
\newcommand{\grainIDs}{\mathsf{gIDs}}
\newcommand{\contents}{\mathsf{C}}
\newcommand{\summaries}{\mathsf{SC}}
\newcommand{\pvars}{\mathsf{P}} 
\newcommand{\spvars}{\mathsf{SP}}
\newcommand{\deadpath}[1]{\rightsquigarrow_{#1}}
\newcommand{\splitGrains}{\mathsf{split}}
\newcommand{\gMarkAlphabet}{\widehat{\alphabet}}
\newcommand{\GGraph}[1]{\mathcal{G}_{#1}}
\newcommand{\SGGraph}[1]{\mathcal{SG}_{#1}}
\newcommand{\focalGrain}[1]{\grain^{\focalEv_{#1}}}
\newcommand{\mergeEdge}[2]{\mathit{mrg}({#1},{#2})}
\newcommand{\mergeSum}[3]{\mathit{mrgSm}({#1},{#2},{#3})}
\newcommand{\depEdge}[1]{\mathit{Dep}(#1)}
\newcommand{\stuple}[1]{\langle{#1}\rangle}
\newcommand{\cmpvar}[1]{\mathsf{complete}(#1)}
\newcommand{\depend}[1]{\mathit{depend}(#1)}
\newcommand{\hb}{\prec_{\maz}}
\newcommand{\meq}{\equiv_{\mathcal{M}}}
\newcommand{\after}[1]{\mathsf{After}}
\newcommand{\blkafter}[1]{\mathsf{BlocksAfter}}
\newcommand{\aut}{\mathcal{A}}
\newcommand{\autsup}[1]{\aut_{#1}}
\newcommand{\opfont}[1]{\texttt{#1}}
\newcommand{\rd}{\opfont{r}}
\newcommand{\wt}{\opfont{w}}
\newcommand{\bgn}{\rhd} 
\newcommand{\egn}{\lhd} 
\newcommand{\poly}{\operatorname{poly}}
\newcommand{\langeq}{L^=}
\newcommand{\acr}[1]{\textsf{#1}}
\newcommand{\acrtr}{\operatorname{\acr{tr}}}
\newcommand{\ord}[2]{\leq^{#1}_{\mathsf{#2}}}
\newcommand{\strictord}[2]{<^{#1}_{\mathsf{#2}}}
\newcommand{\notstrictord}[2]{\not<^{#1}_{\mathsf{#2}}}
\newcommand{\trord}[1]{\ord{#1}{\acrtr}}
\newcommand{\ctrf}[1]{\strictord{#1}{\rfnovaleq}}
\newcommand{\notctrf}[1]{\notstrictord{#1}{\rfnovaleq}}
\colorlet{RED}{red}
\newcommand{\execution}[2]{
\scalebox{1}{
  \begin{tikzpicture}%
    \foreach \x in {1,...,#1}
    \node[right] at (1.5*\x+0.2,0.25) {$t_{\x}$};
    \draw (1.2,0) -- (#1*1.5+1.3,0);%
    \pgfmathsetmacro{\y}{1};%
    #2%
    \draw (1.2,0) -- (1.2,-0.4*\y);%
    \draw (#1*1.5+1.3,0) -- (#1*1.5+1.3,-0.4*\y);%
    \foreach \x in {2,...,#1}
    \draw (1.2,-0.4*\y) -- (#1*1.5+1.3,-0.4*\y);%
  \end{tikzpicture}
}
}
\newcommand{\figev}[2]{
\pgfmathsetmacro{\y}{\y+1};
\pgfmathsetmacro{\y}{\y-1};
\node [left] at (1.25,-0.4*\y)  {\pgfmathprintnumber{\y}};%
\node at (#1*1.5 + 0.45,-0.4*\y) { #2 };%
\pgfmathsetmacro{\y}{\y+1};
}
\newcommand{\separatorlight}[4]{
  \draw[loosely dashed] (1.5 + 0.45 + #3, -0.4*#2 -0.2) -- (#1*1.5 + 0.45 + #4, -0.4*#2 - 0.2);
}
\newcommand{\separatordark}[4]{
  \draw[thick, dash dot] (1.5 + 0.45 + #3, -0.4*#2 -0.2) -- (#1*1.5 + 0.45 + #4, -0.4*#2 - 0.2);
  \draw[thick, dash dot] (1.5 + 0.45 + #3, -0.4*#2 -0.25) -- (#1*1.5 + 0.45 + #4, -0.4*#2 - 0.25);
}
\newcommand{\demarcate}[5]{
  \draw (#1*1.5 + 0.45 + #4, -0.4*#2) edge[-{Latex[length=2mm, width=2mm]},  <->,>=stealth]  (#1*1.5 + 0.45 + #4, -0.4*#3);
  \node[fill=white] (dummy) at (#1*1.5 + 0.45 + #4, -0.2*#2 -0.2*#3) {#5};
}
\newcommand{\executionfull}[9]{
\scalebox{#3}{
  \begin{tikzpicture}
    \foreach \x in {1,...,#1}
    \node[right] at (#4*\x+#6, #8) {$t_{\x}$}; 
    \draw (#7,0) -- (#1*#4+#7,0); 
    \pgfmathsetmacro{\y}{1};
    #2 
    \draw (#7,0) -- (#7,-#5*\y); 
    \draw (#1*#4+#7,0) -- (#1*#4+#7,-#5*\y); 
    \draw (#7,-#5*\y) -- (#1*#4+#7,-#5*\y); 
    \ifthenelse{#9 = 1}{
      \foreach \x in {2,...,#1}
      \draw[dashed] (#4*\x+#7-#4,0) -- (#4*\x+#7-#4,-#5*\y); 
    }{}
  \end{tikzpicture}
}
}
\newcommand{\figevfull}[9]{
\ifthenelse{#7 = 1}{
  \ifthenelse{#8 = -1}{
    \node [left] at (#5,(-#4*\y)  {\pgfmathprintnumber{\y}};%
  }{
    \node [left] at (#5,(-#4*\y)  {#9};%
  }
}{}
\node at (#1*#3 + #6,(-#4*\y) {$ #2 $};%
\pgfmathsetmacro{\y}{\y+1};
}
\begin{document}

\title{Coarser Equivalences for Causal Concurrency}

\author{Azadeh Farzan}
\email{azadeh@cs.toronto.edu}
\affiliation{%
  \institution{University of Toronto}
  \city{Toronto}
  \country{Canada}
}

\author{Umang Mathur}
\email{umathur@comp.nus.edu.sg}
\affiliation{%
  \institution{National University of Singapore}
  \city{Singapore}
  \country{Singapore}
}


\begin{abstract}
\emph{Trace theory} (formulated by Mazurkiewicz in 1987) 
is a principled framework for defining equivalence relations for concurrent program runs
based on a commutativity relation over the set of atomic 
steps taken by individual program threads. 
Its simplicity, elegance, and algorithmic efficiency makes it useful in many different contexts
including program verification and testing.
It is well-understood that the larger the equivalence classes are, 
the more benefits they would bring to the algorithms and applications that use them. 
In this paper, we study relaxations of trace equivalence with the goal of maintaining its algorithmic advantages. 

We first prove that the largest appropriate relaxation of trace equivalence,  an equivalence relation that preserves the order of steps taken by each thread \emph{and} what write operation each read operation observes, does not yield efficient algorithms. Specifically, we prove a \emph{linear space lower bound} for the problem of checking, in a streaming setting, if two arbitrary steps of a concurrent program run are \emph{causally concurrent} (i.e. they can be reordered in an equivalent run) or \emph{causally ordered }(i.e. they always appear in the same order in all equivalent runs). The same problem can be decided in \emph{constant space} for trace equivalence.
Next, we propose a new commutativity-based notion of equivalence called \emph{grain equivalence} that is 
strictly more relaxed than trace equivalence, and yet yields a constant space algorithm for the same problem. 
This notion of equivalence uses commutativity of \emph{grains}, which are sequences of atomic steps, 
 in addition to the standard commutativity from trace theory. We study the two distinct cases when the grains are contiguous subwords of the input program run and when they are not, formulate the precise definition of causal concurrency in each case, and show that they can be decided in {\em constant space}, despite being strict relaxations of the notion of causal concurrency based on trace equivalence.   

\end{abstract}

\maketitle

\section{Introduction}
\seclabel{intro}

In the last 50 years, several models have been introduced for concurrency and parallelism, of which Petri nets \cite{hack1976petri}, Hoare's CSP \cite{hoare1978communicating}, Milner's CCS \cite{milner1980calculus}, and event structures \cite{Winskel1987} are prominent examples. Trace theory \cite{diekert1995book} is a paradigm in the same spirit which enriches words (or sequences) by a very restricted yet widely applicable mechanism to model parallelism:  some pairs of {\em events} (atomic steps performed by individual threads) are determined statically to be {\em independent} (or commutative), and any two sequences that can be transformed to each other through swaps of consecutive independent events are identified as {\em trace equivalent}. In other words, it constructs a notion of equivalence based on {\em commutativity} of individual events. The simplicity of trace theory, first formulated by Mazurkiewicz in 1987~\cite{Mazurkiewicz87}, has made it highly popular in a number of areas in computer science, including programming languages, distributed computing, computer systems, and software engineering. 
The brilliance of trace theory lies in its simplicity, both conceptually and in yielding simple and efficient algorithms for several core problems in the context of concurrent and distributed programs. 
It has been widely used in both dynamic program analysis and in construction of program proofs. In dynamic program analysis, it has applications in predictive testing, for instance in data race prediction \cite{djit1999,Flanagan09,Elmas07,Smaragdakis12,Kini17} and in prediction of atomicity violations~\cite{Farzan09,Sorrentino10,Farzan2006} among others. In verification, it is used for the simplification of verification of concurrent and distributed programs \cite{lics2023,mycav19,mypopl20,mypldi22,Genest07,psynch,Desai14}.

The philosophy behind most applications of trace theory is that a single representative replaces an entire set of equivalent runs. Therefore, these applications would clearly benefit if larger sets of concurrent runs could soundly be considered equivalent. This motivates the key question in this paper: Can we retain all the benefits of classical trace theory while soundly enlarging the equivalence classes to improve the algorithms that use them? 
It is not difficult to come up with sound equivalence relations with larger classes. 
\citet{Abdulla2019} describe the sound equivalence 
\begin{wrapfigure}{r}{0.3\textwidth}
\vspace{-10pt}
\includegraphics[scale=0.75]{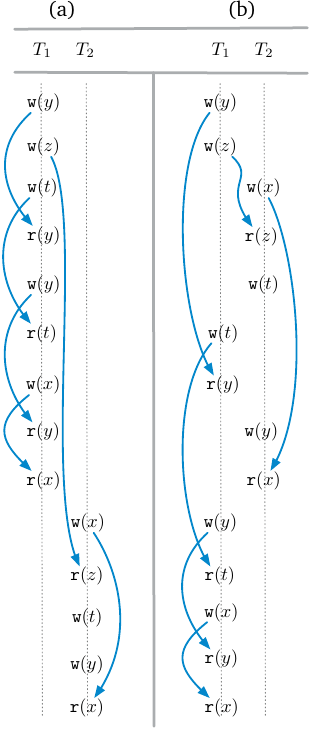}
\vspace{-10pt}
\caption{Read-From Equivalence}\label{fig:intro}
\vspace{-20pt}
\end{wrapfigure}
relation {\em reads-from equivalence} which has the largest classes that remain sound and is defined in a way to preserve two essential properties of 
concurrent program runs: (1) the total order of {\em events} in each thread must remain the same in every equivalent run, 
and (2) each read event must observe the value written from the same write event in every equivalent run.   
The former is commonly known as the preservation of {\em program order}, 
and the latter as the preservation of the {\em reads-from} relation. These conditions guarantee that any equivalent run is also a valid run of the same concurrent program, and all the observed values (by read events) remain the same, which implies that the outcome of the program run must remain the same. In other words, both control flow and data flow are preserved.

Consider the concurrent program run illustrated in \figref{intro}(a), and let us focus on the order of the read event $\rd(x)$ from thread $T_1$ and the write event $\wt(x)$ from thread $T_2$. In classical trace theory, these events are dependent (or non-commutatitive) and they must be appear in the same order in every member of the equivalence class of the run. This implies that the illustrated run belongs in an equivalence class of size 1. On the other hand, there exist other reads-from equivalent runs; one such run is illustrated in \figref{intro}(b), in which the two aforementioned events have been reordered. The arrows connect each write event to the read event that reads its value, which remain unchanged between the two runs.

We give a formal argument for why (the more relaxed) reads-from equivalence is not as useful as trace equivalence from an algorithmic standpoint. One of the most fundamental algorithmic questions in this context is: Given two events $e$ and $e'$ in a run $\rho$, do they always appear in the same order in every member of the equivalence class of $\rho$ or can they be reordered in an equivalent run? In the former case, we call $e$ and $e'$ {\em causally ordered}, and otherwise {\em causally concurrent}.  For example, the events $\rd(x)$ from thread $T_1$ and $\wt(x)$ from thread $T_2$ are causally ordered under trace equivalence but causally concurrent under reads-from equivalence. On the other hand, $\wt(z)$ event of thread $T_1$ and the $\rd(z)$ event of thread $T_2$ are causally ordered under both equivalence relations.

For trace equivalence, it is possible to decide if two events are causally ordered or concurrent, using a {\em single pass constant space} algorithm; if the length of the run is assumed to be the size of the input, and other program measures such as the number of threads and the number of shared variables are considered to be constants. 
In \secref{semantic-equivalence}, we prove that if {\em equivalence} is defined as broadly as reads-from equivalence, then this check can no longer be done in {\em constant space} by proving a {\em linear space} lower bound (\thmref{check-order-semantic-linear-space-lower-bound}). In particular, we prove this for two closely related variants of this problem: (1) the decision about the ordering of two specific events (as discussed in the example of \figref{intro}), and (2) the decision about the ordering of any two occurrences of a specific (atomic) action (e.g. are any two $\wt(x)$ actions unordered?). 
Both problems are closely related to predictive testing of violations of 
generic correctness properties for concurrent programs, 
such as data race freedom \cite{djit1999,Flanagan09,Elmas07,Smaragdakis12,Kini17,Huang14,Pavlogiannis2020}, deadlock freedom~\cite{Kalhauge2018,Tunc2023} and atomicity \cite{Farzan09,Sorrentino10,Farzan2006,Mathur2020},
and have applications in dynamic partial order reduction techniques~\cite{Flanagan2005,Abdulla2019,Kokologiannakis2022}
for model checking of concurrent programs. In all such contexts, having a monitor whose state space does not depend on the length of the input program run, which may include billions of events, is highly desirable. Thus far, the only existing instance of such a monitor has been based on trace equivalence. 

We propose a new notion of equivalence for concurrent program runs, which in terms of expressivity lies in between trace and reads-from equivalences, and retains the highly desirable algorithmic simplicity of traces. The idea is based on enriching the classical commutativity relation of trace theory to additionally account for commutativity of certain {\em sequences of events}, called {\em grains}. A grain can be an arbitrarily long sequence of operations,  which can belong to multiple threads. What motivates this definition is that in places where swapping a pair of individual events may not be possible, groups of operations (as grains) may still commute {\em soundly}, meaning without disturbing the program order or the reads-from relation. For two grains to be swappable, they must be adjacent and contiguous, at the time they are swapped. 

\begin{wrapfigure}{r}{0.35\textwidth}
\vspace{-15pt}
\includegraphics[scale=0.8]{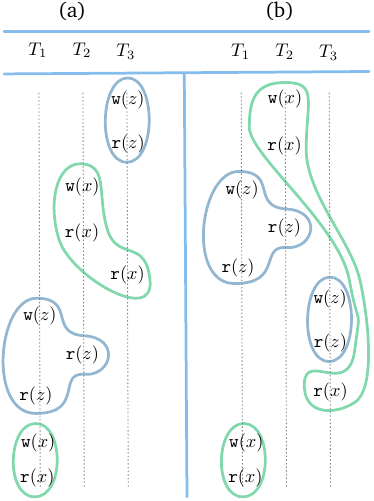}
\caption{Commuting Grains}\label{fig:intro2}
\vspace{-20pt}
\end{wrapfigure}
As an example, consider the concurrent program run in \figref{intro2}(a). Four grains are marked. Observe that the grains of the same colour soundly commute. Grains of different colour always share a thread and therefore commuting them would break program order. First, assume that only the two blue grains exist. One can then use a sequence of swaps that are either standard swaps permitted by trace equivalence, or a swap of the two blue grains, and transform the run in (a) to the one illustrated in \figref{intro2}(b), and hence reorder the two $\wt(z)$ events. We call the run in (b) to be {\em grain-equivalent} to the one in (a). Observe that this is not possible under trace equivalence. A similar observation can be made if we consider only the green grains, and the goal of reordering the two $\wt(x)$ operations. However, if all four grains are considered together, since the grains of different colour do not commute, nothing can be swapped in \figref{intro2}(a). Once something is decided to be part of a grain, it must always move together with the rest of the grain. This turns out to be a key to algorithmic simplicity. 

In Section \ref{sec:geq}, we formally define the set of runs that are soundly equivalent to a program run based on a choice of grains that, as in \figref{intro2}(a), appear as contiguous subwords of the program run. We observe that different choices of grains can imply the concurrency of different pairs of events. We define {\em grain concurrency} of two events formally as the existence of a choice of grains and a sound commutativity relation (over the grains) such that the two events can be (soundly) reordered in a {\em grain-equivalent run} up to those choices. 

The construction of a monitor based on trace equivalence vitally relies on the fact that the alphabet of concurrent programs is finite and consequently there are finitely many choices of commutativity relations over this alphabet. Note that, once any subword can become a new entity that participates in commutativity-based reasoning, the number of these subwords is no longer bounded. Neither is the set of possible commutativity relations over them. 
In Section \ref{sec:gmonitor}, we prove by construction that {\em grain-equivalence} can be monitored in constant-space. This construction relies on a key insight that the monitor can maintain {\em summaries} for these grains that belong to a finite universe, and correctly check the concurrency status of two events. 



Let us revisit the example runs in \figref{intro2}. The run in \figref{intro2}(b) is the result of commuting two blue grains in the run in \figref{intro2}(b). The first green grain, however, is no longer a contiguous subword and therefore cannot be seen as a potential grain in the solution we have outlined so far. In Section \ref{sec:sgrains}, we formally expand the definition of grains so that these so-called {\em scattered grains} can be considered as candidates as well. This requires a leap in the definition of the set of words that are soundly equivalent to the input run. In the case of contiguous grains, the set of equivalent runs maintains the characteristic of classic trace theory that one can transform the input run to any inferred equivalent run through a sequence of valid swaps. With {\em scattered grains}, we establish soundness by deferring to the more general definition of {\em reads-from equivalence}, and consequently forfeit the characteristic of transformation through swaps.

Surprisingly, however, this weaker definition still maintains the property that it can be monitored in constant-space. First, there is the additional complication that unlike contiguous subwords where at most one grain is {\em active} (open) at any given time, there may be an unbounded number of {\em active} scattered grains in a concurrent program run at any given time. Nevertheless, we prove that if an outcome can be decided based on an arbitrary choice of scattered grains, then it can also be decided based on a choice of scattered grains in which the number of active grains is bounded. With a bounded number of active grains (in contrast to just a single one), there is another complication where two active grains, which belong to the middle of a chain witnessing that $e$ and $e'$ are ordered, are only discovered to be ordered long after $e$ and $e'$ have been visited. In Section \ref{sec:beyond-contiguous-grains}, we construct a monitor that resolves these problems and soundly checks the concurrency of a pair of events based on all possible choices of scattered grains, which we call {\em scattered grain concurrency}.

Even though the {\em scattered grain concurrency} monitor subsumes the {\em grain concurrency} monitor in expressivity, the paper presents these monitors separately, since the former admits a characterization based on swaps and the latter does not. Moreover, this permits us to introduce the relevant ideas in tandem with the problems they help solve. In summary, the paper presents the following results:
\begin{itemize}
\item We prove that there is no constant-space algorithm that checks if two arbitrary events are causally ordered or concurrent under the reads-from equivalence relation 
by establishing a linear space lower bound, a quadratic time-space tradeoff bound, a conditional quadratic time lower bound as well as the non context-freeness of this problem. We complement these lower bounds by showing that the problem can nevertheless be solved in polynomial time as well as in deterministic linear space.  (\secref{semantic-equivalence}). 
\item We propose {\em grain equivalence} as the means of defining a set of runs that are soundly equivalent to a given program run and form a strictly larger set than the trace equivalence class of the run, and a strictly smaller set than the reads-from equivalence class of it. This notion of equivalence is constructed based on  commutativity of certain pairs of words over the alphabet of concurrent program operations that appear as contiguous subwords of the input program run (Section \ref{sec:geq}). 
\item We introduce the notion of {\em grain concurrency} that attempts to find a witness for causal concurrency of a pair of events based on all possible choices of grains. We further weaken the definition of {\em grain concurrency} by permitting {\em scattered grains} to be soundly used for reasoning in equivalence and call this {\em scattered grain concurrency} (Section \ref{sec:sgrains}).
\item We give a space efficient algorithm that soundly checks {\em grain concurrency} for a pair of events, i.e.  whether the two events are causally ordered or concurrent up to {\em grain equivalence} (Section \ref{sec:gmonitor}),  
and a space efficient algorithm that soundly checks {\em scattered grain concurrency} based on all possible choices of scattered grains (Section \ref{sec:beyond-contiguous-grains}).
\end{itemize}


\section{Preliminaries}
\seclabel{prelim}

A string over an alphabet $\alphabet$ is a finite sequence
of symbols from $\alphabet$.
We use $|w|$ to denote the length of the string $w$ and $w[i]$
to denote the $i^{\text{th}}$ symbol in $w$.
The concatenation of two strings $w, w'$ will be denoted by $w \concat w'$.

\subsection{Trace Equivalence}
Antoni Mazurkiewicz~ popularized the use of partially commutative monoids
for modelling executions of concurrent systems\cite{Mazurkiewicz87}. 
We discuss this formalism here.
An independence relation over $\alphabet$ is a symmetric
irreflexive binary relation $\indrel \subseteq \alphabet \times \alphabet$.
The Mazurkiewicz equivalence (or trace equivalence) relation induced by 
$\indrel$, denoted\footnote{The equivalence is parametric on the independence relation $\indrel$. In this view, 
a notation like $\equiv^\indrel_\maz$ would be more precise. In favor of readability, we skip this parametrization; the independence relation $\indrel$ will always be clear.} $\mazeq$
 is then the smallest equivalence
over $\alphabet^*$
such that for any two strings $w, w' \in \alphabet^*$
and for any two letters $a, b \in \alphabet$ with $(a, b) \in \indrel$,
we have $w \concat a \concat b \concat w' \mazeq w \concat b \concat a \concat w'$. 
A Mazurkiewicz trace is then an equivalence class of $\mazeq{}$.

\myparagraph{Mazurkiewicz partial order}{
	An equivalence class of $\mazeq{}$ can be succinctly represented using
	a partial order on the set of \emph{events} in a given string.
	Events are unique identifiers for the different occurrences of symbols in a string. 
	Formally, for a string $w$, the set of events of $w$, denoted $\events{w}$ is the set of pairs
	of the form $e = (a, i)$ such that $a \in \alphabet$ and there are at least $i$ occurrences
	of $a$ in $w$.
	Thus, an event uniquely identifies a position in the string --- the pair $(a, i)$
	corresponds to the unique position $j \leq |w|$ such that $w[j] = a$
	and there are exactly $i-1$ occurrences of $a$ before the index $j$ in $w$.
	Observe that if $w$ and $w'$ are permutations of each other, then $\events{w} = \events{w'}$.
	For an event $e = (a, i)$, we use the shorthand $w[e]$ to denote the label $a$.
	Often, we will use the position $j$ itself to denote the event 
	$(a, i)$ when the distinction is immaterial.
	Fix $w \in \alphabet^*$.
	The Mazurkiewicz (or trace) partial order for $w$, denoted $\hb$ is 
	then, the transitive closure of the relation 
	$\setpred{(e, f)}{e, f \in \events{w},\ e \text{ occurs before } f \text{ in }w\ \wedge\ (w[e], w[f]) \in \deprel}$.
}

For Mazurkiewicz traces, the corresponding partial order is a
sound and complete representation of an equivalence class ~\cite{Mazurkiewicz87}.
%

\subsection{Concurrent alphabet and dependence}
For modeling runs or executions of shared memory multi-threaded concurrent programs,
we will consider the alphabet consisting of reads and writes.
Let us fix \emph{finite} sets $\threads$ and $\vars$ of 
thread identifiers and memory location identifiers respectively.
The concurrent alphabet $\alphabet_\noval$ we consider in the rest of the paper is:
\begin{align*}
\alphabet_\noval = \setpred{\ev{t, o, x}}{t \in \threads, o \in \set{\rd, \wt}, x \in \vars}
\end{align*}
For a symbol $a = \ev{t, o, x} \in \alphabet_\noval $, we say
$\ThreadOf{a} = t$, $\OpOf{a} = o$ and $\VariableOf{a} = x$.
A concurrent program \emph{run} or \emph{execution} is a string over $\alphabet_\noval$.
For a run $w \in \alphabet_\noval^*$ and event $e \in \events{w}$,
we overload the notation and use $\ThreadOf{e}$, $\OpOf{e}$ and $\VariableOf{e}$
in place of $\ThreadOf{w[e]}$, $\OpOf{w[e]}$ and $\VariableOf{w[e]}$ respectively.
Since the focus of the rest of the article will be on concurrent program runs, 
we will omit the subscript $\noval$,
and unless $\alphabet$ is not explicitly defined, we will assume it is $\alphabet_\noval$.

We use the following 
independence (commutativity) relation:
\begin{align}
\equlabel{shasha}
\begin{array}{rcl}
\indrel_\shasha &= & \alphabet \times \alphabet - \setpred{(a, b)}{\ThreadOf{a} = \ThreadOf{b} \lor (\VariableOf{a} = \VariableOf{b} \land \wt \in \set{\OpOf{a}, \OpOf{b}})}
\end{array}
\end{align}
This defines an appropriate trace monoid for the alphabet of concurrent program actions. We refer to the equivalence classes a concurrent program run $\tr$ in this monoid as $[\tr]_\maz$. This trace monoid is provably sound in the following sense:
\begin{remark}\label{rem:1}
For a given concurrent program run $\tr \in \Sigma$, every member of $[\tr]_\maz$ preserves the {\em program order} and {\em reads-from} relations induced by $\tr$. 
\end{remark}

Next, for ease of notation, we will often denote labels as $\ev{t, o(x)}$ (for example $\ev{t, \wt(x)}$) 
in place of the expanded version $\ev{t, o, x}$.
We will also, at times, omit the thread identifier and use the shorthand $e = o(x)$ to denote that
$\OpOf{e} = o$ and $\VariableOf{e} = x$.



\subsection{Reads-From Equivalence}

A natural notion of equivalence of program runs in the context
of shared memory multi-threaded program is \emph{reads-from}
equivalence. We formalize this notion here. 

\myparagraph{Program Order and Reads-from mapping}{
The program order, or thread order induced by a concurrent program run $w \in \alphabet^*$
orders any two events belonging to the same thread.
Formally, $\po{w} = \setpred{(e, f)}{e, f \in \events{w}, \ThreadOf{e} = \ThreadOf{f}, e \text{ occurs before } f \text{ in } w}$.
The reads-from mapping induced by $w$
 maps each read event to the write event it observes.
In our context, this corresponds to the last conflicting write event before the read event.
Formally, the reads-from mapping is a partial function 
$\rdf{w} : \events{w} \partialto \events{w}$
such that $\rdf{w}(e)$ is defined iff $\OpOf{e} = \rd$.
Further, for a read event $e$ occurring at the $i^\text{th}$ index in $w$,
$\rdf{w}(e)$,
is the unique event $f$ occurring at index $j < i$ for which $\OpOf{f} = \wt$,
$\VariableOf{f} = \VariableOf{e}$ and there is no 
other write event on $\VariableOf{e}$ (occurring at index $j'$)
such that $j < j' < i$.
Here, and for the rest of the paper, we assume that every read event is preceded by some write event
on the same variable. 
}

\myparagraph{Reads-from equivalence}{
Reads-from equivalence is a semantic notion of equivalence
on concurrent program runs, that distinguishes between
two runs based on whether or not they might produce different outcomes.
We say two runs $w, w' \in \alphabet^*$ are
\emph{reads-from equivalent}, denoted $w \rfnovaleq w'$
if $\events{w} = \events{w'}$, $\po{w} = \po{w'}$ and
$\rdf{w} = \rdf{w'}$.
That is, for $w$ and $w'$ to be reads-from equivalent,
they should be permutations of each other and must follow the same
program order, and further, every read event $e$ must read from
the same write event in both $w$ and $w'$.
Reads-from equivalence is a strictly coarser equivalence than
trace equivalence for concurrent program runs.
That is, whenever $w \mazeq{} w'$, we must have $w \rfnovaleq w'$;
but the converse is not true.
}

\begin{example}
Consider the two runs (denoted $\sigma$ and $\sigma'$) shown in \figref{intro}(a) and 
\figref{intro}(b) respectively.
Observe that $\sigma$ and $\sigma'$ have the same set of events,
and program order ($\po{\sigma} = \po{\sigma'}$).
Also, for each read event $e$, $\rdf{\sigma}(e) = \rdf{\sigma'}(e)$.
This means $\sigma \rfnovaleq \sigma'$.
Consider now the permutation of $\sigma$ corresponding to the sequence
$\sigma'' = e_1\ldots e_8e_{10}e_9e_{11}\ldots e_{14}$, where $e_i$ denotes the $i^\text{th}$ event of $\sigma$ from the top.
$\sigma''$ does not have the same reads-from mapping as $\sigma$
since $\rdf{\sigma''}(e_9) = e_{10} \neq e_7 = \rdf{\sigma}(e_9)$,
and thus $\sigma'' \not\rfnovaleq \sigma$.
\end{example}

\myparagraph{Soundness of Equivalence}{
The focus of this work is to develop equivalences that are coarser than Mazurkiewicz equivalence
and are also \emph{sound},
where soundness will be with respect to reads-from equivalence.
Given a run $\tr \in \alphabet^*$ and a set $S_\tr \subseteq \alphabet^*$,
we say that $S_\tr$ is sound for $\tr$ if $S_\tr \subseteq \setpred{\tr'}{\tr \rfnovaleq \tr}$.
Likewise, an equivalence relation $\sim$ over $\alphabet^*$ is said to be sound
if for every $\tr\in \alphabet^*$, the equivalence class $\eqcl{\tr}{\sim}$
is sound for $\tr$. Hence, Remark \ref{rem:1} precisely states that trace equivalence defined based on the independence relation $\indrel_\maz$ is sound in this sense. 
}


\section{Causal Concurrency under Reads-From Equivalence}
\seclabel{semantic-equivalence}

In scenarios like dynamic partial order reduction in stateless model checking,
or runtime predictive monitoring, one is often interested in
the \emph{causal relationship} between actions.
Understanding causality at the level of program runs often amounts to
answering whether there is an equivalent run that witnesses
the inversion of order between two  events.

The efficiency of determining causal concurrency 
is then key in designing efficient techniques in the aforementioned contexts. 
When deploying such techniques for monitoring 
large scale software
artifacts exhibiting executions with billions of events,
a desirable goal is to design monitoring algorithms that 
can be efficiently implemented in an online `incremental' fashion
that store only a constant amount of information, independent of 
the size of the execution being monitored~\cite{RV03RTA}.
In other words, an ideal algorithm would observe events in an execution
in a single forward pass, using a small amount of memory.
This motivates our investigation of the efficiency of checking causal ordering.

We first formally define the relevant algorithmic questions in the context of any equivalence relation on concurrent program runs, and then study their complexity under reads-from equivalence.

\subsection{Causal Concurrency and Ordering}
\seclabel{check-order-semantic}

Formally, let $\sim$ be an equivalence over $\alphabet^*$ such that when two runs
are equivalent under $\sim$, they are also permutations of each other.
Let $w \in \alphabet^*$ be a concurrent program run, and let $e, f \in \events{w}$ 
be two events occurring at indices $i$ and $j$ (with $i < j$).
We say that $e$ and $f$ in $\tr$ are \emph{causally ordered} under $\sim$  
if for every $w' \sim w$, $e$ and $f$ appear in the same order as they do in $w$.
If $e$ and $f$ are not causally ordered under an equivalence $\sim$,
we say that they are causally concurrent under $\sim$.

The following are the key algorithmic questions we investigate in this paper.
\begin{problem}[Checking Causal Concurrency Between Events]
\problabel{conc-events}
Let $\sim$ be an equivalence relation over $\alphabet^*$.
Given a program run $w \in \alphabet^*$, and two events
$e, f \in \events{w}$, the problem of checking causal concurrency between events 
asks if $e$ and $f$ are causally concurrent under $\sim$. 
\end{problem}

In the context of many applications in testing and verification
of concurrent programs, one often asks the following more coarse grained question.
\begin{problem}[Checking Causal Concurrency Between Symbols]
\problabel{cc-symb}
Let $\sim$ be an equivalence relation over $\alphabet^*$.
Given a program run $w \in \alphabet^*$, and two symbols
(or letters) $c, d \in \alphabet$, 
the problem of checking causal concurrency between symbols (or letters)  
asks to determine if 
there are events $e, f \in \events{w}$ such that $w[e] = c$, $w[f] = d$
and $e$ and $f$ are causally concurrent under $\sim$.
\end{problem}

If one has an oracle for deciding causal concurrency between symbols, then one can use it to check causal concurrency between events. 
In particular, assume  $c = w[e]$ and $d = w[f]$ and 
 consider the alphabet $\Delta = \alphabet \uplus \set{c^{\focalEv_1}, d^{\focalEv_2}}$,
where $c^{\focalEv_1}$ and $d^{\focalEv_2}$ are distinct marked copies of the letters $c, d \in \alphabet$.
Consider the string $w' \in \Delta^*$ with $|w'| = |w|$
such that $w'[g] = w[g]$ for every $g \not\in \set{e, f}$,
$w[e] = c^{\focalEv_1}$ and $w[f] = d^{\focalEv_2}$.
Then, causal concurrency (respectively orderedness) of symbols $c^{\focalEv_1}$ and ${d^{\focalEv_2}}$ under $\sim'$ implies the causal concurrency (respectively orderedness) of events $e$  and $f$ under $\sim$, where
the equivalence $\sim'$ is the same as $\sim$, modulo renaming letters $c^{\focalEv_1}$ and $d^{\focalEv_2}$
to $c$ and $d$ respectively.


\subsection{Computational Hardness in Checking Concurrency}

For the case of trace equivalence, more specifically under $\equiv_\maz$, 
causal concurrency can be determined in a constant space
streaming fashion. This result is somewhat known amongst the experts in the field, but it does not appear in the following specific way anywhere in the literature. 

\begin{proposition}[Causal concurrency for trace equivalence]
Given an input $w \in \alphabet^*$ and symbols $c, d \in \alphabet$,
the causal concurrency between $c$ and $d$ under
$\mazeq{}$ can be determined
using a single pass constant space
streaming algorithm.
\end{proposition}

In \secref{gmonitor}, we give a constructive proof to this lemma by presenting a constant space monitoring algorithm. Next, we show that the same is not achievable for the case of reads-from equivalence --- we show that any algorithm
for checking causal ordering (for the semantic notion of equivalence $\rfnovaleq$)
must use linear space in a streaming setting.
\begin{theorem}[Linear space hardness]
\thmlabel{check-order-semantic-linear-space-lower-bound}
Any streaming algorithm that checks the causal concurrency of a pair of symbols under $\rfnovaleq$
in a streaming fashion uses linear space, even for program runs containing
just $2$ threads and $6$ variables.
\end{theorem}

The key idea behind the proof of \thmref{check-order-semantic-linear-space-lower-bound}
is to exploit and generalize the intricate pattern in the runs in \figref{intro}.
The idea is that determining if two specific events are causally concurret
in such a pattern, relies on successively inferring the concurrency status
of linearly many pairs of events, placed arbitrarily far away in the past and/or in the future,
which is impossible for a one pass streaming algorithm
that only uses sub-linear space.
The formal proof
is presented in~\appref{hardness}.

In fact, we use similar ideas as in the proof of the above statement
to also establish a lower bound on the time-space tradeoff for
the problem of determining causal concurrency, even when we do not bound the number of passes;
formal proof is deferred to \appref{hardness}:
\begin{theorem}[Quadratic time space lower bound]
\thmlabel{check-order-semantic-tradeoff}
For any algorithm (streaming or not) that checks if a pair of symbols are causally concurrent under $\rfnovaleq$ 
in $S(n)$ space and $T(n)$ time
on an input run of length $n$, we must have $S(n) \cdot T(n) \in \Omega(n^2)$.
\end{theorem}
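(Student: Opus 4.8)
The plan is to combine a crossing-sequence simulation with an averaging argument over cut positions, reducing from a set-disjointness-style communication problem in the spirit of \thmref{check-order-semantic-linear-space-lower-bound}. I would model the algorithm as an offline machine with a two-way read-only input tape holding the run $w$ (of length $n$) and a read/write work tape using $S = S(n)$ cells, running for $T = T(n)$ steps; the single-pass streaming setting is the special case of a one-way head, so proving the bound here covers "streaming or not". For a cut position $\ell \in \range{n-1}$, split $w$ into the prefix $w[1..\ell]$ (given to Alice) and the suffix $w[\ell+1..n]$ (given to Bob), and let $c_\ell$ denote the number of times the input head crosses this cut during the computation. The first step is the standard simulation lemma: Alice and Bob jointly simulate the machine by exchanging the $O(S)$-bit work-tape configuration (with head and control state) each time the input head crosses the cut, yielding a two-party protocol of cost $O(c_\ell \cdot S)$ that decides $\checkConcSymb{\rfnovaleq}{w}{c}{d}$.

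The second ingredient is a counting bound that exploits the one-cell-per-step head movement: the head crosses at most one cut per step, so $\sum_{\ell=1}^{n-1} c_\ell \le T$ on every input. Hence if a constant fraction of cut positions each induce a communication problem of average-case complexity $\Omega(n)$ under a fixed hard distribution $\mu$, then each such good cut forces $\mathbb{E}_\mu[c_\ell] \cdot S = \Omega(n)$, i.e. $\mathbb{E}_\mu[c_\ell] = \Omega(n/S)$. Summing over the $\Theta(n)$ good cuts and using linearity gives $\mathbb{E}_\mu[T] \ge \sum_\ell \mathbb{E}_\mu[c_\ell] = \Omega(n^2/S)$, and since worst-case time dominates the average, $S \cdot T = \Omega(n^2)$.

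The remaining and central task is to build instances that are simultaneously hard for $\Theta(n)$ cut positions. I would take $m = \Theta(n)$ independent conflict gadgets, each a constant-size fragment of the reduction behind \thmref{check-order-semantic-linear-space-lower-bound} encoding one coordinate of a set-disjointness instance, and lay them out so that the Alice-halves of all gadgets precede the Bob-halves (schematically $a_1 \cdots a_m\, b_1 \cdots b_m$), with the predicate $\checkConcSymb{\rfnovaleq}{w}{c}{d}$ evaluating to the disjointness complement over the coordinates. A cut at any $\ell$ in the central band $[m/4, 3m/4]$ then separates $\Theta(m)$ gadgets, placing their Alice-parts in the prefix and their Bob-parts in the suffix, so that the induced problem contains a set-disjointness instance of size $\Theta(n)$; a Razborov-type distributional lower bound then gives $\Omega(n)$ average communication for every such $\ell$, producing the required $\Theta(n)$ good cuts.

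The hard part will be making the embedding genuinely cut-robust. For a cut at $\ell$, Bob can already resolve every gadget whose two halves both land in his suffix, so the residual problem is disjointness only on the split coordinates; I must design a single distribution $\mu$ that conditions the non-split coordinates to contribute nothing yet keeps the residual disjointness at full $\Omega(n)$ complexity \emph{uniformly} for all $\ell$ in the central band. Keeping the $\Theta(n)$ gadgets truly independent while ensuring the concatenation remains a single valid reads-from run of length $O(n)$, and transferring the distributional (average-case) communication bound through the exact deterministic crossing-sequence simulation, is where the technical care concentrates.
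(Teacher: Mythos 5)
Your overall framework---modeling the algorithm as a two-way machine, deriving for each cut $\ell$ an exact deterministic two-party protocol of cost $O(c_\ell\cdot S)$, observing $\sum_\ell c_\ell \le T$, and averaging over a hard distribution---is sound, and it is in fact the classical Cobham-style argument that underlies the result the paper invokes. The genuine gap is the centerpiece you leave unbuilt: the cut-robust family of hard runs. You propose to take ``$\Theta(n)$ independent conflict gadgets, each a constant-size fragment of the reduction behind \thmref{check-order-semantic-linear-space-lower-bound}, encoding one coordinate of a set-disjointness instance,'' but the fragments $\pi_i,\eta_i$ of that reduction are not independent coordinate gadgets: they are deliberately \emph{chained} through the variable $c$ (and $y_0,y_1$) so that the ordering constraint propagates exactly along a run of matching bits, i.e.\ the construction computes a \emph{conjunction} of bit-equalities (string equality), not an OR. Encoding disjointness instead requires $\Theta(n)$ coordinate gadgets each independently producing a potential witness pair, while the predicate $\checkConcSymb{\rfnovaleq}{w}{c}{d}$ existentially quantifies over \emph{all} pairs of $c$- and $d$-labeled events---so you must also rule out spurious cross-gadget witnesses---and you must do this with $2$ threads and $O(1)$ variables, where program order totally orders all $c$-events and all reads-from chains pass through shared variables. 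None of this is constructed, and it is not a routine adaptation; on top of that, pushing a Razborov-type distributional bound through the conditioning on non-split coordinates adds further unresolved work.

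The irony is that the hard construction you need already exists and is the one the paper uses: its proof of \thmref{check-order-semantic-tradeoff} is the \emph{same} constant-space reduction to $\langeq_n$ used for \thmref{check-order-semantic-linear-space-lower-bound}, combined with the classical quadratic time-space tradeoff for the equality language. Concretely, in the image runs $\Transducer_n(\vect{a}\#\vect{b})$ the $\vect{a}$-fragments all precede the $\vect{b}$-fragments, so for any cut in the middle of the $\pi$-region the residual communication problem is equality of two $\Theta(n)$-bit strings; an exact deterministic protocol must send distinct diagonal inputs to distinct leaves, so under the uniform distribution on matched pairs the expected cost is $\Omega(n)$ by Kraft's inequality---no disjointness, no Razborov. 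Plugging this into your own averaging framework immediately yields $S\cdot T\in\Omega(n^2)$. One point you should still make explicit (and which your proposal glosses over by working with a ``streaming reduction''): since the lower-bound machine may be two-way, you either argue directly on the image runs, or note that the reduction is position-local (each output symbol is determined by one input symbol and its position), so composing it with a two-way algorithm incurs no time blowup.
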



The linear lower bound in \thmref{check-order-semantic-linear-space-lower-bound} establishes non-regularity of any monitor for causal concurrency of symbols. We can further refine this result and show that the problem of determining causal concurrency of symbols is also not context-free. 
We establish the following result by invoking a pumping lemma argument for context free languages
on sets of runs that observe intricate patterns akin to the one in~\figref{intro}.

\begin{theorem}[Non Context-freeness]
\thmlabel{check-order-non-context-free}
Let $c, d \in \alphabet$. There exists no nondeterministic pushdown automaton that accepts exactly the
runs $\tr$ such that both $c$  and $d$ appear in $\tr$ and are causally concurrent in it under $\rfnovaleq$.
\end{theorem}


Finally, conditioned on 
the widely believed Strong Exponential Time Hypothesis (SETH)~\cite{ImpagliazzoP01}, we establish a quadratic time lower bound for Problems \ref{prob:conc-events} and \ref{prob:cc-symb}.

\begin{theorem}
\thmlabel{quadratic-time-hardness}
Assume SETH. The problem of determining the causal concurrency of a pair of events in $\tr$ under $\rfnovaleq$ cannot be solved in time $O({|\tr|}^{2-\epsilon})$ for all $\epsilon>0$, even when $\tr$ has $2$ threads.
\end{theorem}

The formal proof of \thmref{quadratic-time-hardness} is presented in \appref{ov-hardness}. 
It is established via a fine-grained reduction from the Orthogonal Vector Conjecture,
which also admits a quadratic lower bound under SETH~\cite{Williams05}.
The input to the Orthogonal Vectors (OV) problem is a pair of sequences $A, B \subseteq \set{0,1}^d$ of $d$-dimensional vectors, each of length $n$ (i.e., $|A| = |B| = n$), and the output is YES iff
there are two vectors $a \in A, b \in B$ such that their inner product is $0$ (i.e., $a \cdot b = 0$). 
The OV hypothesis states that for every $\epsilon > 0$, 
there is no algorithm that solves the OV problem 
(with input instances of length $n$ and dimension $d$, with $d \in \Omega(\log n)$) in $O(n^{2-\epsilon}\poly(d))$ time.

\subsection{Upper Bounds for Checking Concurrency}

In this section, we study the precise complexity of reasoning about concurrency under $\rfnovaleq$
and establish time and space upper bounds.
First, we observe that both Problems \ref{prob:conc-events} and \ref{prob:cc-symb} can be solved using an algorithm whose running time is a polynomial expression whose degree varies with the number of threads:

\begin{theorem}[Polynomial time algorithm]
\thmlabel{check-order-rf-time-upper-bound}
Let $w \in \alphabet$ be a run with $|w| = n$ and let 
$e$ and $f$ be two events in $w$.
The problem of determining if $e$ and $f$ are causally concurrent under $\rfnovaleq$
can be solved in time $O(|\threads|\cdot n^{|\threads|+1})$.
Similarly, given $c, d \in \alphabet$, the 
problem of determining if $c$ and $d$ are causally concurrent under $\rfnovaleq$ can be solved in time
$O(|\threads|\cdot n^{|\threads|+2})$.
\end{theorem}

The proof (see \appref{poly-algo}) is based on constructing a 
`frontier graph'~\cite{Gibbons1994}, whose vertices represent subsets of $\events{w}$
which are downward closed with respect to $\po{w}$ and $\rdf{w}$,
while edges represent valid extensions obtained by adding single events to the subsets.

The problem can also be solved using a linearly bounded Turing machine,
which also implies that
the language of runs that exhibit concurrency of two given events
is a context sensitive language.

\begin{theorem}[Linear Space Upper Bound]
\thmlabel{check-order-rf-space-upper-bound}
Let $w \in \alphabet$ be a run with $|w| = n$ and let 
$e$ and $f$ be two events in $w$.
The problem of determining if $e$ and $f$ are causally concurrent under $\rfnovaleq$ 
can be solved in deterministic space $O(n)$.
Similarly, given $c, d \in \alphabet$, the 
problem of determining if $c$ and $d$ are causally concurrent under $\rfnovaleq$
can be solved in time deterministic space
$O(n)$.
\end{theorem}

We present the formal proof of \thmref{check-order-rf-space-upper-bound} in
 \appref{linear-space-upper-bound}. It operates by successively generating permutations
 of the given run and checking if they are equivalent to the input run
 and also invert the order of the given events, all in deterministic linear space.


\section{Grain Commutativity}\label{sec:geq}
\seclabel{syntactic-weakenings}

In this section, we present a new stronger and more syntactic definition of equivalence for concurrent runs that can overcome the hardness results previously discussed. We build on the theory of traces, where equivalence is defined based on a commutativity relation on the alphabet of program actions. The new equivalence relation is defined based on an extended  commutativity relation that additionally allows commutating some pairs of {\em words} over the same underlying alphabet, and strictly weakens trace equivalence. First, let us briefly discuss that unchecked generalization in this direction can very quickly result in hardness. 

\begin{theorem}\label{thm:wc-hardness}
Let $\alphabet = \{a,b,c\}$ and let $\indrel = \{(a, bc), (bc,a), (b,c), (c,b)\}$. 
There exists no constant space monitor that given an input word $w \in \alphabet^*$ can decide whether the first occurrence of $a$ and the last occurrence of $c$ in $w$ are ordered. 
\end{theorem}

The idea of the proof is to focus on words of the form $ab^nc^m$, and a causal concurrency query that 
involves the $a$ and the last occurrence of $c$. It can be argued that the two are causally concurrent if and only if $n \ge m$. A detailed proof is presented in \appref{geq}.

Note that here, $\indrel$ is a generalization of commutativity relation in trace equivalence ($\indrel_\shasha$) by what seems to be the smallest increment to a classic commutativity relation over letters: the commutativity of one word of length $2$ (the shortest possible word that is not a letter) against one single letter. Yet, we immediately loose the constant-space checkability of concurrency/orderedness of pairs of events. Therefore, to maintain the property of constant-space checkability, one has to be careful with the generalization of $\indrel_\shasha$.

\subsection{Partially-Commutative Grain Monoids}

We define an equivalence relation based on commutativity of words. 

\begin{definition}[Grains]\label{def:grain} 
A {\em grain} is simply a non-empty word in $\alphabet^*$.
\end{definition}

\def\letters{\mathit{letters}}

For a grain $g$, let $\letters(g)$ be the set of letters (from $\alphabet$) that appear in $g$. We can define a partially commutative monoid in the same style as {\em trace monoids} \cite{Mazurkiewicz87} as follows:

\begin{definition}[Grain  Monoids]\label{def:grain-monoid} 
Given a trace monoid $(\alphabet, \indrel)$, a set of grains $G$ induces the (partially commutative) {\em grain}  monoid $(\alphabet_G \cup \alphabet, \widehat{\indrel}_G)$ where $\alphabet_G = \{a_g|\ g \in G\} $ and 
\begin{align*}
\widehat{\indrel}_G &= \indrel \cup \indrel_G \cup \{(a, a_g),(a_g,a)|\ a \in \alphabet \land a_g \in \alphabet_G \land \{a\} \times \letters(g) \subseteq \indrel \}\\
    &  \cup \{(a_g, a_{g'}), (a_{g'}, a_g)|\ a_g, a_{g'} \in \alphabet_G \land \letters(g) \times \letters(g') \subseteq \indrel \}
\end{align*}
where $\indrel_G \subseteq \alphabet_G \times \alphabet_G$, the {\em grain commutativity} relation, is an arbitrary symmetric independence relation defined on the grains.  
\end{definition}

If $G = \alphabet$ and $\indrel_G = \emptyset$, then the induced grain monoid coincides with the trace monoid $(\alphabet,\indrel)$. Otherwise, it can be viewed as a classic trace monoid on a new alphabet $\alphabet_G \cup \alphabet$. For this reason, it induces an equivalence relation $\equiv_{\,\widehat{\indrel}_G}$ on the set of words in $(\alphabet_G \cup \alphabet)^*$. 


%
%

Recall that $\indrel_\shasha$ is defined in the context of the alphabet $\alphabet$ to be {\em sound}, 
precisely in the sense that the induced $\meq$ preserves $\rdf{}$-equivalence. 
We need to determine when $\equiv_G$ is considered sound.

\begin{definition}[Strict Soundness]\label{def:sscom}
We call a grain commutativity relation $\indrel_G$ {\em strictly sound} if, $(g,g') \in \indrel_G$ iff for all $\alpha, \beta \in \alphabet^*$, we have $\alpha g g' \beta \rfnovaleq \alpha g' g \beta$. 
\end{definition}

It is straightforward to see that if we let $G = \alphabet$, then the independence relation that defines trace equivalence is strictly sound according to this definition. The less straightforward fact is that trace equivalence defines the largest such sound relation. To be precise, 

\begin{proposition}
If $g$ and $g'$ strictly soundly commute, then $gg' \meq g'g$.
\end{proposition} 

On the one hand, strict soundness seems reasonable because it decouples commutativity from its {\em context}. On the other hand, it makes the grains seem pointless in the sense that they do not offer any additional commutativity compared to classical trace monoids. The motivation behind forcing the soundness to be independent of the {\em context} (i.e. the choice of $\alpha$ and $\beta$ in Definition~\ref{def:sscom}) is that as one swaps two commutating grains, the {\em context} may change, and it would complicate reasoning substantially if the commutativity status of two other grains were to change as a result. In \cite{gtrace}, a generalized version of trace monoids are formulated that account for context. These have a sophisticated set of coherence and consistency conditions and have not been studied in any algorithmic contexts beyond being defined.

Fundamentally, we would like a commutativity relation that is sound in the sense that it maintains $\rdf{}$-equivalence and defines an equivalence class in which the commutativity relation does not change from one member to another to keep things simple for formulating algorithms.

\parpic[r]{\includegraphics[scale=0.9]{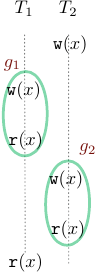}}
\begin{example}\label{ex:dreads}
Consider the program run on the right, with 2 threads. Two grains $g_1$ and $g_2$ have been marked. In isolation (as in if everything else from this run is ignored), these two grains soundly commute. But, the run illustrates that in the presence of the last $\rd(x)$, the two grains do not soundly commute, and therefore they do not strictly soundly commute. Observe that the left context is irrelevant to the commutativity status of these two grains. Only the right context can violate soundness. If unsoundness is the result of the program order being broken, one would observe it by looking only at the two grains.  Therefore, any violations to soundness related to the context have to be related to the reads-from relation. Specifically, a write event $\wt$ belongs to a grain, but there is at least one read event $\rd$, which {\em reads} from it (i.e., $\wt = \rdf{}(\rd)$), that does not belong to the same grain. By formally disallowing any such bad right contexts for a pair of grains, one can define a more permissive commutativity relation.
\qed
\end{example}

Example \ref{ex:dreads} illustrates why we put the fault in the definition of strict soundness mainly with the {\em right} context. To rule these scenarios out, one can restrict the right context from all possible contexts to those that cannot adversarially affect the commutativity status of $g$ and $g'$.

\begin{definition}[Sound Grain Commutativity]\label{def:scom}
We call a grain commutativity relation $\indrel_G$ {\em sound} if for all $g,g' \in G$, for all $x \in \VariableOf{g} \cap \VariableOf{g'}$ where at least one $\wt(x)$ appears in $gg'$,  and for all $\alpha,\beta \in \alphabet^*$ such that  $\beta|_x \in L(\wt(x)\alphabet^*)$, we have:
$(g,g') \in \indrel_G  \iff \alpha g g' \beta \rfnovaleq \alpha g' g \beta$
\end{definition}
Definition \ref{def:scom} strictly weakens Definition \ref{def:sscom} by limiting the (right) contexts in which commuting the actions must be sound. In other words, every strictly sound grain commutativity relation is sound, but not all sound grain commutativity relations are strictly sound. In particular, if $g$ and $g'$ do not strictly soundly commute, it is fairly straightforward to construct a right context $\beta$ in which they do not soundly commute.


%

\begin{remark}
Given a run $w$ and another run $u$ that can be acquired from $w$ through a sequence of swaps defined by a {\em strictly} sound commutativity relation $\widehat{\indrel}_G$ (Definition \ref{def:sscom}), we have $w \rfnovaleq v$. This is not true for a sound commutativity relation (Definition \ref{def:scom}).
\end{remark}

\parpic[r]{\includegraphics[scale=0.85]{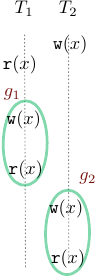}}
\begin{example}\label{ex:sound}
Recall the run illustrated in Example \ref{ex:dreads}. According to Definition \ref{def:scom} and as discussed in Example \ref{ex:dreads}, the two grains $g_1$ and $g_2$ soundly commute. But, clearly, the equivalence class of the run from Example \ref{ex:dreads} up to this commutativity relation is not sound. In contrast, the same two grains appear in the run illustrated on the right and the equivalence runs inferred by their commutativity are sound. The difference is that the run on the right does not violate the condition about right contexts (from Definition \ref{def:scom}) but the run from Example \ref{ex:dreads} does. 
\qed
\end{example}

It feels like we took one step forward by weakening the definition of soundness and then one step backward since it is not guaranteed to provide soundness in all contexts. 
There are, however, two key observations that make this definition of soundness a good fit in the context of our main goal, that is checking the status of concurrency of two given events.
First, we are solely interested in the set of words equivalent to a single reference program run.
Second, we may not have control over the choice of right contexts, but we do have control over the choice of grains and the commutativity relation $\indrel_G$. We can limit these choices based on the input run so that the equivalence class of the input run induced by the corresponding grain monoid is indeed sound.
Therefore, we next focus on the equivalence class $[w]_G$ of a given word $w$ and the choices for $G$ (the set of grains) and the grain commutativity relation $\indrel_G$ that make $[w]_G$ sound. 


\subsection{Sound Grain Equivalence}

First, observe that the same exact grain (which is a word) can appear several times as a subword in a particular word. In our formalism so far, we had no need of distinguishing the multiple instances, but since their right contexts may differ, we must do so now to attach different commutativity attributes to them.

\begin{definition}
A {\em valid} set of grains for a run $\tr \in \alphabet^*$ is set of indexed words of the form $g@i$ where $g$ is a contiguous subword of $\tr$ that appears at position $i$ and no two grains overlap in $\tr$. 
\end{definition}

Consider a valid set of (indexed) grains $\grains$ in a program run $\tr$. 
Since the indexed set $\grains$ may only be a valid set for the run $\tr$ in consideration,
we cannot cleanly define the equivalence induced by $\grains$ on the set of all runs $\alphabet^*$.
However, we can still precisely define the class of runs that can be inferred by successively
swapping adjacent grains from $\grains$.
For ease of presentation, we will abuse the notation $\eqcl{\tr}{\grains}$ to denote this
set, and will take the liberty to call it the \emph{equivalence class of $\tr$}, when $\grains$
is known from the context.

We can formalize $\eqcl{\tr}{\grains}$ as follows.
Define $h_G: \alphabet^* \to (\alphabet_G \cup \alphabet)^*$ as the homomorphism that maps each indexed grain $g@i$ to the corresponding letter $a_g \in \alphabet_G$ and each letter in $\alphabet$, that does not belong to a grain, to itself. Let $h_G^{-1}$ be the inverse homomorphism that replaces letters in $\alphabet_G$ with the corresponding grain words. Then,
\[u \in [w]_G \iff \exists u' \in (\alphabet_G \cup \alphabet)^*:\ \left( u = h^{-1}_G(u') \wedge u' \equiv_{\widehat{\indrel}_G} h_G(w) \right)\]
In short, the set of words that are considered equivalent to $w$ are determined by those that are equivalent to its corresponding word in the grain monoid, for the specific choice of valid grains $G$. 
In prose, we say $u$ is equivalent to $w$ when $u \in [w]_G$ for some valid choice of grains $G$.


We lift the commutativity relation $\indrel_G$ to relate indexed words of the form $g@i$ as well. This will enable us to say that $(g@i, g'@j) \in \indrel_G$ while $(g@k, g'@j) \not \in \indrel_G$.  Note that corresponding grain monoid is defined as before, each new grain $g@i$ is mapped to a designated letter $a_{g@i}$. 
\begin{definition}\label{def:se}
For a word $w$ and a set of valid grains $G$ in $w$,  $\indrel_G$ is sound if $[w]_G$ is sound (i.e. $[w]_G \subseteq \rfcl{\tr}$).  
\end{definition}

Next we give necessary and sufficient conditions for soundness of $\indrel_G$. For an event $e = \wt(x)$, let $\mathit{reads}(e)$ denote all events $e'$ of the form $\rd(x)$ that read from $e$. 
To lighten the notation whenever possible, we may refer to a grain only by its word $g$ whenever the position is not of importance, or implied from the context. We only specifically mention the position $g@i$ when it really matters.
Define $\OpOf{g,x}$, for a grain $g$ to be the set of operations ($\{\rd,\wt\}$) of variable $x$ that appear in $g$. 

\begin{theorem}\label{thm:sgrain}
In the context of a word $w$ and a {\em valid} set of grains $G$,  a commutativity relation $\indrel_G$ is sound iff for all $(g,g') \in \indrel_G$, $gg' \rfnovaleq g'g$ and for all $x \in \VariableOf{g} \cap \VariableOf{g'}$ the following holds:
\[  
\big(\OpOf{g, x} \cup \OpOf{g', x} = \{\rd\}\big) 
\lor 
\big( \forall e, f \cdot (e = \wt(x), f = \rd(x), f = \rdf{\tr}(e)) \implies (e \in g {\iff} f \in g )\big)
\]
\end{theorem}
It is clear that if $gg' \rfnovaleq g'g$ does not hold, the resulting commutativity relation is unsound. Example \ref{ex:dreads} captures the idea of why the violation of the additional conditions also leads to unsoundness. Intuitively, these conditions express that if the grains share a variable and at least one writes to that shared variable, then once a write action is included in one grain then all reads that read from it also must be included in that grain. The proof of the other direction is a tedious case analysis and given in Appendix \ref{app:geq}

Recall the program runs in Examples \ref{ex:dreads} and \ref{ex:sound}. Observe that even though the pairs of grains are identical, if we assume the commutativity of the pair of grains, then the choice in Example \ref{ex:sound} satisfies the conditions of the above theorem, and the one in Example \ref{ex:dreads} does not; in particular, they violate the part that demands every read that reads from the same $\wt(x)$ operation must belong to the grain.

\begin{figure}[t]
\begin{center}
\includegraphics[scale=0.9]{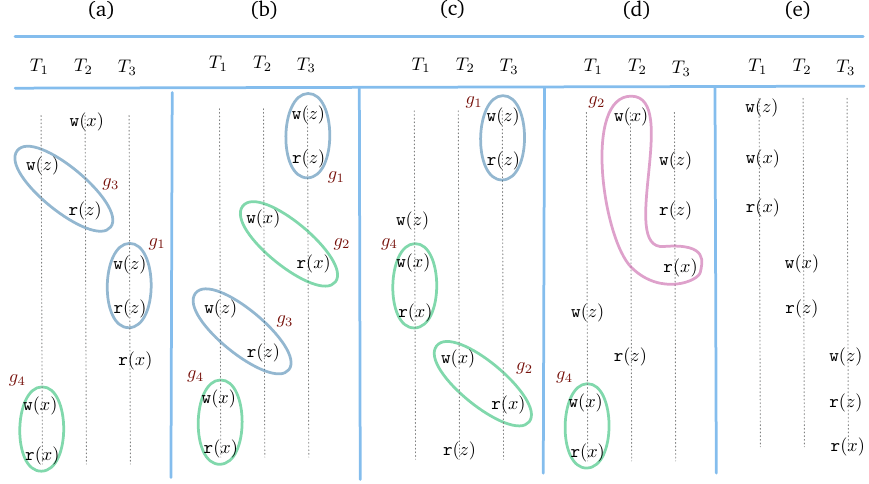}
\caption{Examples \ref{ex:converse}, \ref{ex:scattered}, and \ref{ex:limit}.}
\label{fig:n}
\end{center}
\end{figure}

\subsection{The Expressive Power of $[w]_G$}

We use an extended example to highlight how $[w]_G$ soundly enlarges the trace equivalence class of $w$,  induced by the trace equivalence relation $\meq$.  

\begin{example}\label{ex:converse}
Consider the program runs illustrated in \figref{n}. They are all $\rdf{}$-equivalent. We observe that different choices of grains  witness the equivalence of different pairs of the run from the figure. Independent of which grain is present in which subfigure, the only sound commutativity between grains, in addition to classic trace theory commutativity, is $\indrel_G = \{(g_1,g_3),(g_3,g_1),(g_2,g_4),(g_4,g_2)\}$. 

First, focus on \figref{n}(a), and observe that $g_1$ and $g_3$ soundly commute (in the sense of Theorem \ref{thm:sgrain}). Yet $g_4$ does not  commute with anything that it would not otherwise under the classic trace monoid through the commutativity of its individual events. For example, taking the single $\wt(x)$ of thread $T_2$ or the single $\rd(x)$ of thread $T_3$ as grains, one would violate the conditions of Theorem \ref{thm:sgrain}) if one were to declare either event commutative against $g_4$. With the grains marked in \figref{n}(a), the run illustrated in \figref{n}(a) is equivalent to the one in \figref{n}(b) ---  starting from (a), we can swap $g_1$ and $g_3$ first, and then $g_1$ against the $\wt(x)$ of thread $T_2$ and $g_3$ against the $\rd(x)$ of thread $T_3$.
 
Now consider the set of grains in \figref{n}(b). It is a superset of grains marked in \figref{n}(a), and yet, in this configuration we have less freedom of movement.  
$\indrel_G$ is still sound, but since $(g_1,g_2)$, $(g_2,g_3)$, and $(g_3,g_4)$ do not commute, this run effectively belongs to an equivalence class of size 1. Specifically, we cannot conclude that it is equivalent to the run illustrated in \figref{n}(a). If we remove grain $g_2$, then the equivalence class gets larger, and includes the run illustrated in \figref{n}(a). Observe that, therefore: 
\begin{quote}
\em
Having more grains {\em does not necessarily} imply having a larger equivalence classes.
\end{quote}

Similarly, if we take the set of grains in \figref{n}(c), then the run in \figref{n}(c) is equivalent to the one in \figref{n}(b). But, note that there is no possible choice of grains in \figref{n}(c) that would make it equivalent to the run in \figref{n}(a), and vice versa. 
The two runs are clearly $\rdf{}$-equivalent. They are grain-equivalent to the run in \figref{n}(b). Yet, for no choice of grains they can be made grain-equivalent to each other.

\begin{quote}
\em
If $v \in [u]_G$ and $w \in [v]_{G'}$, there may not exist a sound $G''$ such that $w \in [u]_{G''}$. 
\end{quote}


\end{example}

Example \ref{ex:converse} illustrates that depending on the input run $w$, or the choice of events for a query of concurrency, a different choice of grains $G$ may be suitable to define the appropriate $[w]_G$. 

\begin{definition}[Grain Concurrency]\label{def:gc}
Consider a program run $w$ and two events $e$ and $e'$ that appear in $w$. We call the pair of events $e$ and $e'$ to be {\em grain concurrent} iff there exists a sound set of grains $G$ and a commutativity relation $\indrel_G$ in the context of $w$ such that, there exists $u \in [w]_G$ in which $e$ and $e'$ are reordered with respect to their order of appearance in $w$. 
\end{definition} 

Note that for any given program run $w$ and any choice of valid grains $G$ and a sound commutativity relation $\indrel_G$,  $[w]_G$ is always, by definition, a superset of the trace equivalence class of $w$. Moreover, if we let $G = \Sigma$, then $[w]_G$ coincides with the trace equivalence class of $w$, since the $\indrel_\maz$ is by definition sound. As such, any two events that are concurrent according to trace equivalent are also grain concurrent. 

In Section \ref{sec:gmonitor}, we formally argue that grain concurrency can be checked in constant space by giving a construction for a monitor that is strictly more expressive than a constant-space monitor based on $\meq$. For example, our monitor would declare that both the pair of $\wt(x)$ and the pair of $\wt(z)$ operations in the run illustrated in \figref{n}(b) can be soundly reordered, while they are strictly ordered according to $\meq$. 

\renewcommand{\summaries}{E}

\newcommand{\operations}{\mathsf{C}}

\section{Grain Concurrency Monitor}
\seclabel{gmonitor}

{\em Grain monoids} are closely related to trace monoids. Therefore, we begin by defining a monitor that in constant space checks whether two events in an input run are concurrent according to $\meq$, and then present the grain concurrency monitor as an extension of this monitor. 

To have a simple setup, we augment our alphabet $\Sigma$ with two new symbols $\diamond_1$ and $\diamond_2$ that are assumed to appear precisely once each in any input word, marking the two events that are meant to be checked for concurrency; these would be the events that immediately succeed $\diamond_1$ and $\diamond_2$. The regular expression $\Sigma^* \diamond_1 \Sigma^+ \diamond_2 \Sigma^+$  captures that the the two $\diamond$'s are properly placed in an input run. 
Therefore, in the description of our main monitor, we assume that the input run is already well-formed in this sense. 

The high level idea behind the monitor is simple. 
The monitor idles until it sees the first marker $\diamond_1$ and the event $e_1$ that it marks. Afterwards, it maintains a summary of the set of operations, reads or writes to specific variables by specific threads, seen so far that are {\em ordered} wrt to $e_1$. When the monitor comes across $\diamond_2$ and therefore identifies the second event $e_2$, it can use the summary to determine if $e_1$ and $e_2$ are causally concurrent or ordered. 

The key to constant-space implementability is that the monitor does not have to remember all individual events, but rather what variables and threads are involved. This is based on the observation that to determine if two events commute, it suffices to know what variables are being accessed, what the nature of the access is, and to what threads the two events belong. The summary can be maintained in the most compact manner if the list of operations $\opfont{op}(x)$ and the list of thread identifies are kept separately. But, we present the less compact version that maintains the summary as a set of events, since it is easier to generalize this version of the monitor to {\em grains}.

Formally, the monitor's state is pair $\stuple{d, \operations}$ where  $\operations \subseteq \alphabet$ 
is a set of labels. 
The first element of the pair $d$ is used to track whether the monitor has seen events $e_1$ and $e_2$ yet. It encodes the six distinct stages: $-$: before the first diamond, $\diamond_1$: right after the first diamond, $\diamond -$: after $e_1$ has been recorded, $\diamond_1 - \diamond_2$: right after the second diamond is seen, and $\mathit{true}$/$\mathit{false}$: depending on the monitors decision to accept/reject after reading $e_2$. \figref{mcm} lists the transitions of the monitor and the functions used. Since the result is folklore, we forgo giving a proof for the correctness of this monitor.

\begin{figure}[t]
\begin{center}
\framebox{
\begin{minipage}{0.4\textwidth}
\begin{tabular}{lcl}
State & Event & State Update \\ \hline
$\langle-, \operations \rangle$ & $e \in \alphabet$ & $\langle -, \operations \rangle$ \\
$\langle -, \operations \rangle$ & $\diamond_1$ & $\langle \diamond_1, \operations \rangle$ \\
$\langle \diamond_1, \operations \rangle$ & $e \in \alphabet$ & $\langle \diamond_1 -, \{e\} \rangle$ \\
$\langle \diamond_1 - , \operations \rangle$ & $e \in \alphabet$ & $\langle \diamond_1 - , \operations \odot e \rangle$ \\
$\langle\diamond_1 -, \operations \rangle$ & $\diamond_2$ & $\langle \diamond_1 - \diamond_2, \operations \rangle$ \\
$\langle \diamond_1 - \diamond_2, \operations \rangle$ & $e \in \Sigma$ & $\langle \mathit{Ord}(\operations,e), \operations \rangle$ \\
$\langle \mathit{false}, \operations \rangle$ & $e \in \Sigma$ &  $\langle \mathit{false}, \operations\rangle$\\
\end{tabular}
\end{minipage}
\begin{minipage}{0.4\textwidth}
\begin{align*}
\operations \odot e &= \left\{ \begin{array}{ll}
                                     \operations \cup \{e\}  & \text{if }  \exists e' \in \operations:\ (e,e') \in \deprel_\shasha \\
                                     \operations & \text{owise} 
                                    \end{array}\right.
\end{align*}
\begin{align*}
\mathit{Ord}(\operations, e) &\iff  \exists e' \in \operations:\ (e,e') \in \deprel_\shasha 
\end{align*}
\end{minipage}}\vspace{-5pt}
\caption{Trace Concurrency Monitor: The monitor starts in state $\langle -, \emptyset\rangle$ and accepts if in a state $\langle \mathit{false}, \operations\rangle$ (for any $\operations$) once the input is read. 
The operation $\odot$ updates $\operations$ based on a new event. }
\label{fig:mcm}\vspace{-10pt}
\end{center}
\end{figure}

\subsection{A Monitor for a Fixed Set of Grains $G$}
\seclabel{fixed-grains-monitor}
We introduce our monitor based on grains in two stages, for the simplicity of presentation. First, we assume a set of grains is pre-decided and pre-marked in an input word, and present the core idea behind monitoring causal concurrency in this setup.  Then, in Section \ref{sec:gm}, we build the final grain concurrency monitor as a generalization of this one. 

Assume that $\Sigma$ is further extended with a pair of symbols $\rhd$ and $\lhd$, which are used as delimiters to mark grain boundaries. Any letter that appears outside the range of these delimiters is treated as a standalone event. For example, the program run form \figref{n}(a) with the marked grains becomes:
\[\evt{\wt(x)}{T_2} \rhd  \evt{\wt(z)}{T_1} \evt{\rd(z)}{T_2} \lhd \rhd \evt{\wt(z)}{T_3}  \evt{\rd(z)}{T_3} \lhd \evt{\rd(x)}{T_3} \rhd \evt{\wt(x)}{T_1} \evt{\rd(x)}{T_1} \lhd\]

The two events of interest are marked with $\diamond$'s, as before. Except that if either event belongs to a grain, then the diamond must mark the entire grain. For example, if we want to determine whether the two $\wt(x)$ events are ordered in the program run above, the diamonds would mark the first one as before, and the second one behind the left grain delimiter like this:
\[
\diamond_1 \evt{\wt(x)}{T_2} \rhd  \evt{\wt(z)}{T_1} \evt{\rd(z)}{T_2} \lhd \rhd \evt{\wt(z)}{T_3} \evt{\rd(z)}{T_3} \lhd \evt{\rd(x)}{T_3} \diamond_2 \rhd \evt{\wt(x)}{T_1} \evt{\rd(x)}{T_1} \lhd \]
Note that the events in a grain always move together. Therefore, one cannot have a verdict that an event $e$ (e.g. the first $\wt(x)$ above) is concurrent with some arbitrary event $f$ of a grain (e.g. the second $\wt(x)$ above), while it is ordered wrt another event $f'$ of the same grain (e.g. the second $\rd(x)$ above). The following (revised) regular expression captures all the input runs in which grains and $\diamond$'s are marked properly, and therefore, we do not make it part of the design of the monitor:
\begin{align}
 \big((\rhd \alphabet^+ \lhd) + \alphabet\big)^* \diamond_1 \big((\rhd \alphabet^+ \lhd) + \alphabet\big)^+ \diamond_2 \big((\rhd \alphabet^+ \lhd) + \alphabet\big)^+ \tag{WF}  \label{eq:wf}
\end{align}

To generalize the monitor in \figref{mcm} to work with grains, we face two sources of complications. First, grains are arbitrary words in $\Sigma^+$, and even though $\Sigma$ is finite, $\Sigma^+$ is infinite in size. Second, there are (potentially) infinitely many sound commutativity relations that can be inferred over the unbounded set of potential grains.  

A simple observation helps overcome the first problem. Fundamentally, we are interested in grains because we are interested in the commutativity properties of these grains. \thmref{sgrain} captures what information is relevant to make a sound decision about the commutativity of two grains. 
According to \thmref{sgrain}, $g$ and $g'$ soundly commute if $gg' \rfnovaleq g'g$  {\bf and } for every variable $x$ accessed in $g$ (respectively $g'$), where $x$ is written in at least one of the two grains, the following predicate is true:
\begin{align}
\cmpvar{g, x} \stackrel{\text{\tiny def}}{=} \big( \forall e, f \cdot (e = \wt(x), f = \rd(x), e = \rdf{\tr}(f)) \implies (e \in g {\iff} f \in g )\big)
\label{eq:full}
\end{align}
In other words, two grains commute, if they commute according to $\rfnovaleq$ in isolation and if the share a variable that is written by at least one of them, then both grains must be {\em complete} wrt to that variable. 
We now introduce the {\em signature} of a grain as a pair of a set of letters that appear in the grain and a set of  variables wrt which the grain is complete:
\[
\forall w \in \alphabet^+:\ 
\mathit{grain}(w) \stackrel{\text{\tiny def}}{=} 
\stuple{ 
\letters(w), \setpred{x \in \vars}{\cmpvar{w, x}} 
} 
\]
which contains all the information required for deciding commutativity of two given grains. More importantly, observe that there are boundedly many different grain signatures, $2^{|\alphabet| + |\vars|}$ to be exact. 
Therefore, in the summary ($\operations$ in \figref{mcm}), rather than keep track of the grain as an arbitrary size word, we maintain a set of grain signatures, which is very much bounded. 

Let us turn our attention to the second problem. For any pair of grains in a word, one can look at their signatures and soundly decide whether they commute or not. Yet, there are unboundedly many such grains, and therefore, unboundedly many such commutativity relations to enumerate for the definition of {\em grain concurrency} (Definition \ref{def:gc}). 


Observe that commutativity and causal concurrency are monotonically related: the larger the grain commutativity relation, the larger the set of pairs of grain concurrent events. Therefore, rather than enumerate all possible commutativity relations, one can conservatively choose the largest one. In this largest relation, any two grains, that can soundly commute, are assumed to be commuting. This is determined based on their signatures alone, and therefore, the number of possible choices for the {\em largest} commutativity relation is bounded because the number of distinct signatures is bounded. 

\paragraph{\bfseries The Monitor}
One can conceptually think about the monitor combining the following two passes into a single pass through nondeterminism:
\begin{itemize}
\item Pass 1: From right to left, analyze the grains and replace each with a fresh letter (corresponding to its signature), and learn the bounded maximal commutativity relations for these new letters. Intuitively, this pass finds out which grains are complete wrt which variables, constructs their signature, and replaces the grains with a new letter that encodes the information from the signature. Constructing signatures is straightforward in a left-to-right or right-to-left pass, but it is more straightforward to see why completeness (condition \ref{eq:full}) can be checked and encoded for each grain in a right-to-left pass: a violation of this condition manifests as a pending read's matching write appearing in a grain.

\item Pass 2: In the style of
the trace concurrency monitor (\figref{mcm})), in a left to right pass, decide the causal concurrency of the two entities marked by $\diamond$'s based on the original letters and the new letters computed during pass 1. 
\end{itemize}

Classic ideas from automata theory provide the recipe to combine these passes, through nondeterminism, into a single-pass (left to right) constant space monitor that decides causal concurrency between any two events based on the largest sound grain commutativity relation:

\begin{theorem}\label{thm:monitor-sound}
For a fixed set of valid grains $G$ and a largest sound commutativity relation $\indrel_G$, the monitor sketched above accepts a program run $u \diamond_1 e\ v \diamond_2 e' w$ iff $e$ and $e'$ are reordered in some member of $[u \diamond_1 e\ v \diamond_2 e' w]_G$.   
\end{theorem}
 
The proof together with the details of the monitor are presented in Appendix \ref{sec:app-transitions-grains}.

\subsection{Grain Concurrency Monitor}\label{sec:gm}

We are now ready to construct the monitor that precisely captures Definition \ref{def:gc}. This monitor nondeterministically guesses the $\rhd$ and $\lhd$ symbols and therefore the grain boundaries, and for each guess runs the monitor in \figref{gcm}. It has to maintain a state to make sure that the grains are well-formed (non-overlapping) and non-empty. Therefore it effectively makes a guess and checks that its guess belongs to the language of the regular expression \ref{eq:wf}. Note that this guessing must account for the $\diamond$'s. The monitor makes a guess that an event of interest will be in a grain that it just nondeterministically opened, and therefore marks it with a diamond. Naturally, all wrong guesses are refuted later when the grain closes without seeing an event of interest.

\begin{theorem}\label{thm:monitor-sound2}
There exists a monitor that can decide, in constant space,  the (causal) grain concurrency (respectively orderedness) of two events in a given program run.
\end{theorem}

\renewcommand{\summaries}{\mathsf{SC}}
\section{Scattered Grains}\label{sec:sgrains}

So far, we have formally defined grains as subwords of a concurrent program run. Let us revisit our examples from \figref{n} to motivate expanding the definition to include grains that do not appear as contiguous subwords; we call these {\em scattered grains}. 

\begin{example}\label{ex:scattered}
To argue for the equivalence of the run illustrated in \figref{n}(d) to the one in \figref{n}(c), we need the $\wt(x)$ of $T_2$ to first commute as an individual event (from (d) to (b)), and then move as part of the grain that is marked in \figref{n}(c). If we are permitted to  consider the {\em scattered grain} $g_2$ as a grain in \figref{n}(d) (marked in pink), then we can argue that $\wt(x)$ of thread $T_2$ can be reordered against $\wt(x)$ of thread $T_1$ by (eventually) swapping the corresponding grains $g_2$ and $g_4$.

The grain monitor we present in Section \ref{sec:gmonitor} cannot keep track of these dual roles. Starting from the run illustrated in \figref{n}(d), it cannot see the potential of the grain including the $\wt(x)$ of $T_2$ and $\rd(x)$ of $T_3$ forming after a few sound swaps, and therefore cannot reason that $\wt(x)$ of $T_2$ can ultimately be soundly reordered against $\wt(x)$ of $T_1$. \qed
\end{example}


Formally, we say $\mathbf{i}$ is subsequence of the sequence of the range $[1 .. n]$ if it is strictly increasing, and all its elements belong to $[1..n]$. 
For convenience, we treat subsequences as sets of their elements (without order) when appropriate. 

\begin{definition}[Scattered Grains]
A {\em scattered grain} of program run $\tr$ is a subsequence of $w$. To distinguish identical scattered grains from each other, we denote them as $g@\mathbf{i}$, where $\mathbf{i}$ is a subsequence of $[1 .. |\tr|]$ and identifies the position of $g$. A set of scattered grains $\grains$ for a word $\tr \in \alphabet^*$
is said to be \emph{valid} if no two distinct grains overlap, that is, $\grain_1@\subsq{i}_1 \neq g_2@\subsq{i}_2 \in \grains \implies \subsq{i}_1 \cap \subsq{i}_2 = \emptyset$.
\end{definition}

Observe that scattered grains generalize the definition of grains when the subsequences happen to be contiguous. We may refer to a scattered grain simply as $g$ rather than $g@\mathbf{i}$ whenever the position is unimportant or clear from the context. Sound commutativity relations over scattered grains are defined identically to contiguous grains, therefore all definitions and theorems from Section \ref{sec:syntactic-weakenings} hold.
Moreover, we assume that single events form grains of size one and therefore every event belongs to some grain in what follows.

\begin{definition}[Grain Graph of a Run]
\deflabel{grain-graph}
Let $\grains$ be a valid set of scattered grains for a program run $\tr$ and $\indrel_\grains$ be the largest 
sound  commutativity relation over $\grains$ in the context of $\tr$. 
The grain graph $\GGraph{\tr, \grains} = (V, E)$ is a directed graph 
defined with the set of nodes $V = \grains$ and the set of edges 
\[E = V \times V - \setpred{(v_1, v_2)}{v_1 = v_2 \lor (v_1, v_2) \in \widehat{\indrel}_\grains \lor \tr|_{v_1,v_2} \meq v_2v_1}\]
where $\tr_{v_1,v2}$ is the projection of $\tr$ to the content of the grains $v_1$ and $v_2$.
\end{definition}
The first two sets of excluded edges correspond to the classic notions of anti-reflexivity and independence. The third condition above determines when there is an edge between two scattered grains that are {\em entangled}. We want a directed edge only if the second grain cannot be safely commuted to before the first grain.

Grain graphs can be used to define a notion of concurrency based on a set of scattered grains in the following sense:
\begin{definition}[Grain Graph Concurrency]\label{def:wgc}
Let $w$ be a run, $\grains$ be a valid set of scattered grains in $\tr$ and $\GGraph{\tr, \grains}$ be the corresponding grain graph. Let $e_1$ and $e_2$ be events in $\tr$
such that  $e_1$ appears before $e_2$ in $\tr$.
We say that the events $e_1$ and $e_2$ 
are {\em grain graph concurrent} under $\grains$ 
if there is no path in $\GGraph{\tr, \grains}$ 
from the node containing $e_1$ to the node containing $e_2$.   
\end{definition}

We call a valid set of scattered grains $\grains$ and the corresponding commutativity relation $\indrel_\grains$ {\em sound} in the context of a run $\tr$ if the same conditions listed in \defref{scom} hold. 

Observe that for contiguous grains, soundness of grain concurrency was baked into the definition that $[w]_G$ is sound. With scattered grains, this is no longer the case, and hence we need the following theorem:

\begin{theorem}(Soundness of Grain Graph  Concurrency)\label{thm:swgc}
Let $w$ be a run, $G$ be a valid set of scattered grains in $w$. If a pair of events $e_1$ and $e_2$ are grain graph concurrent under $\grains$, 
then they appear in a different order in some run $u$ such that $u \rfnovaleq \tr$. 
\end{theorem}

\begin{wrapfigure}{r}{0.2\textwidth}
\vspace{-10pt}
 \begin{center}
    \includegraphics[scale=0.9]{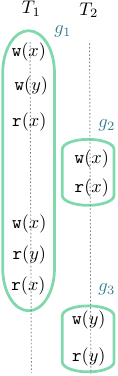}
  \end{center}
  \caption{\scriptsize Entangled Grains}
  \figlabel{rf}\vspace{-20pt}
 \end{wrapfigure} 
The proof (given in Appendix \ref{app:sgrains}) relies on the construction of the {\em condensation} of $\GGraph{\tr, \grains}$; that is, the directed {\em acyclic} graph acquired from $\GGraph{\tr, \grains}$ by contracting all its maximal strongly connected components. The proof argues that any linearization of this {\em condensed} graph is $\rdf{}$-equivalent to $\tr$. This, in turn, means that the condensed graph is analogous to a partial order representing a an equivalence class of $\mazeq{}$ induced by the trace commutativity relation $\indrel_\shasha$.  However, it differs from it in two important ways: (1) even though every linearization is $\rdf{}$-equivalent to $w$, it is not guaranteed to be equivalent to $w$ up to a sequence of valid grain and letter swaps, and (2) the set of linearizations does not necessarily include everything that is (grain) equivalent to $\tr$, even possibly $\tr$ itself. 

Consider the grains illustrated in \figref{rf}. They can be used to argue that the first $\wt(x)$ and the last $\wt(y)$ are grain graph concurrent. The grain graph only has one edge between grains $g_2$ and $g_3$, since all other pairs commute. However, observe that because $g_2$ is somewhat {\em entangled} with $g_1$, there exists no swap sequence to witness this concurrency. Moreover, the illustrated run itself does not belong to any linearization of the (condensed) grain graph, since no such linearization can reproduce the {\em entanglement} of the grains of the illustrated run. 

Even though contiguous grains are a special case of scattered grains, the way we define {\em concurrency} in the two cases are fundamentally different in Definitions \ref{def:gc} and \ref{def:wgc}. Yet, for a set of contiguous grains, {\em grain graph concurrency} coincides with {\em grain concurrency}. 

\begin{theorem}\label{thm:collapse}
For a program run $w$ and a sound valid set of contiguous grains $G$, a pair of events $e$ and $e'$ are {\em grain concurrent} under $G$ iff they are {\em grain graph concurrent} under $G$.  
\end{theorem}

This is the consequence of the fact that $\GGraph{\tr, \grains}$ is an acyclic graph for a valid set of {\em contiguous} grains $G$, and as such the condensed graph and $\GGraph{\tr, \grains}$ coincide, and are identical to the partial order describing the same equivalence class in the corresponding grain monoid, for which a valid swap sequence can be constructed. 

As with \defref{gc}, two events may be graph grain concurrent under one choice of scattered grains $\grains$
but not under another choice $\grains'$.
Consequently we define the following more permissive notion of concurrency under scattered grains.

\begin{definition}[Scattered Grain Concurrency]\deflabel{lgc}
Consider a program run $\tr$ and two events $e$ and $e'$ that appear in $\tr$. 
We call the pair of events $e$ and $e'$ to be {\em scattered grain concurrent} if there exists a valid set of scattered grains $\grains$ for $w$ such that
$e$ and $e'$ are grain graph concurrent under $\grains$.
\end{definition} 

The \thmref{collapse} also implies that {\em scattered grain concurrency} properly subsumes {\em grain concurrency}. In Section \ref{sec:beyond-contiguous-grains}, we demonstrate how {\em scattered grain concurrency} can be monitored in constant space. With the following example, we make the observation that even though {\em scattered grain concurrency} is strictly weaker than {\em grain concurrency}, it strictly under-approximates {\em sound concurrency} defined based on $\rdf{}$-equivalence. 

\begin{example}\label{ex:limit}
There remains a fundamental gap between the notion of concurrency defined based on $\rdf{}$-equivalence and scattered grain concurrency: in the run in \figref{n}(b), no choice of grains would witness the fact that the $\wt(z)$ operation of thread $T_3$ can be soundly reordered against the $\wt(x)$ operation of thread $T_1$. If all 4 grains are present, then the events are ordered. If we take either  $g_1$ or $g_3$ out, then they become ordered through the conflict dependencies between the $x$ operations. If we take either $g_2$ or $g_4$ out, then they become ordered through the conflict dependencies between the $z$ variables.   
Yet, the  $\rdf{}$-equivalent run in \figref{n}(e) witnesses that they are soundly concurrent. 

Interestingly, if we focus on the run in \figref{n}(e), and assume all grains $g_1$, $g_2$, $g_3$, and $g_4$ are present in it as scattered grains, then we can reason using the induced grain graph that the run in \figref{n}(b) is linearization of its (condensed) grain graph and as such  $\wt(x)$ of thread $T_1$ is (scattered) grain concurrent with $\wt(z)$ of $T_3$. Therefore, in the non-swap-based notion of {\em scattered grain concurrency}, one can reason about the implied equivalence and the corresponding notion of the runs in Figures \ref{fig:n}(b,e) in one way but the inverse.
\qed
\end{example}


\section{Monitoring with Scattered Grains}
\seclabel{beyond-contiguous-grains}

In this section, we develop a monitor for checking concurrency of two events
in the presence of scattered grains. 
When grains are scattered, they can be interleaved
in the run $\tr$, and this poses fundamental challenges towards the design
of a constant space monitor.

We call a grain {\em active} in a prefix of a concurrent run, if part of the grain has appeared in the prefix, but it is has not appeared in its entirety. With contiguous grains, at most one grain can be active at any given time. In sharp contrast, the number of scattered grains that
may be  \emph{active} simultaneously, can be unbounded (i.e., not constant).
Consider the run and the scattered grains $G = \set{g_1, \ldots, g_n}$ 
marked in \figref{challenges}(a).
Observe that all grains $g_1, g_2, \ldots g_n$
are active in the prefix $\pi$.
The unboundedness of the number of active grains 
is problematic because one expects that a monitoring algorithm for checking
concurrency would need to at least keep track of all active grains. This is the first challenge that has to be overcome in designing a constant space monitor for scattered grain concurrency.

\begin{figure}[t]
\begin{center}
\includegraphics[scale=0.9]{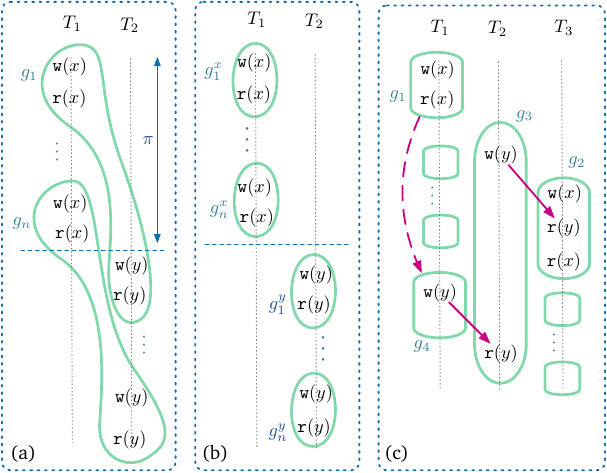}\vspace{-5pt}
\caption{The challenges of monitoring with scattered grains: (a) Unboundedly many active grains, (b) Minimal grains, and (c) Retroactive paths.}
\label{fig:challenges}\vspace{-10pt}
\end{center}
\end{figure}

The second challenge arises because a single active grain may overlap with many
other active grains through its lifetime, even if at a given point
only boundedly many grains are active. 
This means that two grains can be observed as ordered witnessed by a path 
in the grain graph (\defref{grain-graph}),
but this path is completed by a grain that appears long after the lifespan of both the grains have ended. 
Consider, for example, the run in \figref{challenges}(c).
Before the grain $g_4$ appears, there is a path, namely the direct edge, from $g_3$ to $g_2$, but no path from $g_1$ to $g_2$.
In fact, right before $g_4$ appears, $g_1$ and $g_2$ have both become inactive. 
When $g_4$ appears, a path is formed from $g_1$ to $g_4$. Once the $\rd(y)$ event in $g_3$ appears, an edge is formed between grain $g_4$ and grain $g_3$, which now completes the path from $g_1$ to $g_2$ \emph{retroactively}. Accounting for such retroactive paths is necessary for soundness.

We present solutions to these two challenges in Sections \ref{sec:bounding-active} and \ref{sec:retroactive-paths}. We introduce the notion of a \emph{minimal} grain to deal with the problem of unboundedly meany active grains and then demonstrate how a monitor can track retroactive paths
by \emph{summarizing} the paths between active grains when an intermediate grain has finished.


\subsection{Bounding the number of Active Scattered Grains}
\seclabel{bounding-active}


We start by defining a class of grains that are
\emph{minimal}, with the idea that if all (scattered) grains are minimal, then the number of active grains in any prefix of the program run is bounded. 

\begin{definition}[Minimal Grain]
\deflabel{minimal-grain}
A (scattered) grain $\grain$ is said to be minimal 
in program run $\tr$ if for all $\rho, \sigma \neq \epsilon$ such that $\grain = \rho \sigma$, 
there is a read event $\rd \in \sigma$ where 
$\rdf{\tr}(\rd) \in \rho$.
We call a set of grains minimal in $\tr$
if all grains in it are minimal in $\tr$.
\end{definition}


Intuitively, a grain is minimal if it cannot be broken into simpler grains,
without worsening its commutativity status with respect to other grains.
In~\figref{challenges}(a), none of the grains $g_1, \ldots, g_n$
are minimal, because each of them have a prefix (of size $2$) where there
is no read event whose corresponding write event is not in the prefix grain. In contrast, the following regular expression can produce arbitrarily long minimal grains:
\[\wt(x) \wt(y) \rd(x) \Big(\wt(x) \rd(y) \wt(y) \rd(x)\Big)^*\]
For a valid set $\grains$ of grains in $\tr$ and a prefix $\pi$ of $\tr$,
we use the notation $\actv{\pi, \grains}$ to denote the set of grains in $\grains$
that are active in $\pi$. If all grains are minimal, then the set of active grains is bounded in size by the total number of variables:

\begin{lemma}
\lemlabel{minimal-grains-active-bound}
Let $\tr$ be a run and let $\grains$ be a set of minimal grains in $\tr$.
For a prefix $\pi$ of $\tr$, we have $|\actv{\pi, \grains}| \leq |\vars|$.
\end{lemma}

This is implied by the following crucial observation. Assume two minimal grains $g$ and $g'$ are active at given prefix $\pi$ of $\tr$. The minimality $g$ (respectively $g'$) implies that there is a variable $x$ (respectively $x'$) that is written in $\pi$ and has a corresponding read in the remainder of $\tr$. The variables $x$ and $x'$ cannot coincide. Therefore one needs a distinct variable that only belongs to one of the many active grains at any given point, which puts a bound of $|\vars|$ on the maximum number of grains that can be active at any given time. 


Ideally, we want all grains to be minimal, since this resolves the problem of having unboundedly many active grains. Let us argue why this can be achieved without any compromises to the result of checking causal concurrency. 
Consider \figref{challenges}(b) which depicts the same run as in \figref{challenges}(a)
but this time with a different set of grains which are all minimal. Observe that the set of minimal grains in \figref{challenges}(b) witness more causal concurrency than the set of non-minimal grains in \figref{challenges}(a). 
In general, one can argue that for any set of (non-minimal) grains that witnesses the causal concurrency of two events $e$ and $f$, there exists a set of minimal events that does the same. We give a constructive argument for this claim. 

One can argue that any grain $g$ is the concatenation of a sequence of minimal grains. Recall \defref{minimal-grain}. For any $\rho$ and $\sigma$ that witness non-minimality of $g = \rho \sigma$ according to \defref{minimal-grain}, add a split point between $\rho$ and $\sigma$. Let $g = g_1 \dots g_n$ where $g_i$'s are precisely marked by these split points. By definition, $g_i$'s are all minimal. Define $\splitGrains(\grains) = \{g_1, \dots, g_n\}$.

Given a valid set of grains $\grains$,
we use $\splitGrains(\grains) = \bigcup_{g \in \grains} \splitGrains(g)$
to denote the set of grains obtained by splitting individual grains in $\grains$.
We need to argue that splitting all grains into minimal grains would result in declaring at least as many pairs of events causally concurrent as before.

Given a commutativity relation $\indrel_\grains$ on 
$\grains$, define $\splitGrains(\indrel_\grains) \subseteq \splitGrains(G)  \times \splitGrains(G)$ as 
\[\splitGrains(\indrel_\grains) = \setpred{(\grain'_1, \grain'_2)}{\exists \grain_1, \grain_2 \in \grains, 
(g_1, g_2) \in \indrel_\grains \text{ and } \grain'_1 \in \splitGrains(\grain_1), \grain'_2 \in \splitGrains(\grain_2)}\]
One can prove that $\splitGrains(\indrel_\grains)$ is sound. This in turn implies that splitting grains
does not add spurious paths in the new grain graph, which did not exist in the original one. 

\begin{lemma}
\lemlabel{minimal-grains-sufficient}
Let $\tr$ be a run, $\grains$ be a valid set of scattered grains,  
and  $\grainind$ be a sound independence relation (\defref{se}). $\splitGrains(\indrel_\grains)$ is sound, and 
for every pair of events $(e_1, e_2)$, if $e_1$ and $e_2$ are grain graph concurrent
under $\grains$ (using commutativity relation $\indrel_\grains$), 
then they are grain graph concurrent under the grains $\splitGrains(\grains)$
(using commutativity relation $\splitGrains(\indrel_\grains))$.
\end{lemma}

Therefore, when checking scattered grain concurrency, it is safe to ignore all sets of grains that include any non-minimal grains, since the same causal concurrency verdicts can be declared by other sets of minimal grains.  


\subsection{Tracking Retroactive Paths}
\seclabel{retroactive-paths}

Let us address our second challenge, that is how to keep track of {\em retroactive} paths in the grain graph.
We fix the set of minimal grains. 
Since grains containing only a single event are by definition minimal, we can assume, without loss of generality, that every event is part of a grain; standalone events are grains of size one.
Like Section \ref{sec:gmonitor}, the causal concurrency question between events
$e_1$ and $e_2$ is posed as the causal concurrency between the grains $\focalGrain{1}$ and $\focalGrain{2}$ containing these events. 

It is clear that to design a constant space monitor, we cannot store the entire grain graph in the memory of the monitor. The idea is to forget all grains that are no longer active and {\em summarize} their effect instead. 
Under the assumption that all grains are minimal, the number of active grains are bounded by $|\vars|$.
This guarantees that the monitor keeps track of constantly many grains. 
We can annotate events with a finite set of grain identifiers $\set{1, 2, \ldots, |\vars|}$. Identifiers are reused for grains that do not overlap.

The monitor maintains a graph where the nodes are precisely the set of active grains. We call this graph the {\em summarized grain graph}. Recall that the key information a monitor wants from a grain graph is whether two grains are connected by a directed path in the grain graph. 
Paths from the grain graph are represented as edges in the summarized grain graph. 
In particular, the edges in the summarized grain graph capture
paths in the original grain graph whose intermediate grains have become inactive.

More formally, let $\pi$ be a prefix of run $\tr$ and let $\grains$
be a valid set of grains.
For grains (active or otherwise) $\grain, \grain'$ 
that overlap with $\pi$, 
use the notation $\grain \deadpath{\pi, \grains} \grain'$ to say that
there is a path \emph{through} inactive grains from $\grain$ to $\grain'$,
i.e., there are grains $\grain_1, \grain_2 \ldots, \grain_{k} \in \grains$ 
(with $k>1$, $g = g_1$ and $g' = g_{k}$), such that
$g_2, \ldots, g_{k-1} \subseteq \pi$ are inactive (i.e., have started and completed) in $\pi$,
and $(g_i, g_{i+1})$ is an edge of the original grain graph, for each $1 \leq i \leq k-1$.
In essence, a path in the grain graph can be split into successive 
paths (through inactive grains) between the active grains.
 
The summarized grain graph is maintained by the monitor for answering a specific causal concurrency query between two grains $\focalGrain{1}$ and $\focalGrain{2}$. Formally:

\begin{definition}[Summarized Grain Graph]
Let $\tr$ be a run, $\grains$ be a valid set of scattered grains and 
let $\pi$ be a prefix of $\tr$.
The summarized conflict graph of $\pi$ is 
$\SGGraph{\pi, G} = (V_\pi, E_\pi)$, where
\begin{enumerate}
	\item $V_\pi = \actv{\pi, \grains} \cup \set{\focalGrain{1}, \focalGrain{2}}$ 
	is the set containing the active grains at the end of $\pi$ as well as the focal grains,
	\item $(\grain, \grain') \in E_\pi$ if $\grain \deadpath{\pi, \grains} \grain'$,
\end{enumerate}
\end{definition}

Summarized grain graphs sufficiently capture the reachability information between the focal grains:
\begin{proposition}
\proplabel{summarized-reachability}
Let $\tr$ be a run and let $\grains$ be a valid set of scattered grains in $\tr$.
There is a path from $\focalGrain{1}$ to $\focalGrain{2}$ in $\GGraph{\tr, G}$
iff there is a prefix $\pi$ of $\tr$ such that there is a path from $\focalGrain{1}$ to $\focalGrain{2}$
in $\SGGraph{\pi, G}$.
\end{proposition}

It remains to argue that a constant-space streaming algorithm (that reads an input run in one pass from left to right) can successfully construct the summarized grain graph for the run. Intuitively, every time an active grain $g$ is about to end, and as such become inactive and disappear, 
its summary is added to all its predecessors $g'$; that is, all nodes in the summarized grain graph that have a incoming edge to $g$. This way, a future active grain $g''$, that would be a successor of $g$ in the grain graph, will now become the successor of all $g'$'s, since the summary information stored in them will trigger the formation of an edge from them to $g''$. We make this idea formal in the detailed description of the monitors in the next sections.

For example, recall \figref{challenges}(c) and assume we want to query causal concurrency between $g_1$ and $g_2$. As such, both $g_1$ and $g_2$ have dedicated nodes in the summarized grain graph independent of their activeness status. The dashed edge appears in the summarized grain graph in place of the path from $g_1$ to $g_4$, while $g_4$ is active. Once $g_4$ is finished, as a predecessor of $g_4$, $g_1$ would remember the $\wt(y)$ access so that it can add an edge to $g_3$ when its $\rd(y)$ appears.


\subsection{Monitoring for a Fixed set of Minimal Scattered Grains}
\seclabel{scattered-monitor-fixed}


The description of the monitor that checks grain graph concurrency
(under a given set of minimal grains) is now straightforward as it essentially tracks
and updates the summarized grain graph after each prefix of the run that is seen.
We assume that the run contains symbols $\bgn$ and $\egn$ denoting
start and end of grains. 
For ease of presentation, let us assume that the run labels the \emph{focal grains}
$\focalGrain{1}$ and $\focalGrain{2}$ with fresh identifiers
identifiers $\focalEv_1$ and $\focalEv_2$.
Thus, the set of grain identifiers is thus
$\grainIDs = \set{1, 2, \ldots, |\vars|} \uplus \set{\focalEv_1, \focalEv_2}$.

The commutativity status of a grain depends on the
\emph{pending} variables of a grain,
i.e., for each grain $\grain$, we identify the set of variables
$x$ such that $x$ is read at some event $e \in \grain$ but not written to in $\grain$
(i.e., $\rdf{\tr}(e) \not\in \grain$),
or $x$ is written at some event $e \in \grain$ to but is read outside of $\grain$
i.e., $\exists e', \rdf{\tr}(e') = e \land e' \not\in \grain$).
While this information can be guessed non-deterministically
and checked later on, for simplifying the presentation of our monitor, 
we will also assume that the
alphabet encodes this information as part of the $\bgn$ marker of grain.
Thus, the alphabet of runs can be assumed to be 
\[
\gMarkAlphabet = \grainIDs \times (\alphabet \uplus \set{\bgn} \times \powset{\vars} \uplus \set{\egn}).
\]
For the purpose of this section, we will assume that the runs we consider are valid
strings over the alphabet $\gMarkAlphabet$ that are minimal, and use the language
$L_\textsf{VMG} = \{\tr \in \gMarkAlphabet^* \, | \, \tr \text{ represents a \underline{\bf v}alid \underline{\bf m}inimal \underline{\bf g}rain}$ 
annotation with focal grains$\}$ to denote the set of all such annotated runs.
Below, we present the annotated version of the run in \figref{n}(a) with all grains $g_1, g_2, g_3, g_4$,
assuming the two focal grains are $g_2$ and $g_4$ is presented below.
\begin{align*}
\begin{array}{rcl}
\tr & = & (\bgn, \emptyset)^{\focalEv_1} \, \ev{T_2, \wt(x)}^{\focalEv_1} \, 
(\bgn, \emptyset)^{1} \, \ev{T_1, \wt(z)}^{1} \, \ev{T_2, \rd(z)}^{1} \egn^{1} \, 
(\bgn, \emptyset)^{1} \, \ev{T_3, \wt(z)}^{1} \, \ev{T_3, \rd(z)}^{1} \egn^1 \\
&& \, \ev{T_3, \rd(x)}^{\focalEv_1} \, \egn^{\focalEv_1} \,
(\bgn, \emptyset)^{\focalEv_2} \, \ev{T_1, \wt(x)}^{\focalEv_2} \, \ev{T_1, \rd(x)}^{\focalEv_2} \, \egn^{\focalEv_2}
\end{array}
\end{align*}
In the above, we use the notation $a^i$ as shorthand for $(i, a) \in \gMarkAlphabet$.
Observe that the (unique) grains with grain identifier $\focalEv_1$ (namely grain $g_2$)
overlaps with both the
grains with identifier $1$ (i.e., grains $g_1$ and $g_3$).
Also observe that $L_\textsf{VMG}$ is a regular language.
Thus, a monitor that works correctly for runs in this language also works for non-annotated runs
because the language of a constant space monitor is regular
and thus closed under projection.
As discussed in Section \ref{sec:gmonitor}, it suffices to consider the largest sound
commutativity relation on a given set of grains.
In fact, the largest commutativity relation also has a succinct representation --- the dependence
between any two scattered grains in this relation can be checked only using
the \emph{signatures} $g_1 = \langle E_1, V_1 \rangle$ and $g_2 = \langle E_2, V_2 \rangle$ of these grains:
\begin{align*}
\depend{g_1,g_2} &\iff  
\exists e_1 \in E_1, e_2 \in E_2 \cdot
\bigg(
\begin{aligned} 
\begin{array}{ll}
& \ThreadOf{e_1} = \ThreadOf{e_2} \\
\lor & \VariableOf{e_1} = \VariableOf{e_2} \not\in V_1 \cap V_2 \land \wt \in \set{\OpOf{e_1}, \OpOf{e_2}}
\end{array}
\end{aligned}
\bigg)
\end{align*}

Recall that the signatures are bounded sized, and so is their dependence relationship.
Thus, in order to maintain the summarized graph inductively, states 
stores additional information to correctly infer commutativity with other grains,
including those that have finished.

Our monitor essentially tracks grain signatures of interest to infer edges between 
relevant grains.
Let $\pi$ be a prefix of $\tr$ and let $\grain$ be some grain
that is active in $\pi$.
Then, we use the notation  
$\contents_\pi(\grain) = \setpred{\tr[e]}{e \in \grain \cap \range{|\pi|}}$
to denote the \emph{\underline{\sf C}ontents} of $\grains$ in $\pi$.
Likewise, we denote the set of \emph{\underline{\sf P}ending}
variables of $\grain$ by 
$\pvars(\grain) = \setpred{x \in \vars}{\exists e \in \grain, \VariableOf{e} = x, \neg \cmpvar{\grain, x}}$.
For each active grain $\grain$ that we track in our monitor,
we will also maintain summarized information about those grains 
that are no longer active but can be reached from $\grain$,
to accurately infer edges in the summarized graph (alternatively \emph{retroactive} paths in the grain graph).
The first such information is the \emph{\underline{\sf S}ummarized \underline{\sf C}ontents}
of the inactive but reachable grains:
$\summaries_\pi(\grain) = \bigcup \setpred{\contents_\pi(\grain')}{\exists \grain' \subseteq \pi, \grain \deadpath{\pi, \grains} \grain'}$.
Likewise, we use
$\spvars_\pi(\grain) = \bigcup \setpred{\pvars_\pi(\grain')}{\exists \grain' \subseteq \pi, \grain \deadpath{\pi, \grains} \grain'}$ to denote the
\emph{\underline{\sf S}ummarized \underline{\sf P}ending} variables that $\grain$
must keep track of.

We are now ready to formally describe our 
\underline{\bf G}rain \underline{\bf G}raph 
concurrency monitor
$\autsup{{\sf GG}} = (Q_{\sf GG}, q_0, \delta_{\sf GG}, F_{\sf GG})$.
The states $Q_{\sf GG}$ of $\autsup{{\sf GG}}$ are tuples of the form $\stuple{V, E, \contents, \pvars, \summaries, \spvars}$ where
\begin{itemize}
	\item $V \subseteq \grainIDs$ represents the set of active and focal grains
	of the summarized graph.

	\item $E \subseteq V \times V$ represents the edges of the summarized graph.

	\item $\contents: V \to \powset{\alphabet}$ maintains the \emph{contents} of active grains.
	For each grain $\grain$ tracked as a vertex $u$, we will have 
	$\contents(u) = \contents_\pi(\grain)$
	after having processed the prefix $\pi$.

	\item $\pvars: V \to \powset{\vars}$ to track the set $\pvars_\pi(\grain)$ 
   for each grain $\grain$ (at the end of prefix $\pi$) tracked as some vertex in $V$.

	\item $\summaries: V \to \powset{\alphabet}$ is such that $\summaries_\pi(u)$ 
	tracks the set $\summaries_\pi(\grain)$ at the end of prefix $\pi$, where $u$ represents the grain $\grain$.

	\item $\spvars: V \to \powset{\vars}$ to track the set $\spvars_\pi(\grain)$
   for each grain $\grain$ (at te end of prefix $\pi$) tracked as some vertex in $V$.
\end{itemize}

The start state of the monitor is 
$q_0 = \stuple{\emptyset, \emptyset, \lambda i \cdot \emptyset, \lambda i \cdot \emptyset, \lambda i \cdot \emptyset, \lambda i \cdot \emptyset}$.
All states $\stuple{V, E, \contents, \pvars, \summaries, \spvars}$ in which
$\focalEv_1, \focalEv_2 \in V$ and further $\focalEv_2$
is reachable from $\focalEv_1$ via a path using edges in $E$ are marked rejecting,
and others are accepting (i.e., belong to $F_{\sf GG}$).
The transitions of the monitor are described in \figref{sgcm}.
When we see an event $e = (i, (\bgn, Y))$ that demarcates the beginning of a grain
with identifier $i$, we also know upfront the set of pending variables
in the grain. 
At this point, we create a new node labeled $i$ in the graph and also track this set $Y$.
When we see the end of a grain (marked $\egn$) with identifier $i$, then we \emph{garbage collect}
the node $i$ from the graph.
As part of this garbage collection, we add an edge from the immediate
predecessors of $i$ to its immediate successors;
this is captured by the operation $\mergeEdge{\cdot}{\cdot}$.
Further, the maps $\contents$ and $\pvars$ reset the entry corresponding to $i$,
and $\summaries$ and $\spvars$
entries of the predecessors of $i$ are updated to include 
$\contents(i)$, $\summaries(i)$, $\pvars(i)$ and $\spvars(i)$;
this is captured using $\mergeSum{\cdot, \cdot}{\cdot}{\cdot}$.
When we see a read or a write event $a = \ev{T, o, x}$ in grain corresponding to
the node $i$, we add edges from all nodes $j$ to $i$ such that
$j$ conflicts with $a$, i.e.,
either the contents or the summary of $j$ contains a letter that conflicts with $a$,
or one of $j$ or an inactive grain reachable from $j$ has $x$ as a pending variable.

One can prove that the monitor in \figref{sgcm} correctly maintains $\deadpath{\pi, \grains}$
between each pair of active/focal grains
after every prefix $\pi$ of $\tr$, and as such, it can correctly decide whether the two focal grains are scattered grain concurrent or not:

\begin{theorem}
\thmlabel{weak-concurrency-fixed-grains-monitor}
Given a run $\tr \in L_\textsf{VMG}$ annotated with grains $\grains$, 
$\tr$ is accepted by $\autsup{{\sf GG}}$ iff 
the two focal grains are grain graph concurrent under $\grains$.
\end{theorem}



\begin{figure}[t]
\fbox{
\parbox{\textwidth}{
\begin{center}
\begin{tabular}{lll}
State & Event & State Update \\ \hline \hline
$\stuple{V, E, \contents, \pvars, \summaries, \spvars}$ & $e = (i, (\bgn, Y))$ & $\stuple{V \uplus \set{i}, E, \contents, \pvars[i \mapsto Y], \summaries, \spvars}$\\
\hline
$\stuple{V, E, \contents, \pvars, \summaries, \spvars}$ & $e = (i, \egn)$, $i\in\set{\focalEv_1, \focalEv_2}$ & $\stuple{V, E, \contents, \pvars, \summaries, \spvars}$ \\
\hline
$\stuple{V, E, \contents, \pvars, \summaries, \spvars}$ & $e = (i, \egn)$, $i\not\in\set{\focalEv_1, \focalEv_2}$ & 
$\begin{aligned}
&\stuple{V', E', \contents', \pvars', \summaries', \spvars'}, \text{ where,}\\
&\;\; V' = V - \set{i}, E' = \mergeEdge{E}{i} \\
&\;\; \contents' = \contents[i \mapsto \emptyset], \pvars' = \pvars[i \mapsto \emptyset], \\
&\;\; \summaries' = \mergeSum{\summaries, \contents}{E}{i}, \\
&\;\; \spvars' = \mergeSum{\spvars, \pvars}{E}{i}
\end{aligned}$\\
\hline
$\stuple{V, E, \contents, \pvars, \summaries, \spvars}$ & $e = (i, a)$, $a\in \alphabet$ &
$\begin{aligned}
&\stuple{V, E', \contents \cup \set{a}, \summaries, \spvars}, \text{ where,}\\
&\;\; E' = E \cup \setpred{(j, i)}{ j \neq i \text{ and }\\
& \quad\quad\quad\depEdge{\contents(j){\cup}\summaries(j),\; a,\; \pvars(j){\cup}\spvars(j){\cup}\pvars(i)} \;}
\end{aligned}$\\
\hline
\hline
\end{tabular}%
\begin{align*}
\mergeEdge{E}{i} &= (E - \setpred{(i, j), (j, i)}{j \neq i}) \cup \setpred{(j, k)}{(j, i) \in E, (i, k) \in E}\\
\mergeSum{SM, M}{E}{i} &= \lambda j \cdot   \begin{cases} 
                                            \emptyset & \text{ if } j = i \\ 
                                            SM(j) \cup M(i) \cup SM(i) & \text{ if } j \neq i, (j, i) \in E \\
                                            SM(j) & \text{ owise }
                                        \end{cases}\\
\depEdge{S, a, Z} \iff& 
\exists b \in S \cdot \bigg(
\ThreadOf{a} = \ThreadOf{b} \lor 
\big(\VariableOf{a} = \VariableOf{b} \in Z \land \wt \in \set{\OpOf{a}, \OpOf{b}}\big)
\bigg)
\end{align*}%
\end{center}}}\vspace{-5pt}
\caption{Grain Graph Concurrency Monitor for Annotated Runs $\autsup{{\sf GG}}$: The monitor rejects if it is in a state $(V, E, \contents, \pvars, \summaries, \spvars)$ such that $(\focalEv_1, \focalEv_2) \in E^*$ at the end of the run, and accepts otherwise.}.
\figlabel{sgcm}\vspace{-20pt}
\end{figure}
\vspace{-0.1in}

\subsection{Monitoring Scattered Grain Concurrency}
\seclabel{weak-concurrency-monitor}

Similar to the case for grain concurrency, the monitor for scattered grain concurrency  (\defref{lgc}) can non-deterministically guess the choice of scattered grains and check, using the monitor in \figref{sgcm}, if any of these guesses is valid and declares the two focal grains to causally concurrent.

\begin{theorem}
\thmlabel{weak-grain-concurrency-regular}
There exists a monitor $\autsup{}$ that uses $2^{O(|\vars| \cdot |\alphabet|)}$ space and
accepts the word $\tr \in \alphabet^* \focalEv_1 \alphabet^+ \focalEv_2 \alphabet^+$ iff 
events that appear immediately after $\focalEv_1$ and $\focalEv_2$  are scattered grain concurrent in $\tr$. 
Consequently, scattered grain concurrency can be checked in constant space.
\end{theorem}


%

The proof of the above theorem relies on \thmref{weak-concurrency-fixed-grains-monitor} 
and the observations that given a run $\tr \in \alphabet$ (with $\focalEv_1$ and $\focalEv_2$),
we can guess the grains on $\tr$ in constant space, validate whether the guessed grains
are minimal and valid, and finally if the resulting annotated run is accepted
 by $\autsup{\sf GG}$.
The number of states in $\autsup{\sf GG}$ is $2^{|\vars|^2} \cdot (2^{|\alphabet|})^{|\vars|} \cdot (2^{|\vars|})^{|\vars|} \cdot (2^{|\alphabet|})^{|\vars|} \in 2^{O(|\vars| \cdot |\alphabet|)}$
can can be obtained by counting the different ways to construct graphs on $|\vars|$ vertices
and deciding on the contents, pending variables and summaries and summarized pending variables for
these vertices.
Since the scattered grain concurrency monitors essentially adds non-determinism on top of this DFA,
its deterministic version has exponentially many states, and each state thus has size $2^{O(|\vars| \cdot |\alphabet|)}$.


\section{Related Work}
\seclabel{related}

There is little work in the literature in the general area of generalizing commutativity-based analysis of concurrent programs. On the theoretical side, some generalizations of Mazurkiewicz traces have been studied before \cite{bauget,gtrace,gtrace-types,KLEIJN199898}. The focus has mostly been on incorporating the concept of {\em contextual} commutativity, that is, when two events can be considered commutative in some contexts but not in all. This is motivated, among other things, by send/receive or producer/consumer models in distributed systems where send and receive actions commute in contexts with non-empty message buffers. 

On the practical side, similar ideas were used in \cite{mypopl20,mypldi22,Genest07,Desai14} to reason about equivalence classes of concurrent and distributed programs under contextual commutativity. There is a close connection between this notion of contextual commutativity and the concept of {\em conditional independence} in partial order reduction \cite{KatzP92,GodefroidP93} which is used as a weakening of the independence (commutativity) relation by making it parametric on the current {\em state} to increase the potential for reduction.  


This work points out that, in general, the coarsest notion of equivalence
may not yield monitoring-style algorithms for analyzing runs of concurrent programs, and puts forward an alternative notion of equivalence, coarser than trace equivalence,  that can be efficiently used in monitoring causal concurrency.
Below, we briefly survey some application domains relevant to programming languages research where our proposed equivalences can have an immediate positive impact.

\paragraph{Concurrency Bug Prediction}
Dynamic analysis techniques for detecting concurrency bugs such as data races~\cite{Flanagan09},
deadlocks~\cite{Samak2014} and atomicity violations~\cite{Flanagan2008,FM08CAV,Sorrentino10}
suffer from poor coverage since their bug detection capability is determined
by the precise thread scheduling observed during testing.
Predictive techniques~\cite{Koushik05,Said11,Wang09} such as those for
detecting data races~\cite{Smaragdakis12,Huang14,Kini17,Pavlogiannis2020,Roemer18,Mathur21}
or deadlocks~\cite{Kalhauge2018,Tunc2023deadlock}
enhance coverage by exploring equivalent runs that might yield
a bug.
The core algorithmic problems involved in such approaches are akin to checking
causal concurrency.
Coarser yet tractable equivalence relations can yield better analysis techniques with more accurate predictions.
Recent work~\cite{Kulkarni2021} explores hardness results for data race prediction, including
Mazurkiewicz-style reasoning, when the alphabet is not assumed to be of constant size.

\paragraph{Dynamic Partial Order Reduction for Stateless Model Checking}
There has been a rising interest in developing dynamic partial order based~\cite{Flanagan2005} 
stateless model checking techniques~\cite{Godefroid1997}
that are \emph{optimal}, in that they explore as few program runs as possible from the underlying program. 
 Coming up with increasingly coarser equivalences is the go-to approach for this.
The notion of reads-from equivalence we study has received
a lot of attention in this context~\cite{Abdulla2019,Chalupa2017,Kokologiannakis2022,Kokologiannakis2019}.
Recent works also consider an even coarser reads-value-from~\cite{ChatterjeePT19,Agarwal2021} equivalence. Events encode variable values and two runs are equivalent if every read observes the
same value across the two runs. The problem of verifying sequential
consistency, which is one of the key algorithmic questions underlying such approaches, was proved to be intractable in general by Gibbons and Korach~\cite{Gibbons1997}. 

\section{Conclusion and Future Work}
This paper demonstrates that reads-from equivalence, 
the most relaxed sound notion of equivalence on concurrent program runs, does not share the nice algorithmic properties of (Mazurkiewicz) trace equivalence. This poses the following research questions: ``Are there other notions of equivalence, which remain sound, relax trace equivalence, and yet maintain its key desirable algorithmic properties? And, what design principles bring about algorithmic simplicity?''. 
We propose two new notions of equivalence in this paper under which causal concurrency can be decided by a streaming algorithm in constant space.   
Equivalence based on contiguous grains shares the characteristic with trace equivalence, in that it is definable purely in terms of commutativity. Remarkably, while this characteristic is lost with scattered grains, the algorithmic simplicity remains. The point of commonality between the two notions is that each individual event can only {\em move} in one role: either individually, or as part of a grain. As we demonstrate in \thmref{wc-hardness}, algorithmic hardness kicks in when this rule is broken in the simplest of syntactic settings. It would be interesting to investigate whether one can further relax the notion of {\em grain equivalence} by breaking this barrier. 

As we note in \secref{beyond-contiguous-grains},  the straightforward determinization of the (scattered) grain concurrency monitor can result in an exponential blowup on the number of threads and shared variables. We treat these parameters as constants, in a manner similar to trace theory's reliance on a finite (constant-sized) alphabet of actions.  There seems to be a tradeoff between how coarse the equivalence relation is and how efficiently causal concurrency can be monitored. The research question of how to devise a practical monitor when the number of threads or variables is not very small remains an interesting direction for future research.

Finally, this paper studies coarser equivalences in the context of a {\em causal concurrency query}. In several application domains, the standard oracle for causal concurrency based on trace equivalence can be replaced with the  (scattered) grain concurrency from this paper. Yet, there remain other application domains in which trace equivalence is used in completely different ways: for example, proof simplification \cite{lics2023} by verifying a commutativity-based reduction of a concurrent program. The notion of soundness used in this paper suffices when the object of study is a single program run. In proof simplification, however, a set of program runs must be considered together, and, as we argued, different grain commutativity relations may be sound in different program runs. Moreover, the alphabet of actions for proof simplification typically includes atomic program statements rather than shared variable reads and writes. As such, the theoretical results presented in this paper do not immediately offer a solution in such domains. It will be interesting to explore how similar coarse equivalences, based on commutativity of words (rather than symbols), can be designed to be exploited for proof simplification.   
\bibliographystyle{ACM-Reference-Format}
\bibliography{references}


\begin{thebibliography}{56}


\ifx \showCODEN    \undefined \def \showCODEN     #1{\unskip}     \fi
\ifx \showDOI      \undefined \def \showDOI       #1{#1}\fi
\ifx \showISBNx    \undefined \def \showISBNx     #1{\unskip}     \fi
\ifx \showISBNxiii \undefined \def \showISBNxiii  #1{\unskip}     \fi
\ifx \showISSN     \undefined \def \showISSN      #1{\unskip}     \fi
\ifx \showLCCN     \undefined \def \showLCCN      #1{\unskip}     \fi
\ifx \shownote     \undefined \def \shownote      #1{#1}          \fi
\ifx \showarticletitle \undefined \def \showarticletitle #1{#1}   \fi
\ifx \showURL      \undefined \def \showURL       {\relax}        \fi
\providecommand\bibfield[2]{#2}
\providecommand\bibinfo[2]{#2}
\providecommand\natexlab[1]{#1}
\providecommand\showeprint[2][]{arXiv:#2}

\bibitem[Abdulla et~al\mbox{.}(2019)]%
        {Abdulla2019}
\bibfield{author}{\bibinfo{person}{Parosh~Aziz Abdulla},
  \bibinfo{person}{Mohamed~Faouzi Atig}, \bibinfo{person}{Bengt Jonsson},
  \bibinfo{person}{Magnus L\r{a}ng}, \bibinfo{person}{Tuan~Phong Ngo}, {and}
  \bibinfo{person}{Konstantinos Sagonas}.} \bibinfo{year}{2019}\natexlab{}.
\newblock \showarticletitle{Optimal Stateless Model Checking for Reads-from
  Equivalence under Sequential Consistency}.
\newblock \bibinfo{journal}{\emph{Proc. ACM Program. Lang.}}
  \bibinfo{volume}{3}, \bibinfo{number}{OOPSLA}, Article
  \bibinfo{articleno}{150} (\bibinfo{date}{oct} \bibinfo{year}{2019}),
  \bibinfo{numpages}{29}~pages.
\newblock
\urldef\tempurl%
\url{https://doi.org/10.1145/3360576}
\showDOI{\tempurl}


\bibitem[Agarwal et~al\mbox{.}(2021)]%
        {Agarwal2021}
\bibfield{author}{\bibinfo{person}{Pratyush Agarwal},
  \bibinfo{person}{Krishnendu Chatterjee}, \bibinfo{person}{Shreya Pathak},
  \bibinfo{person}{Andreas Pavlogiannis}, {and} \bibinfo{person}{Viktor
  Toman}.} \bibinfo{year}{2021}\natexlab{}.
\newblock \showarticletitle{Stateless Model Checking Under a Reads-Value-From
  Equivalence}. In \bibinfo{booktitle}{\emph{Computer Aided Verification: 33rd
  International Conference, CAV 2021, Virtual Event, July 20–23, 2021,
  Proceedings, Part I}}. \bibinfo{publisher}{Springer-Verlag},
  \bibinfo{address}{Berlin, Heidelberg}, \bibinfo{pages}{341–366}.
\newblock
\showISBNx{978-3-030-81684-1}
\urldef\tempurl%
\url{https://doi.org/10.1007/978-3-030-81685-8_16}
\showDOI{\tempurl}


\bibitem[Bauget and Gastin(1995)]%
        {bauget}
\bibfield{author}{\bibinfo{person}{Serge Bauget} {and} \bibinfo{person}{Paul
  Gastin}.} \bibinfo{year}{1995}\natexlab{}.
\newblock \showarticletitle{On congruences and partial orders}. In
  \bibinfo{booktitle}{\emph{Mathematical Foundations of Computer Science
  1995}}, \bibfield{editor}{\bibinfo{person}{Ji{\v{r}}{\'i} Wiedermann} {and}
  \bibinfo{person}{Petr H{\'a}jek}} (Eds.). \bibinfo{publisher}{Springer Berlin
  Heidelberg}, \bibinfo{address}{Berlin, Heidelberg},
  \bibinfo{pages}{434--443}.
\newblock
\showISBNx{978-3-540-44768-9}


\bibitem[Chalupa et~al\mbox{.}(2017)]%
        {Chalupa2017}
\bibfield{author}{\bibinfo{person}{Marek Chalupa}, \bibinfo{person}{Krishnendu
  Chatterjee}, \bibinfo{person}{Andreas Pavlogiannis}, \bibinfo{person}{Nishant
  Sinha}, {and} \bibinfo{person}{Kapil Vaidya}.}
  \bibinfo{year}{2017}\natexlab{}.
\newblock \showarticletitle{Data-Centric Dynamic Partial Order Reduction}.
\newblock \bibinfo{journal}{\emph{Proc. ACM Program. Lang.}}
  \bibinfo{volume}{2}, \bibinfo{number}{POPL}, Article \bibinfo{articleno}{31}
  (\bibinfo{date}{dec} \bibinfo{year}{2017}), \bibinfo{numpages}{30}~pages.
\newblock
\urldef\tempurl%
\url{https://doi.org/10.1145/3158119}
\showDOI{\tempurl}


\bibitem[Chatterjee et~al\mbox{.}(2019)]%
        {ChatterjeePT19}
\bibfield{author}{\bibinfo{person}{Krishnendu Chatterjee},
  \bibinfo{person}{Andreas Pavlogiannis}, {and} \bibinfo{person}{Viktor
  Toman}.} \bibinfo{year}{2019}\natexlab{}.
\newblock \showarticletitle{Value-centric Dynamic Partial Order Reduction}.
\newblock \bibinfo{journal}{\emph{Proc. {ACM} Program. Lang.}}
  \bibinfo{volume}{3}, \bibinfo{number}{{OOPSLA}} (\bibinfo{year}{2019}),
  \bibinfo{pages}{124:1--124:29}.
\newblock
\urldef\tempurl%
\url{https://doi.org/10.1145/3360550}
\showDOI{\tempurl}


\bibitem[Desai et~al\mbox{.}(2014)]%
        {Desai14}
\bibfield{author}{\bibinfo{person}{Ankush Desai}, \bibinfo{person}{Pranav
  Garg}, {and} \bibinfo{person}{P. Madhusudan}.}
  \bibinfo{year}{2014}\natexlab{}.
\newblock \showarticletitle{Natural proofs for asynchronous programs using
  almost-synchronous reductions}. In \bibinfo{booktitle}{\emph{Proceedings of
  the 2014 {ACM} International Conference on Object Oriented Programming
  Systems Languages {\&} Applications, {OOPSLA} 2014, part of {SPLASH} 2014,
  Portland, OR, USA, October 20-24, 2014}}. \bibinfo{pages}{709--725}.
\newblock


\bibitem[Diekert and Rozenberg(1995)]%
        {diekert1995book}
\bibfield{editor}{\bibinfo{person}{Volker Diekert} {and}
  \bibinfo{person}{Grzegorz Rozenberg}} (Eds.).
  \bibinfo{year}{1995}\natexlab{}.
\newblock \bibinfo{booktitle}{\emph{The Book of Traces}}.
\newblock \bibinfo{publisher}{World Scientific}.
\newblock
\showISBNx{978-981-02-2058-7}


\bibitem[Dr\u{a}goi et~al\mbox{.}(2016)]%
        {psynch}
\bibfield{author}{\bibinfo{person}{Cezara Dr\u{a}goi},
  \bibinfo{person}{Thomas~A. Henzinger}, {and} \bibinfo{person}{Damien
  Zufferey}.} \bibinfo{year}{2016}\natexlab{}.
\newblock \showarticletitle{PSync: A Partially Synchronous Language for
  Fault-Tolerant Distributed Algorithms}. In
  \bibinfo{booktitle}{\emph{Proceedings of the 43rd Annual ACM SIGPLAN-SIGACT
  Symposium on Principles of Programming Languages}} (St. Petersburg, FL, USA)
  \emph{(\bibinfo{series}{POPL '16})}. \bibinfo{publisher}{Association for
  Computing Machinery}, \bibinfo{address}{New York, NY, USA},
  \bibinfo{pages}{400–415}.
\newblock
\showISBNx{9781450335492}
\urldef\tempurl%
\url{https://doi.org/10.1145/2837614.2837650}
\showDOI{\tempurl}


\bibitem[Elmas et~al\mbox{.}(2007)]%
        {Elmas07}
\bibfield{author}{\bibinfo{person}{Tayfun Elmas}, \bibinfo{person}{Shaz
  Qadeer}, {and} \bibinfo{person}{Serdar Tasiran}.}
  \bibinfo{year}{2007}\natexlab{}.
\newblock \showarticletitle{Goldilocks: A Race and Transaction-aware Java
  Runtime}. In \bibinfo{booktitle}{\emph{Proceedings of the 28th ACM SIGPLAN
  Conference on Programming Language Design and Implementation}} (San Diego,
  California, USA) \emph{(\bibinfo{series}{PLDI '07})}.
  \bibinfo{publisher}{ACM}, \bibinfo{address}{New York, NY, USA},
  \bibinfo{pages}{245--255}.
\newblock
\showISBNx{978-1-59593-633-2}
\urldef\tempurl%
\url{https://doi.org/10.1145/1250734.1250762}
\showDOI{\tempurl}


\bibitem[Farzan(2023)]%
        {lics2023}
\bibfield{author}{\bibinfo{person}{Azadeh Farzan}.}
  \bibinfo{year}{2023}\natexlab{}.
\newblock \showarticletitle{Commutativity in Automated Verification}. In
  \bibinfo{booktitle}{\emph{{LICS}}}. \bibinfo{pages}{1--7}.
\newblock
\urldef\tempurl%
\url{https://doi.org/10.1109/LICS56636.2023.10175734}
\showDOI{\tempurl}


\bibitem[Farzan et~al\mbox{.}(2022)]%
        {mypldi22}
\bibfield{author}{\bibinfo{person}{Azadeh Farzan}, \bibinfo{person}{Dominik
  Klumpp}, {and} \bibinfo{person}{Andreas Podelski}.}
  \bibinfo{year}{2022}\natexlab{}.
\newblock \showarticletitle{Sound sequentialization for concurrent program
  verification}. In \bibinfo{booktitle}{\emph{{PLDI} '22: 43rd {ACM} {SIGPLAN}
  International Conference on Programming Language Design and Implementation,
  San Diego, CA, USA, June 13 - 17, 2022}},
  \bibfield{editor}{\bibinfo{person}{Ranjit Jhala} {and} \bibinfo{person}{Isil
  Dillig}} (Eds.). \bibinfo{publisher}{{ACM}}, \bibinfo{pages}{506--521}.
\newblock
\urldef\tempurl%
\url{https://doi.org/10.1145/3519939.3523727}
\showDOI{\tempurl}


\bibitem[Farzan and Madhusudan(2006)]%
        {Farzan2006}
\bibfield{author}{\bibinfo{person}{Azadeh Farzan} {and} \bibinfo{person}{P.
  Madhusudan}.} \bibinfo{year}{2006}\natexlab{}.
\newblock \showarticletitle{Causal Atomicity}. In
  \bibinfo{booktitle}{\emph{Computer Aided Verification}},
  \bibfield{editor}{\bibinfo{person}{Thomas Ball} {and}
  \bibinfo{person}{Robert~B. Jones}} (Eds.). \bibinfo{publisher}{Springer
  Berlin Heidelberg}, \bibinfo{address}{Berlin, Heidelberg},
  \bibinfo{pages}{315--328}.
\newblock
\showISBNx{978-3-540-37411-4}


\bibitem[Farzan and Madhusudan(2008)]%
        {FM08CAV}
\bibfield{author}{\bibinfo{person}{Azadeh Farzan} {and} \bibinfo{person}{P.
  Madhusudan}.} \bibinfo{year}{2008}\natexlab{}.
\newblock \showarticletitle{Monitoring Atomicity in Concurrent Programs}. In
  \bibinfo{booktitle}{\emph{Computer Aided Verification}},
  \bibfield{editor}{\bibinfo{person}{Aarti Gupta} {and} \bibinfo{person}{Sharad
  Malik}} (Eds.). \bibinfo{publisher}{Springer Berlin Heidelberg},
  \bibinfo{address}{Berlin, Heidelberg}, \bibinfo{pages}{52--65}.
\newblock
\showISBNx{978-3-540-70545-1}


\bibitem[Farzan et~al\mbox{.}(2009)]%
        {Farzan09}
\bibfield{author}{\bibinfo{person}{Azadeh Farzan}, \bibinfo{person}{P.
  Madhusudan}, {and} \bibinfo{person}{Francesco Sorrentino}.}
  \bibinfo{year}{2009}\natexlab{}.
\newblock \showarticletitle{Meta-analysis for Atomicity Violations Under Nested
  Locking}. In \bibinfo{booktitle}{\emph{Proceedings of the 21st International
  Conference on Computer Aided Verification}} (Grenoble, France)
  \emph{(\bibinfo{series}{CAV '09})}. \bibinfo{publisher}{Springer-Verlag},
  \bibinfo{address}{Berlin, Heidelberg}, \bibinfo{pages}{248--262}.
\newblock
\showISBNx{978-3-642-02657-7}
\urldef\tempurl%
\url{https://doi.org/10.1007/978-3-642-02658-4_21}
\showDOI{\tempurl}


\bibitem[Farzan and Vandikas(2019)]%
        {mycav19}
\bibfield{author}{\bibinfo{person}{Azadeh Farzan} {and}
  \bibinfo{person}{Anthony Vandikas}.} \bibinfo{year}{2019}\natexlab{}.
\newblock \showarticletitle{Automated Hypersafety Verification}. In
  \bibinfo{booktitle}{\emph{Computer Aided Verification - 31st International
  Conference, {CAV} 2019, New York City, NY, USA, July 15-18, 2019,
  Proceedings, Part {I}}} \emph{(\bibinfo{series}{Lecture Notes in Computer
  Science}, Vol.~\bibinfo{volume}{11561})},
  \bibfield{editor}{\bibinfo{person}{Isil Dillig} {and} \bibinfo{person}{Serdar
  Tasiran}} (Eds.). \bibinfo{publisher}{Springer}, \bibinfo{pages}{200--218}.
\newblock
\urldef\tempurl%
\url{https://doi.org/10.1007/978-3-030-25540-4\_11}
\showDOI{\tempurl}


\bibitem[Farzan and Vandikas(2020)]%
        {mypopl20}
\bibfield{author}{\bibinfo{person}{Azadeh Farzan} {and}
  \bibinfo{person}{Anthony Vandikas}.} \bibinfo{year}{2020}\natexlab{}.
\newblock \showarticletitle{Reductions for safety proofs}.
\newblock \bibinfo{journal}{\emph{Proc. {ACM} Program. Lang.}}
  \bibinfo{volume}{4}, \bibinfo{number}{{POPL}} (\bibinfo{year}{2020}),
  \bibinfo{pages}{13:1--13:28}.
\newblock
\urldef\tempurl%
\url{https://doi.org/10.1145/3371081}
\showDOI{\tempurl}


\bibitem[Flanagan and Freund(2009)]%
        {Flanagan09}
\bibfield{author}{\bibinfo{person}{Cormac Flanagan} {and}
  \bibinfo{person}{Stephen~N. Freund}.} \bibinfo{year}{2009}\natexlab{}.
\newblock \showarticletitle{FastTrack: Efficient and Precise Dynamic Race
  Detection}. In \bibinfo{booktitle}{\emph{Proceedings of the 30th ACM SIGPLAN
  Conference on Programming Language Design and Implementation}} (Dublin,
  Ireland) \emph{(\bibinfo{series}{PLDI '09})}. \bibinfo{publisher}{ACM},
  \bibinfo{address}{New York, NY, USA}, \bibinfo{pages}{121--133}.
\newblock
\showISBNx{978-1-60558-392-1}
\urldef\tempurl%
\url{https://doi.org/10.1145/1542476.1542490}
\showDOI{\tempurl}


\bibitem[Flanagan et~al\mbox{.}(2008)]%
        {Flanagan2008}
\bibfield{author}{\bibinfo{person}{Cormac Flanagan},
  \bibinfo{person}{Stephen~N. Freund}, {and} \bibinfo{person}{Jaeheon Yi}.}
  \bibinfo{year}{2008}\natexlab{}.
\newblock \showarticletitle{Velodrome: {A} Sound and Complete Dynamic Atomicity
  Checker for Multithreaded Programs}. In \bibinfo{booktitle}{\emph{Proceedings
  of the 29th ACM SIGPLAN Conference on Programming Language Design and
  Implementation}} (Tucson, AZ, USA) \emph{(\bibinfo{series}{PLDI '08})}.
  \bibinfo{publisher}{ACM}, \bibinfo{address}{New York, NY, USA},
  \bibinfo{pages}{293--303}.
\newblock
\showISBNx{978-1-59593-860-2}
\urldef\tempurl%
\url{https://doi.org/10.1145/1375581.1375618}
\showDOI{\tempurl}


\bibitem[Flanagan and Godefroid(2005)]%
        {Flanagan2005}
\bibfield{author}{\bibinfo{person}{Cormac Flanagan} {and}
  \bibinfo{person}{Patrice Godefroid}.} \bibinfo{year}{2005}\natexlab{}.
\newblock \showarticletitle{Dynamic Partial-Order Reduction for Model Checking
  Software}. In \bibinfo{booktitle}{\emph{Proceedings of the 32nd ACM
  SIGPLAN-SIGACT Symposium on Principles of Programming Languages}} (Long
  Beach, California, USA) \emph{(\bibinfo{series}{POPL '05})}.
  \bibinfo{publisher}{Association for Computing Machinery},
  \bibinfo{address}{New York, NY, USA}, \bibinfo{pages}{110–121}.
\newblock
\showISBNx{158113830X}
\urldef\tempurl%
\url{https://doi.org/10.1145/1040305.1040315}
\showDOI{\tempurl}


\bibitem[Genest et~al\mbox{.}(2007)]%
        {Genest07}
\bibfield{author}{\bibinfo{person}{Blaise Genest}, \bibinfo{person}{Dietrich
  Kuske}, {and} \bibinfo{person}{Anca Muscholl}.}
  \bibinfo{year}{2007}\natexlab{}.
\newblock \showarticletitle{On Communicating Automata with Bounded Channels}.
\newblock \bibinfo{journal}{\emph{Fundam. Inform.}} \bibinfo{volume}{80},
  \bibinfo{number}{1-3} (\bibinfo{year}{2007}), \bibinfo{pages}{147--167}.
\newblock


\bibitem[Gibbons and Korach(1994)]%
        {Gibbons1994}
\bibfield{author}{\bibinfo{person}{Phillip~B. Gibbons} {and}
  \bibinfo{person}{Ephraim Korach}.} \bibinfo{year}{1994}\natexlab{}.
\newblock \showarticletitle{On Testing Cache-Coherent Shared Memories}. In
  \bibinfo{booktitle}{\emph{Proceedings of the Sixth Annual ACM Symposium on
  Parallel Algorithms and Architectures}} (Cape May, New Jersey, USA)
  \emph{(\bibinfo{series}{SPAA '94})}. \bibinfo{publisher}{Association for
  Computing Machinery}, \bibinfo{address}{New York, NY, USA},
  \bibinfo{pages}{177–188}.
\newblock
\showISBNx{0897916719}
\urldef\tempurl%
\url{https://doi.org/10.1145/181014.181328}
\showDOI{\tempurl}


\bibitem[Gibbons and Korach(1997)]%
        {Gibbons1997}
\bibfield{author}{\bibinfo{person}{Phillip~B. Gibbons} {and}
  \bibinfo{person}{Ephraim Korach}.} \bibinfo{year}{1997}\natexlab{}.
\newblock \showarticletitle{Testing Shared Memories}.
\newblock \bibinfo{journal}{\emph{SIAM J. Comput.}} \bibinfo{volume}{26},
  \bibinfo{number}{4} (\bibinfo{year}{1997}), \bibinfo{pages}{1208--1244}.
\newblock
\urldef\tempurl%
\url{https://doi.org/10.1137/S0097539794279614}
\showDOI{\tempurl}
\showeprint{https://doi.org/10.1137/S0097539794279614}


\bibitem[Godefroid(1997)]%
        {Godefroid1997}
\bibfield{author}{\bibinfo{person}{Patrice Godefroid}.}
  \bibinfo{year}{1997}\natexlab{}.
\newblock \showarticletitle{Model Checking for Programming Languages Using
  VeriSoft}. In \bibinfo{booktitle}{\emph{Proceedings of the 24th ACM
  SIGPLAN-SIGACT Symposium on Principles of Programming Languages}} (Paris,
  France) \emph{(\bibinfo{series}{POPL '97})}. \bibinfo{publisher}{Association
  for Computing Machinery}, \bibinfo{address}{New York, NY, USA},
  \bibinfo{pages}{174–186}.
\newblock
\showISBNx{0897918533}
\urldef\tempurl%
\url{https://doi.org/10.1145/263699.263717}
\showDOI{\tempurl}


\bibitem[Godefroid and Pirottin(1993)]%
        {GodefroidP93}
\bibfield{author}{\bibinfo{person}{Patrice Godefroid} {and}
  \bibinfo{person}{Didier Pirottin}.} \bibinfo{year}{1993}\natexlab{}.
\newblock \showarticletitle{Refining Dependencies Improves Partial-Order
  Verification Methods (Extended Abstract)}. In
  \bibinfo{booktitle}{\emph{Computer Aided Verification, 5th International
  Conference, {CAV} '93, Elounda, Greece, June 28 - July 1, 1993,
  Proceedings}}. \bibinfo{pages}{438--449}.
\newblock


\bibitem[Hack(1976)]%
        {hack1976petri}
\bibfield{author}{\bibinfo{person}{Michel Hack}.}
  \bibinfo{year}{1976}\natexlab{}.
\newblock \bibinfo{booktitle}{\emph{Petri net language}}.
\newblock \bibinfo{publisher}{Massachusetts Institute of Technology}.
\newblock


\bibitem[Hoare(1978)]%
        {hoare1978communicating}
\bibfield{author}{\bibinfo{person}{Charles Antony~Richard Hoare}.}
  \bibinfo{year}{1978}\natexlab{}.
\newblock \showarticletitle{Communicating sequential processes}.
\newblock \bibinfo{journal}{\emph{Commun. ACM}} \bibinfo{volume}{21},
  \bibinfo{number}{8} (\bibinfo{year}{1978}), \bibinfo{pages}{666--677}.
\newblock


\bibitem[Huang et~al\mbox{.}(2014)]%
        {Huang14}
\bibfield{author}{\bibinfo{person}{Jeff Huang}, \bibinfo{person}{Patrick~O'Neil
  Meredith}, {and} \bibinfo{person}{Grigore Rosu}.}
  \bibinfo{year}{2014}\natexlab{}.
\newblock \showarticletitle{Maximal Sound Predictive Race Detection with
  Control Flow Abstraction}. In \bibinfo{booktitle}{\emph{Proceedings of the
  35th ACM SIGPLAN Conference on Programming Language Design and
  Implementation}} (Edinburgh, United Kingdom) \emph{(\bibinfo{series}{PLDI
  '14})}. \bibinfo{publisher}{ACM}, \bibinfo{address}{New York, NY, USA},
  \bibinfo{pages}{337--348}.
\newblock
\showISBNx{978-1-4503-2784-8}
\urldef\tempurl%
\url{https://doi.org/10.1145/2594291.2594315}
\showDOI{\tempurl}


\bibitem[Impagliazzo and Paturi(2001)]%
        {ImpagliazzoP01}
\bibfield{author}{\bibinfo{person}{Russell Impagliazzo} {and}
  \bibinfo{person}{Ramamohan Paturi}.} \bibinfo{year}{2001}\natexlab{}.
\newblock \showarticletitle{On the complexity of k-SAT}.
\newblock \bibinfo{journal}{\emph{J. Comput. System Sci.}}
  \bibinfo{volume}{62}, \bibinfo{number}{2} (\bibinfo{year}{2001}),
  \bibinfo{pages}{367--375}.
\newblock


\bibitem[Itzkovitz et~al\mbox{.}(1999)]%
        {djit1999}
\bibfield{author}{\bibinfo{person}{Ayal Itzkovitz}, \bibinfo{person}{Assaf
  Schuster}, {and} \bibinfo{person}{Oren Zeev-Ben-Mordehai}.}
  \bibinfo{year}{1999}\natexlab{}.
\newblock \showarticletitle{Toward Integration of Data Race Detection in DSM
  Systems}.
\newblock \bibinfo{journal}{\emph{J. Parallel Distrib. Comput.}}
  \bibinfo{volume}{59}, \bibinfo{number}{2} (\bibinfo{date}{Nov.}
  \bibinfo{year}{1999}), \bibinfo{pages}{180--203}.
\newblock
\showISSN{0743-7315}
\urldef\tempurl%
\url{https://doi.org/10.1006/jpdc.1999.1574}
\showDOI{\tempurl}


\bibitem[Kalhauge and Palsberg(2018)]%
        {Kalhauge2018}
\bibfield{author}{\bibinfo{person}{Christian~Gram Kalhauge} {and}
  \bibinfo{person}{Jens Palsberg}.} \bibinfo{year}{2018}\natexlab{}.
\newblock \showarticletitle{Sound Deadlock Prediction}.
\newblock \bibinfo{journal}{\emph{Proc. ACM Program. Lang.}}
  \bibinfo{volume}{2}, \bibinfo{number}{OOPSLA}, Article
  \bibinfo{articleno}{146} (\bibinfo{date}{Oct.} \bibinfo{year}{2018}),
  \bibinfo{numpages}{29}~pages.
\newblock
\urldef\tempurl%
\url{https://doi.org/10.1145/3276516}
\showDOI{\tempurl}


\bibitem[Katz and Peled(1992)]%
        {KatzP92}
\bibfield{author}{\bibinfo{person}{Shmuel Katz} {and} \bibinfo{person}{Doron~A.
  Peled}.} \bibinfo{year}{1992}\natexlab{}.
\newblock \showarticletitle{Defining Conditional Independence Using Collapses}.
\newblock \bibinfo{journal}{\emph{Theor. Comput. Sci.}} \bibinfo{volume}{101},
  \bibinfo{number}{2} (\bibinfo{year}{1992}), \bibinfo{pages}{337--359}.
\newblock


\bibitem[Kini et~al\mbox{.}(2017)]%
        {Kini17}
\bibfield{author}{\bibinfo{person}{Dileep Kini}, \bibinfo{person}{Umang
  Mathur}, {and} \bibinfo{person}{Mahesh Viswanathan}.}
  \bibinfo{year}{2017}\natexlab{}.
\newblock \showarticletitle{Dynamic Race Prediction in Linear Time}. In
  \bibinfo{booktitle}{\emph{Proceedings of the 38th ACM SIGPLAN Conference on
  Programming Language Design and Implementation}} (Barcelona, Spain)
  \emph{(\bibinfo{series}{PLDI 2017})}. \bibinfo{publisher}{ACM},
  \bibinfo{address}{New York, NY, USA}, \bibinfo{pages}{157--170}.
\newblock
\showISBNx{978-1-4503-4988-8}
\urldef\tempurl%
\url{https://doi.org/10.1145/3062341.3062374}
\showDOI{\tempurl}


\bibitem[Kleijn et~al\mbox{.}(1998)]%
        {KLEIJN199898}
\bibfield{author}{\bibinfo{person}{H.C.M. Kleijn}, \bibinfo{person}{R. Morin},
  {and} \bibinfo{person}{B. Rozoy}.} \bibinfo{year}{1998}\natexlab{}.
\newblock \showarticletitle{Event Structures for Local Traces}.
\newblock \bibinfo{journal}{\emph{Electronic Notes in Theoretical Computer
  Science}} \bibinfo{volume}{16}, \bibinfo{number}{2} (\bibinfo{year}{1998}),
  \bibinfo{pages}{98--113}.
\newblock
\showISSN{1571-0661}
\urldef\tempurl%
\url{https://doi.org/10.1016/S1571-0661(04)00120-3}
\showDOI{\tempurl}
\newblock
\shownote{EXPRESS '98, Fifth International Workshop on Expressiveness in
  Concurrency (Satellite Workshop of CONCUR '98)}.


\bibitem[Kokologiannakis et~al\mbox{.}(2022)]%
        {Kokologiannakis2022}
\bibfield{author}{\bibinfo{person}{Michalis Kokologiannakis},
  \bibinfo{person}{Iason Marmanis}, \bibinfo{person}{Vladimir Gladstein}, {and}
  \bibinfo{person}{Viktor Vafeiadis}.} \bibinfo{year}{2022}\natexlab{}.
\newblock \showarticletitle{Truly Stateless, Optimal Dynamic Partial Order
  Reduction}.
\newblock \bibinfo{journal}{\emph{Proc. ACM Program. Lang.}}
  \bibinfo{volume}{6}, \bibinfo{number}{POPL}, Article \bibinfo{articleno}{49}
  (\bibinfo{date}{jan} \bibinfo{year}{2022}), \bibinfo{numpages}{28}~pages.
\newblock
\urldef\tempurl%
\url{https://doi.org/10.1145/3498711}
\showDOI{\tempurl}


\bibitem[Kokologiannakis et~al\mbox{.}(2019)]%
        {Kokologiannakis2019}
\bibfield{author}{\bibinfo{person}{Michalis Kokologiannakis},
  \bibinfo{person}{Azalea Raad}, {and} \bibinfo{person}{Viktor Vafeiadis}.}
  \bibinfo{year}{2019}\natexlab{}.
\newblock \showarticletitle{Model Checking for Weakly Consistent Libraries}. In
  \bibinfo{booktitle}{\emph{Proceedings of the 40th ACM SIGPLAN Conference on
  Programming Language Design and Implementation}} (Phoenix, AZ, USA)
  \emph{(\bibinfo{series}{PLDI 2019})}. \bibinfo{publisher}{Association for
  Computing Machinery}, \bibinfo{address}{New York, NY, USA},
  \bibinfo{pages}{96–110}.
\newblock
\showISBNx{9781450367127}
\urldef\tempurl%
\url{https://doi.org/10.1145/3314221.3314609}
\showDOI{\tempurl}


\bibitem[Kulkarni et~al\mbox{.}(2021)]%
        {Kulkarni2021}
\bibfield{author}{\bibinfo{person}{Rucha Kulkarni}, \bibinfo{person}{Umang
  Mathur}, {and} \bibinfo{person}{Andreas Pavlogiannis}.}
  \bibinfo{year}{2021}\natexlab{}.
\newblock \showarticletitle{{Dynamic Data-Race Detection Through the
  Fine-Grained Lens}}. In \bibinfo{booktitle}{\emph{32nd International
  Conference on Concurrency Theory (CONCUR 2021)}}
  \emph{(\bibinfo{series}{Leibniz International Proceedings in Informatics
  (LIPIcs)}, Vol.~\bibinfo{volume}{203})},
  \bibfield{editor}{\bibinfo{person}{Serge Haddad} {and}
  \bibinfo{person}{Daniele Varacca}} (Eds.). \bibinfo{publisher}{Schloss
  Dagstuhl -- Leibniz-Zentrum f{\"u}r Informatik}, \bibinfo{address}{Dagstuhl,
  Germany}, \bibinfo{pages}{16:1--16:23}.
\newblock
\showISBNx{978-3-95977-203-7}
\showISSN{1868-8969}
\urldef\tempurl%
\url{https://doi.org/10.4230/LIPIcs.CONCUR.2021.16}
\showDOI{\tempurl}


\bibitem[Maarand and Uustalu(2019)]%
        {gtrace-types}
\bibfield{author}{\bibinfo{person}{Hendrik Maarand} {and}
  \bibinfo{person}{Tarmo Uustalu}.} \bibinfo{year}{2019}\natexlab{}.
\newblock \showarticletitle{Certified normalization of generalized traces}.
\newblock \bibinfo{journal}{\emph{Innovations in Systems and Software
  Engineering}} (\bibinfo{year}{2019}).
\newblock


\bibitem[Mathur et~al\mbox{.}(2020)]%
        {Mathur2020b}
\bibfield{author}{\bibinfo{person}{Umang Mathur}, \bibinfo{person}{Andreas
  Pavlogiannis}, {and} \bibinfo{person}{Mahesh Viswanathan}.}
  \bibinfo{year}{2020}\natexlab{}.
\newblock \showarticletitle{The Complexity of Dynamic Data Race Prediction}. In
  \bibinfo{booktitle}{\emph{Proceedings of the 35th Annual ACM/IEEE Symposium
  on Logic in Computer Science}} (Saarbr\"{u}cken, Germany)
  \emph{(\bibinfo{series}{LICS ’20})}. \bibinfo{publisher}{Association for
  Computing Machinery}, \bibinfo{address}{New York, NY, USA},
  \bibinfo{pages}{713–727}.
\newblock
\showISBNx{9781450371049}
\urldef\tempurl%
\url{https://doi.org/10.1145/3373718.3394783}
\showDOI{\tempurl}


\bibitem[Mathur et~al\mbox{.}(2021)]%
        {Mathur21}
\bibfield{author}{\bibinfo{person}{Umang Mathur}, \bibinfo{person}{Andreas
  Pavlogiannis}, {and} \bibinfo{person}{Mahesh Viswanathan}.}
  \bibinfo{year}{2021}\natexlab{}.
\newblock \showarticletitle{Optimal Prediction of Synchronization-Preserving
  Races}.
\newblock \bibinfo{journal}{\emph{Proc. ACM Program. Lang.}}
  \bibinfo{volume}{5}, \bibinfo{number}{POPL}, Article \bibinfo{articleno}{36}
  (\bibinfo{date}{Jan.} \bibinfo{year}{2021}), \bibinfo{numpages}{29}~pages.
\newblock
\urldef\tempurl%
\url{https://doi.org/10.1145/3434317}
\showDOI{\tempurl}


\bibitem[Mathur and Viswanathan(2020)]%
        {Mathur2020}
\bibfield{author}{\bibinfo{person}{Umang Mathur} {and} \bibinfo{person}{Mahesh
  Viswanathan}.} \bibinfo{year}{2020}\natexlab{}.
\newblock \showarticletitle{Atomicity Checking in Linear Time Using Vector
  Clocks}. In \bibinfo{booktitle}{\emph{Proceedings of the Twenty-Fifth
  International Conference on Architectural Support for Programming Languages
  and Operating Systems}} (Lausanne, Switzerland)
  \emph{(\bibinfo{series}{ASPLOS ’20})}. \bibinfo{publisher}{Association for
  Computing Machinery}, \bibinfo{address}{New York, NY, USA},
  \bibinfo{pages}{183–199}.
\newblock
\showISBNx{9781450371025}
\urldef\tempurl%
\url{https://doi.org/10.1145/3373376.3378475}
\showDOI{\tempurl}


\bibitem[Mazurkiewicz(1987)]%
        {Mazurkiewicz87}
\bibfield{author}{\bibinfo{person}{A Mazurkiewicz}.}
  \bibinfo{year}{1987}\natexlab{}.
\newblock \showarticletitle{Trace Theory}. In
  \bibinfo{booktitle}{\emph{Advances in Petri Nets 1986, Part II on Petri Nets:
  Applications and Relationships to Other Models of Concurrency}}.
  \bibinfo{publisher}{Springer-Verlag New York, Inc.},
  \bibinfo{pages}{279--324}.
\newblock


\bibitem[Milner(1980)]%
        {milner1980calculus}
\bibfield{author}{\bibinfo{person}{Robin Milner}.}
  \bibinfo{year}{1980}\natexlab{}.
\newblock \bibinfo{booktitle}{\emph{A calculus of communicating systems}}.
\newblock \bibinfo{publisher}{Springer}.
\newblock


\bibitem[Pavlogiannis(2019)]%
        {Pavlogiannis2020}
\bibfield{author}{\bibinfo{person}{Andreas Pavlogiannis}.}
  \bibinfo{year}{2019}\natexlab{}.
\newblock \showarticletitle{Fast, Sound, and Effectively Complete Dynamic Race
  Prediction}.
\newblock \bibinfo{journal}{\emph{Proc. ACM Program. Lang.}}
  \bibinfo{volume}{4}, \bibinfo{number}{POPL}, Article \bibinfo{articleno}{17}
  (\bibinfo{date}{Dec.} \bibinfo{year}{2019}), \bibinfo{numpages}{29}~pages.
\newblock
\urldef\tempurl%
\url{https://doi.org/10.1145/3371085}
\showDOI{\tempurl}


\bibitem[Roemer et~al\mbox{.}(2018)]%
        {Roemer18}
\bibfield{author}{\bibinfo{person}{Jake Roemer}, \bibinfo{person}{Kaan
  Gen\c{c}}, {and} \bibinfo{person}{Michael~D. Bond}.}
  \bibinfo{year}{2018}\natexlab{}.
\newblock \showarticletitle{High-coverage, Unbounded Sound Predictive Race
  Detection}. In \bibinfo{booktitle}{\emph{Proceedings of the 39th ACM SIGPLAN
  Conference on Programming Language Design and Implementation}} (Philadelphia,
  PA, USA) \emph{(\bibinfo{series}{PLDI 2018})}. \bibinfo{publisher}{ACM},
  \bibinfo{address}{New York, NY, USA}, \bibinfo{pages}{374--389}.
\newblock
\showISBNx{978-1-4503-5698-5}
\urldef\tempurl%
\url{https://doi.org/10.1145/3192366.3192385}
\showDOI{\tempurl}


\bibitem[Ro{\c{s}}u and Viswanathan(2003)]%
        {RV03RTA}
\bibfield{author}{\bibinfo{person}{Grigore Ro{\c{s}}u} {and}
  \bibinfo{person}{Mahesh Viswanathan}.} \bibinfo{year}{2003}\natexlab{}.
\newblock \showarticletitle{Testing Extended Regular Language Membership
  Incrementally by Rewriting}. In \bibinfo{booktitle}{\emph{Rewriting
  Techniques and Applications}}, \bibfield{editor}{\bibinfo{person}{Robert
  Nieuwenhuis}} (Ed.). \bibinfo{publisher}{Springer Berlin Heidelberg},
  \bibinfo{address}{Berlin, Heidelberg}, \bibinfo{pages}{499--514}.
\newblock
\showISBNx{978-3-540-44881-5}


\bibitem[Said et~al\mbox{.}(2011)]%
        {Said11}
\bibfield{author}{\bibinfo{person}{Mahmoud Said}, \bibinfo{person}{Chao Wang},
  \bibinfo{person}{Zijiang Yang}, {and} \bibinfo{person}{Karem Sakallah}.}
  \bibinfo{year}{2011}\natexlab{}.
\newblock \showarticletitle{Generating Data Race Witnesses by an SMT-based
  Analysis}. In \bibinfo{booktitle}{\emph{Proceedings of the Third
  International Conference on NASA Formal Methods}} (Pasadena, CA)
  \emph{(\bibinfo{series}{NFM'11})}. \bibinfo{publisher}{Springer-Verlag},
  \bibinfo{address}{Berlin, Heidelberg}, \bibinfo{pages}{313--327}.
\newblock
\showISBNx{978-3-642-20397-8}
\urldef\tempurl%
\url{http://dl.acm.org/citation.cfm?id=1986308.1986334}
\showURL{%
\tempurl}


\bibitem[Samak and Ramanathan(2014)]%
        {Samak2014}
\bibfield{author}{\bibinfo{person}{Malavika Samak} {and}
  \bibinfo{person}{Murali~Krishna Ramanathan}.}
  \bibinfo{year}{2014}\natexlab{}.
\newblock \showarticletitle{Trace Driven Dynamic Deadlock Detection and
  Reproduction}. In \bibinfo{booktitle}{\emph{Proceedings of the 19th ACM
  SIGPLAN Symposium on Principles and Practice of Parallel Programming}}
  (Orlando, Florida, USA) \emph{(\bibinfo{series}{PPoPP '14})}.
  \bibinfo{publisher}{Association for Computing Machinery},
  \bibinfo{address}{New York, NY, USA}, \bibinfo{pages}{29–42}.
\newblock
\showISBNx{9781450326568}
\urldef\tempurl%
\url{https://doi.org/10.1145/2555243.2555262}
\showDOI{\tempurl}


\bibitem[Sassone et~al\mbox{.}(1993)]%
        {gtrace}
\bibfield{author}{\bibinfo{person}{Vladimiro Sassone}, \bibinfo{person}{Mogens
  Nielsen}, {and} \bibinfo{person}{Glynn Winskel}.}
  \bibinfo{year}{1993}\natexlab{}.
\newblock \showarticletitle{Deterministic behavioural models for concurrency}.
  In \bibinfo{booktitle}{\emph{Mathematical Foundations of Computer Science
  1993}}, \bibfield{editor}{\bibinfo{person}{Andrzej~M. Borzyszkowski} {and}
  \bibinfo{person}{Stefan Soko{\l}owski}} (Eds.).
\newblock


\bibitem[Sen et~al\mbox{.}(2005)]%
        {Koushik05}
\bibfield{author}{\bibinfo{person}{Koushik Sen}, \bibinfo{person}{Grigore
  Ro{\c{s}}u}, {and} \bibinfo{person}{Gul Agha}.}
  \bibinfo{year}{2005}\natexlab{}.
\newblock \showarticletitle{Detecting Errors in Multithreaded Programs by
  Generalized Predictive Analysis of Executions}. In
  \bibinfo{booktitle}{\emph{Formal Methods for Open Object-Based Distributed
  Systems}}, \bibfield{editor}{\bibinfo{person}{Martin Steffen} {and}
  \bibinfo{person}{Gianluigi Zavattaro}} (Eds.). \bibinfo{publisher}{Springer
  Berlin Heidelberg}, \bibinfo{address}{Berlin, Heidelberg},
  \bibinfo{pages}{211--226}.
\newblock
\showISBNx{978-3-540-31556-8}


\bibitem[Smaragdakis et~al\mbox{.}(2012)]%
        {Smaragdakis12}
\bibfield{author}{\bibinfo{person}{Yannis Smaragdakis}, \bibinfo{person}{Jacob
  Evans}, \bibinfo{person}{Caitlin Sadowski}, \bibinfo{person}{Jaeheon Yi},
  {and} \bibinfo{person}{Cormac Flanagan}.} \bibinfo{year}{2012}\natexlab{}.
\newblock \showarticletitle{Sound Predictive Race Detection in Polynomial
  Time}. In \bibinfo{booktitle}{\emph{Proceedings of the 39th Annual ACM
  SIGPLAN-SIGACT Symposium on Principles of Programming Languages}}
  (Philadelphia, PA, USA) \emph{(\bibinfo{series}{POPL '12})}.
  \bibinfo{publisher}{ACM}, \bibinfo{address}{New York, NY, USA},
  \bibinfo{pages}{387--400}.
\newblock
\showISBNx{978-1-4503-1083-3}
\urldef\tempurl%
\url{https://doi.org/10.1145/2103656.2103702}
\showDOI{\tempurl}


\bibitem[Sorrentino et~al\mbox{.}(2010)]%
        {Sorrentino10}
\bibfield{author}{\bibinfo{person}{Francesco Sorrentino},
  \bibinfo{person}{Azadeh Farzan}, {and} \bibinfo{person}{P. Madhusudan}.}
  \bibinfo{year}{2010}\natexlab{}.
\newblock \showarticletitle{PENELOPE: Weaving Threads to Expose Atomicity
  Violations}. In \bibinfo{booktitle}{\emph{Proceedings of the Eighteenth ACM
  SIGSOFT International Symposium on Foundations of Software Engineering}}
  (Santa Fe, New Mexico, USA) \emph{(\bibinfo{series}{FSE '10})}.
  \bibinfo{publisher}{ACM}, \bibinfo{address}{New York, NY, USA},
  \bibinfo{pages}{37--46}.
\newblock
\showISBNx{978-1-60558-791-2}
\urldef\tempurl%
\url{https://doi.org/10.1145/1882291.1882300}
\showDOI{\tempurl}


\bibitem[Tun{\c{c}} et~al\mbox{.}(2023)]%
        {Tunc2023deadlock}
\bibfield{author}{\bibinfo{person}{H{\"{u}}nkar~Can Tun{\c{c}}},
  \bibinfo{person}{Umang Mathur}, \bibinfo{person}{Andreas Pavlogiannis}, {and}
  \bibinfo{person}{Mahesh Viswanathan}.} \bibinfo{year}{2023}\natexlab{}.
\newblock \showarticletitle{Sound Dynamic Deadlock Prediction in Linear Time}.
\newblock \bibinfo{journal}{\emph{Proc. {ACM} Program. Lang.}}
  \bibinfo{volume}{7}, \bibinfo{number}{{PLDI}} (\bibinfo{year}{2023}),
  \bibinfo{pages}{1733--1758}.
\newblock
\urldef\tempurl%
\url{https://doi.org/10.1145/3591291}
\showDOI{\tempurl}


\bibitem[Tun\c{c} et~al\mbox{.}(2023)]%
        {Tunc2023}
\bibfield{author}{\bibinfo{person}{H\"{u}nkar~Can Tun\c{c}},
  \bibinfo{person}{Umang Mathur}, \bibinfo{person}{Andreas Pavlogiannis}, {and}
  \bibinfo{person}{Mahesh Viswanathan}.} \bibinfo{year}{2023}\natexlab{}.
\newblock \showarticletitle{Sound Dynamic Deadlock Prediction in Linear Time}.
\newblock \bibinfo{journal}{\emph{Proc. ACM Program. Lang.}}
  \bibinfo{volume}{7}, \bibinfo{number}{PLDI}, Article \bibinfo{articleno}{177}
  (\bibinfo{date}{jun} \bibinfo{year}{2023}), \bibinfo{numpages}{26}~pages.
\newblock
\urldef\tempurl%
\url{https://doi.org/10.1145/3591291}
\showDOI{\tempurl}


\bibitem[Wang et~al\mbox{.}(2009)]%
        {Wang09}
\bibfield{author}{\bibinfo{person}{Chao Wang}, \bibinfo{person}{Sudipta Kundu},
  \bibinfo{person}{Malay Ganai}, {and} \bibinfo{person}{Aarti Gupta}.}
  \bibinfo{year}{2009}\natexlab{}.
\newblock \showarticletitle{Symbolic Predictive Analysis for Concurrent
  Programs}. In \bibinfo{booktitle}{\emph{Proceedings of the 2Nd World Congress
  on Formal Methods}} (Eindhoven, The Netherlands) \emph{(\bibinfo{series}{FM
  '09})}. \bibinfo{publisher}{Springer-Verlag}, \bibinfo{address}{Berlin,
  Heidelberg}, \bibinfo{pages}{256--272}.
\newblock
\showISBNx{978-3-642-05088-6}
\urldef\tempurl%
\url{https://doi.org/10.1007/978-3-642-05089-3_17}
\showDOI{\tempurl}


\bibitem[Williams(2005)]%
        {Williams05}
\bibfield{author}{\bibinfo{person}{Ryan Williams}.}
  \bibinfo{year}{2005}\natexlab{}.
\newblock \showarticletitle{A new algorithm for optimal 2-constraint
  satisfaction and its implications}.
\newblock \bibinfo{journal}{\emph{Theoretical Computer Science}}
  \bibinfo{volume}{348}, \bibinfo{number}{2-3} (\bibinfo{year}{2005}),
  \bibinfo{pages}{357--365}.
\newblock


\bibitem[Winskel(1987)]%
        {Winskel1987}
\bibfield{author}{\bibinfo{person}{Glynn Winskel}.}
  \bibinfo{year}{1987}\natexlab{}.
\newblock \showarticletitle{Event structures}. In
  \bibinfo{booktitle}{\emph{Petri Nets: Applications and Relationships to Other
  Models of Concurrency}}, \bibfield{editor}{\bibinfo{person}{W.~Brauer},
  \bibinfo{person}{W.~Reisig}, {and} \bibinfo{person}{G.~Rozenberg}} (Eds.).
  \bibinfo{publisher}{Springer Berlin Heidelberg}, \bibinfo{address}{Berlin,
  Heidelberg}, \bibinfo{pages}{325--392}.
\newblock
\showISBNx{978-3-540-47926-0}


\end{thebibliography}

\newpage
\appendix

\section{Proofs from \secref{semantic-equivalence}}\applabel{semantic-equivalence}

In the following we will use the notation $\checkOrder{\rfnovaleq}{w}{i}{j}$ to denote that the events
$i$ and $j$ are causally ordered in $w$ under $\rfnovaleq$


\subsection{Proof of \thmref{check-order-semantic-linear-space-lower-bound} and \thmref{check-order-semantic-tradeoff}}
\applabel{hardness}

We first state some properties of reads-from equivalence.
We first define additional notation.
For an execution $w \in \alphabet^*$, we define
the relation $\ctrf{w}$ defined as follows:
\[
\ctrf{w} = \setpred{(i, j)}{i, j \in \range{|w|}, \checkOrder{\rfnovaleq}{w}{i}{j}}
\]

\begin{proposition}
\proplabel{properties-check-order-semantic}
Let $w \in \alphabet^*$ be an execution.
The relation $\ctrf{w}$ defined above satisfies the following properties.
\begin{description}
	\item[(Partial order).] $\ctrf{w}$ is a partial order. That is, $\ctrf{w}$ is irreflexive and transitive.
	\item[(Intra-thread order).] $\ctrf{w}$ orders events of $w$ in the same thread. 
	That is, for every $i < j \in \range{|w|}$, if
	$\ThreadOf{w[i]} = \ThreadOf{w[j]}$, then $i \ctrf{w} j$.
	\item[(Reads-from).] $\ctrf{w}w$ orders events if there is a reads-from dependency between them.
	That is, for every $i < j \in \range{|w|}$, if $\rdf{w}(j) = i$, then $i \ctrf{w} j$.
	\item[(Implied orders).] 
	\itmlabel{implied-order-semantic}
	Let $i, j, k \in \range{|w|}$ be distinct indices such that
	such that $\VariableOf{i} = \VariableOf{j} = \VariableOf{k}$,
	$\OpOf{i} = \OpOf{k} = \wt$ and $\OpOf{j} = \rd$
	and $\rdf{w}(j) = i$.
	\begin{itemize}
		\item If $i \ctrf{w} k$, then $j \ctrf{w} k$.
		\item If $k \ctrf{w} j$, then $k \ctrf{w} i$.
	\end{itemize}
\end{description}
\end{proposition}

\begin{proof}
Follows from the definition of $\ctrf{w}$.
\end{proof}

The proof of \thmref{check-order-semantic-linear-space-lower-bound}
relies on a reduction from the following
language, parametrized by $n \in \natsp$:
\begin{align*}
\langeq_n = \setpred{\vect{a}\#\vect{b}}{\vect{a}, \vect{b} \in \set{0,1}^n \text{ and } \vect{a} = \vect{b}}
\end{align*}

We first observe that there is a linear space lowerbound 
for the problem of recognition of this language.

\begin{lemma}
Any streaming algorithm that recognizes $\langeq_{n}$ uses $\Omega(n)$ space.
\end{lemma}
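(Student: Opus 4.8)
The plan is to prove this lower bound by a standard fooling/pigeonhole argument on the memory configurations reachable after the algorithm has consumed the first half of its input. Fix a deterministic streaming algorithm $A$ that recognizes $\langeq_n$ (the language $L_n$ of the statement) using $s = s(n)$ bits of working memory, so that at any point during its single left-to-right pass the contents of $A$'s memory take one of at most $2^s$ distinct values. For each string $\vect{a} \in \set{0,1}^n$, let $q(\vect{a})$ denote the \emph{configuration} of $A$ (its working memory together with its control state) immediately after it has read the length-$n$ prefix $\vect{a}$, i.e.\ just before consuming the separator $\#$.

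The key step is to view $q$ as a map from the $2^n$ possible first halves into a set of at most $2^s$ configurations. Suppose, towards a contradiction, that $s < n$. Then $2^s < 2^n$, so by the pigeonhole principle there exist distinct strings $\vect{a} \neq \vect{a}' \in \set{0,1}^n$ with $q(\vect{a}) = q(\vect{a}')$. Since $A$ is deterministic and processes its remaining suffix as a function of the current configuration alone, feeding the \emph{identical} suffix $\#\vect{a}$ to $A$ starting from either $q(\vect{a})$ or $q(\vect{a}')$ drives it to the same final configuration and hence the same accept/reject verdict. Thus $A$ returns the same answer on the two inputs $\vect{a}\#\vect{a}$ and $\vect{a}'\#\vect{a}$.

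This yields the desired contradiction, since a correct recognizer must separate them: by definition of $\langeq_n$ we have $\vect{a}\#\vect{a} \in \langeq_n$ (the two halves agree), whereas $\vect{a}'\#\vect{a} \notin \langeq_n$ (as $\vect{a}' \neq \vect{a}$). Hence no correct algorithm can use fewer than $n$ bits, giving $s(n) \geq n$, that is $s(n) \in \Omega(n)$. Equivalently, one may phrase the argument as a reduction from the \textsc{Equality} problem in two-party communication complexity: Alice holds $\vect{a}$, runs $A$ on it, and transmits the $s$-bit configuration $q(\vect{a})$ across the cut at $\#$; Bob, holding $\vect{b}$, resumes $A$ from this configuration on the suffix $\#\vect{b}$ and outputs the verdict. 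This gives an $O(s)$-bit deterministic protocol for \textsc{Equality}, whose deterministic communication complexity is known to be $\Omega(n)$, and the same bound on $s$ follows.

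I do not foresee a genuine obstacle here, as the argument is elementary; the only point requiring mild care is to ensure that the configuration $q(\vect{a})$ captures \emph{everything} that $A$ carries across the separator (working memory plus control state), so that the suffix computation is truly a function of $q(\vect{a})$ alone and the pigeonhole collision genuinely forces identical behavior. For randomized or nondeterministic streaming models this crossing-sequence bound weakens, but in the deterministic single-pass model assumed here it is tight, and it is exactly this $\Omega(n)$ bound that the constant-space streaming reductions in \appref{hardness} leverage to establish \thmref{check-order-semantic-linear-space-lower-bound} and \thmref{check-order-semantic-tradeoff}.
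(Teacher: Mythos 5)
Your proof is correct and takes essentially the same route as the paper's own argument: a pigeonhole collision between the memory configurations reached after two distinct first halves $\vect{a} \neq \vect{a}'$, followed by the observation that the identical suffix $\#\vect{a}$ then forces the same verdict on $\vect{a}\#\vect{a} \in \langeq_n$ and $\vect{a}'\#\vect{a} \notin \langeq_n$. The communication-complexity rephrasing is a fine additional remark but not needed; the only cosmetic caveat is that counting configurations as exactly $2^s$ (rather than $2^s$ times a constant number of control states) changes nothing asymptotically.
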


\begin{proof}
Assume towards contradiction otherwise, i.e., 
there is a streaming algorithm that uses $o(n)$ space.
Hence the state space of the algorithm is $o(2^n)$.
Then, there exist two distinct $n$-bit strings $a \neq a'$, 
such that the streaming algorithm is in the same 
state after parsing $\vect{a}$ and $\vect{a'}$.
Hence, for any $n$-bit string $\vect{b}$, the algorithm gives 
the same answer on inputs $\vect{a}\#\vect{b}$ and $\vect{a'}\#\vect{b}$.
Since the algorithm is correct, it reports that $a\#a$ belongs to $L_n$.
But then the algorithm reports that $\vect{a'}\#\vect{a}$ also belongs to $L_n$, a contradiction.
The desired result follows.
\end{proof}

\begin{proof}[Proof of \thmref{check-order-semantic-linear-space-lower-bound}]
We will now show that there is a linear-space lower bound for
\probref{cc-symb} 
by showing a reduction from $\set{0,1}^n\#\set{0,1}^n$
to $\alphabet^*$ in one pass streaming fashion using constant space.
The constructed string will be such that there are unique indices
$i$ and $j$ whose labels correspond to the symbols $c$ an $d$ in the causal concurrency problem.

\newcommand{\trbit}{v}
Consider the language $L_n$ for some $n$.
We describe a transducer $\Transducer_n$ such that, on input a string 
$\trbit=\vect{a}\#\vect{b}$, the output $\Transducer_n(\trbit)$ is an execution 
$\tr$ with $2$ threads $t_1$ and $t_2$, $O(n)$ events and $6$ variables such
that  $\tr$ is of the form 
\[
\tr = \pi \concat \kappa \concat \eta \concat \delta
\]
We fix the letters $c$ and $d$ for which we want to check
for causl concurrency to be $c = \ev{t_1, \rd, u}$
and $d = \ev{t_2, \wt, u}$, where $u \in \vars$.
The fragments $\pi$ and $\eta$ 
do not contain any access to the variable $u$.
The fragments $\kappa$ and $\delta$ contain accesses to $u$.
\figref{check-order-lb-trace} describes our construction for $n=4$.


\begin{figure}[t]
   \centering
   \begin{subfigure}[b]{0.45\textwidth}
      \centering
      \scalebox{1.0}{
      \input{figures/trace-construction-linear-space}
      }
      \caption{Run constructed for $n=4$.}
      \figlabel{check-order-lb-trace}
   \end{subfigure}
   \hfill
   \begin{subfigure}[b]{0.45\textwidth}
      \centering
      \scalebox{1.0}{
      \input{figures/reordering-unequal-strings}
      }
      \caption{Reordering for $a= 11\textcolor{red}{\underline{0}}0$ and $b = 11\textcolor{red}{\underline{1}}0$.}
      \figlabel{check-order-lb-reordering}
   \end{subfigure}
   \caption{Reduction from $\langeq_4 = \setpred{a_1a_2a_3a_4\#b_1b_2b_3b_4}{ \forall i, a_i = b_i \in \set{0,1} }.$
Trace construction (on left) and example of equivalent trace when $\vect{a} \neq \vect{b}$.}
\figlabel{check-order-lb}
\end{figure}

Our reduction ensures the following:
\begin{enumerate}
\item If $\trbit \in \langeq_n$, then there are $\theta_1, \theta_2 \in \range{|\tr|}$ such that 
$\ThreadOf{\tr[\theta_1]} = t_1, \ThreadOf{\tr[\theta_2]} = t_2$, 
$\VariableOf{\tr[\theta_1]} = \VariableOf{\tr[\theta_2]} = u$ 
and $\checkOrder{\rfnovaleq}{\tr}{\theta_1}{\theta_2}$.
\item If $\trbit \not\in \langeq_n$, then there are no such $\theta_1$ and $\theta_2$.
\end{enumerate}
Moreover, $\Transducer_n$ will use $O(1)$ working space.

Let us now describe the construction of $\tr$.
At a high level, the sub-execution $\pi$ 
corresponds to the prefix $\vect{a} = a_1\concat a_2 \cdots \concat a_n$ of $\trbit$ 
and the sub-execution $\eta$ corresponds to the suffix 
$\vect{b} = b_1\concat b_2 \cdots \concat b_n$ of $\trbit$.
Both these sub-executions have $O(n)$ events.
The sub-executions $\kappa$ and $\delta$ have $O(1)$ (respectively $4$ and $1$)
events.
The sequences of events in the two sub-executions $\pi$ and $\eta$ ensure that
there is a `chain' of conditional dependencies, that
are incrementally met until the point when the two substrings $\vect{a}$ and $\vect{b}$ match.
If $\vect{a} = \vect{b}$ (i.e., full match), then the dependency chain further
ensures that there are events, at indices $\theta_1$ and $\theta_2$,
both in $\kappa$ (accessing the variable $u$ in threads $t_3$ and $t_2$ respectively)
such that $\checkOrder{\rfnovaleq}{\tr}{\theta_1}{\theta_2}$.

The execution fragment $\pi$ is of the form 
\begin{align*}
\pi = \pi_1 \concat \pi_2 \cdots \concat \pi_n.
\end{align*}
Here, the fragment $\pi_i$ corresponds to the $i^\text{th}$
bit $a_i$ of $\vect{a}$,
and only contains events performed by the thread $t_1$.
The fragment $\eta$ is of the form
\[
\eta = \eta_1 \concat \eta_2 \cdots \concat \eta_n 
\]
Here, $\eta_i$ corresponds to $b_i$, the  $i^\text{th}$ bit in $\vect{b}$
and only contains events of thread $t_2$.
The variables used in the construction are $\set{c, x_0, x_1, y_0, y_1, u}$,
where $u$ is the special variable whose events will be ordered or not based on the input.
In the rest of the construction, we will use  
the notation $\ev{t, o, v, \gamma}$ 
to denote the unique index $\alpha$ occuring within the fragment $\gamma$ 
for which $\tr[\alpha] = \ev{t, o, v}$.
We next describe each of the fragments 
$\pi_1, \ldots, \pi_n$, $\eta_1, \ldots, \eta_n$
and the fragments $\kappa$ and $\delta$.

\myparagraph{Fragment $\pi_1$}{
	The first fragment $\pi_1$ of $\pi$ is as follows:
	\begin{align*}
	\pi_1 = \ev{t_1, \wt, x_{\neg a_1}} \concat \ev{t_1, \wt, c} \concat \ev{t_1, \wt, x_{a_1}}
	\end{align*}
	That is, the last (resp. first) event writes to $x_0$ (resp. $x_1$) if $a_1 = 0$, 
	otherwise it writes to the variable $x_1$ (resp. $x_0$).
}

\myparagraph{Fragment $\eta_1$}{
	The first fragment $\eta_1$ of $\eta$ is as follows:
	\begin{align*}
	\eta_1 = \ev{t_2, \rd, x_{b_1}} \concat \ev{t_2, \wt, c}
	\end{align*}

	In the entire construction, the variables $x_0$ and $x_1$ are being written-to only
	in fragment $\pi_1$, and (potentially) read only in $\eta_1$.
	This means that, either $\rdf{\tr}(\ev{t_2, \rd, x_{b_1}, \eta_1}) = \ev{t_1, \rd, x_{a_1}, \pi_1}$
	(if $a_1 = b_1$)
	or $\rdf{\tr}(\ev{t_2, \rd, x_{b_1}, \eta_1}) = \ev{t_1, \rd, x_{\neg a_1}, \pi_1}$
	(if $a_1 \neq b_1$). In summary,
	\begin{align}
	\equlabel{pi1_c_to_eta1_c}
	a_1 = b_1 \implies \checkOrder{\rfnovaleq}{\tr}{\ev{t_1, \wt, c, \pi_1}}{\ev{t_2, \wt, c, \eta_1}} 
	\end{align}
}

\myparagraph{Fragment $\pi_i$ ($i \geq 2$)}{
For each $i\geq 2$, the fragment $\pi_i$ is the following
\begin{align*}
	\pi_i = \ev{t_1, \wt, y_{a_i}}  \concat \ev{t_1, \rd, c} \concat \ev{t_1, \wt, c} \concat \ev{t_1, \wt, r_{a_i}}.
\end{align*}	
Let us list some reads-from dependencies introduced due to $\pi_i$ ($i \geq 2$):
\begin{align*}
	\begin{array}{ccc}
		\rdf{\tr}(\ev{t_1, \rd, y_{a_2}, \pi_i}) = \ev{t_1, \wt, y_{a_2}, \pi_i} & \quad\text{ and }\quad
		\rdf{\tr}(\ev{t_3, \rd, c, \pi_i}) = \ev{t_3, \wt, c, \pi_{i-1}} \\
	\end{array}
\end{align*}
The reads-from mapping 
$\rdf{\tr}(\ev{t_1, \rd, c, \pi_2}) = \ev{t_1, \wt, c, \pi_1}$,
together with \equref{pi1_c_to_eta1_c}, implies that
$\ev{t_1, \rd, c, \pi_2} \ctrf{\tr} \ev{t_2, \wt, c, \eta_1}$ (see \propref{properties-check-order-semantic}),
in the case $a_1 = b_1$.
Finally, the read-from dependency from $t_1$ to $t_3$ due to $f$ gives us:
\begin{align}
\equlabel{pi2_a2_to_eta1_c}
a_1 = b_1 \implies \checkOrder{\rfnovaleq}{\tr}{\ev{t_1, \wt, y_{a_2}, \pi_2}}{\ev{t_2, \wt, c, \eta_1}}
\end{align}
}

\myparagraph{Fragment $\eta_i$ ($i \geq 2$)}{
For each $i\geq 2$, the fragment $\eta_i$ is the following
\begin{align*}
	\eta_i = \ev{t_2, \wt, y_{b_i}} \concat \ev{t_2, \wt, c}.
\end{align*}
It is easy to see from \equref{pi2_a2_to_eta1_c} that if $a_1 = b_1$
then the intra-thread dependency between $\eta_1$ and $\eta_2$
further implies that
$\ev{t_1, \wt, y_{a_2}, \pi_2} \ctrf{\tr} \ev{t_2, \wt, y_{b_2}, \eta_2}$.
Now, if additionally $a_2 = b_2$, we get:
\begin{align}
\equlabel{pi2_a2_to_eta2_b2}
(\forall i \leq 2, a_i = b_i) \implies \checkOrder{\rfnovaleq}{\tr}{\ev{t_1, \rd, y_{a_2}, \pi_2}}{\ev{t_2, \wt, y_{b_2}, \eta_2}}
\end{align}.

In fact, the same reasoning can be inductively extended for any $2 \leq k \leq n$:
\begin{align}
\equlabel{pii_ai_to_etai_bi}
(\forall i \leq k, a_i = b_i) \implies \checkOrder{\rfnovaleq}{\tr}{\ev{t_1, \rd, y_{a_k}, \pi_k}}{\ev{t_2, \wt, y_{b_k}, \eta_k}}.
\end{align}
The base case of $k=2$ follows from \equref{pi2_a2_to_eta2_b2}.
For the inductive case, assume that the statement holds for some $k < n$,
and that $(\forall 2 \leq i \leq k, a_i = b_i)$.
Together with intra-thread dependencies, we have
$\checkOrder{\rfnovaleq}{\tr}{\ev{t_1, \wt, c, \pi_k}}{\ev{t_2, \wt, c, \eta_k}}$.
It follows from \propref{properties-check-order-semantic} that 
$\ev{t_1, \wt, c, \pi_{k+1}} \ctrf{\tr} \ev{t_2, \wt, c, \eta_k}$
This, in turn implies that $\ev{t_1, \wt, y_{a_{k+1}}, \pi_{k+1}}$
is causally ordered before $\ev{t_2, \wt, y_{b_{k+1}}, \eta_{k+1}}$.
Thus,
$\checkOrder{\rfnovaleq}{\tr}{\ev{t_1, \rd, y_{a_{k+1}}, \pi_{k+1}}}{\ev{t_2, \wt, y_{b_{k+1}}, \eta_{k+1}}}$
if $a_{k+1} = b_{k+1}$.
}

\myparagraph{Fragments $\kappa$ and $\delta$}{
	The sequence $\kappa$ and $\delta$ are:
	\begin{align*}
	\begin{array}{ccc}
		\kappa = \ev{t_1, \wt, u} \concat \ev{t_1, \rd, c} \concat \ev{t_1, \rd, u} \concat \ev{t_2, \wt, u} &\text{ and }& \delta = \ev{t_2, \rd, u}
	\end{array}
	\end{align*}

	The reads-from dependencies induces due to $\kappa$ and $\delta$ are:
	\begin{align*}
	\begin{array}{ccc}
		\rdf{\tr}(\ev{t_1, \rd, c, \kappa}) = \ev{t_1, \wt, c, \pi_n}, &
		\rdf{\tr}(\ev{t_1, \rd, u, \kappa}) = \ev{t_1, \wt, u, \kappa}, &
		\rdf{\tr}(\ev{t_2, \rd, u, \delta}) = \ev{t_2, \wt, u, \kappa} \\
	\end{array}
	\end{align*}
}

\myparagraph{Correctness}{
	Let us make some simple observations.
	First, every read event has a write event on the same variable prior to it,
	and thus, $\rdf{w}(i)$ is well defined for every $i \in \reads{w}$
	Second, it is easy to see that the construction can be performed by a transducer
	$\Transducer_n$ in $O(1)$ space.

	($\Rightarrow$)
	Let us first prove that if $\forall i \in [n], a_i = b_i$, then 
	$\checkOrder{\rfnovaleq}{\tr}{\theta_1}{\theta_2}$,
	where $\theta_1$ is the index of the $\ev{t_1, \rd, u}$ event in $\kappa$
	and $\theta_2$ is the index of the $\ev{t_2, \wt, u}$ event in $\kappa$.
	Recall that if $a_i = b_i$ for every $i\leq n$, then 
	$\checkOrder{\rfnovaleq}{\tr}{\ev{t_3, \rd, y_{a_n}, \pi_n}}{\ev{t_2, \wt, y_{b_n}, \eta_n}}$
	(see \equref{pii_ai_to_etai_bi}).
	As a result, $\ev{t_1, \wt, c, \pi_n} \ctrf{\tr} \ev{t_2, \wt, c, \eta_n}$.
	Next, due to \propref{properties-check-order-semantic}, we get
	$\ev{t_1, \rd, c, \kappa} \ctrf{\tr}  \ev{t_2, \wt, c, \eta_n}$.
	This, together with intra-thread dependency further gives
	$\ev{t_1, \wt, u, \kappa} \ctrf{\tr}  \ev{t_2, \rd, u, \delta}$.
	If we next apply \propref{properties-check-order-semantic}, we get
	$\ev{t_1, \wt, u, \kappa} \ctrf{\tr}  \ev{t_2, \wt, u, \kappa}$.
	Applying \propref{properties-check-order-semantic} once again, we conclude that
	$\checkOrder{\rfnovaleq}{\tr}{\theta_1}{\theta_2}$

	($\Leftarrow$)
	Let us now prove that if there is an index $i$ for which $a_i \neq b_i$, 
	then 
	$\theta_1 \notctrf{\tr} \theta_2$ and $\theta_2 \notctrf{\tr} \theta_1$.
	To show this, we will construct an execution $\tr'$ such that
	$\tr' \rfnovaleq \tr$ and $\theta_2 \trord{\tr'} \theta_1$.
	In the rest, we assume $i$ is the least such index.

	First, consider the case when $i = 1$. 
	In this case, $\neg a_1 = b_1$.
	Then, $\tr'$ is the following concatenated sequence ($\kappa'$ is the second largest prefix of $\kappa$):
	\[
		\tr' = \ev{t_2, \wt, u} \concat \ev{t_1, \wt, x_{\neg a_1}} \concat \eta \concat \delta \concat \ev{t_1, \wt, x} \concat \ev{t_1, \wt, x_{a_1}} \concat \pi_2 \cdots \concat \pi_n \concat \kappa'
	\]
	First, observe that, the thread-wise projections of $\tr$ and $\tr'$ are the same.
	Next, we note that the reads-from dependencies of each read access to either $y_0, y_1, c$ or $u$
	are the same in both $\tr$ and $\tr'$:
	\begin{itemize}
		\item The only read accesses to $y_0$ or $y_1$ are in thread $t_1$, and so are their corresponding write events (as per $\tr$). In the new sequence $\tr'$, we ensure that
		all write access to $y_0$ or $y_1$ in $t_2$ occur before any access to $y_0$ or $y_1$
		in thread $t_1$.
		\item The same reasoning as above applies to the accesses to $c$.
		\item The read access to $u$ in $t_2$ appears before the write access to $u$ in $t_1$ in the new sequencd $\tr'$.
	\end{itemize}
	Finally, the reads-from dependency of $\ev{t_2, \rd, x_{b_1}}$ is also preserved. 
	Hence, $\tr'\rfnovaleq\tr$.
	Finally, observe that the corresponding events $\theta_2 = \ev{t_2, \rd, u, \kappa}$
	and $\theta_1 = \ev{t_1, \wt, u, \kappa}$ appear in inverted order.

	Next, consider the case of $i > 1$.
	Then, we construct $\tr'$ as the following interleaved sequence:

	\[
		\gamma_1 \concat \gamma_2 \cdots \concat \gamma_n \concat \kappa'
	\]
	where:
	\begin{align*}
	\begin{array}{rclr}
	\gamma_1 &=& \ev{t_2, \wt, u} \concat \ev{t_1, \wt, x_{\neg a_1}} \concat \ev{t_1, \wt, c} \concat \ev{t_1, \wt, x_{a_1}} \concat \ev{t_2, \rd, x_{b_1}} & \\
	\gamma_j &=& \ev{t_1, \wt, y_{a_j}} \concat \ev{t_1, \rd, c} \concat \ev{t_2, \wt, c} \concat \ev{t_1, \wt, c} \concat \ev{t_1, \rd, y_{a_j}} \concat \ev{t_2, \wt, y_{b_j}} & (2 \leq j < i) \\
	\gamma_i &=& \ev{t_1, \wt, y_{a_i}} \concat \ev{t_1, \rd, c} \concat \ev{t_2, \wt, c} \concat \ev{t_2, \wt, y_{b_i}} \concat \ev{t_2, \wt, c} \concat \ev{t_1, \wt, c} \concat \ev{t_1, \rd, y_{a_i}}  & \\
	\gamma_j &=& \ev{t_2, \wt, y_{b_j}} \concat \ev{t_1, \wt, y_{a_j}} \concat \ev{t_1, \rd, c} \concat \ev{t_2, \wt, c} \concat \ev{t_1, \wt, c} \concat \ev{t_1, \rd, y_{a_j}} & (j > i) \\
	\kappa' &=& \ev{t_2, \rd, u} \concat \ev{t_1, \wt, u} \concat \ev{t_1, \rd, c} \concat \ev{t_1, \rd, u} & \\
	\end{array}
	\end{align*}

	Here again, we observe that the new trace $\tr'$ is such that its thread-wise projections are the same as in $\tr$.
	Further, every read event reads-from the same corresponding events in both $\tr$ and $\tr'$.
	That is, $\tr \rfnovaleq \tr'$.
	Further, observe that the positions of $\theta_1$ and $\theta_2$ have been inverted.
	This trace $\tr'$ for a special case is shown in \figref{check-order-lb-reordering}.
}

This finishes our proof of \thmref{check-order-semantic-linear-space-lower-bound}.
\end{proof}

Let us now turn to the proof of \thmref{check-order-semantic-tradeoff}.
This time, we focus on the following language.
\begin{align*}
\langeq_{\text{pad},n} = \setpred{\vect{a}\#^n\vect{b}}{\vect{a}, \vect{b} \in \set{0,1}^n \text{ and } \vect{a} = \vect{b}}
\end{align*}

\begin{lemma}
For any streaming algorithm that recognizes $\langeq_{\text{pad},n}$ in time $T(n)$ and space $S(n)$, we have
$T(n) \cdot S(n) \in \Omega(n^2)$.
\end{lemma}

\begin{proof}
We first observe that the communication complexity of checking equality between two $n$-bit strings is 
$\Omega(n)$. 
Consider a Turing Machine $M$ that recognizes $\langeq_{\text{pad},n}$ in time $T(n)$ and space $S(n)$, by possibly going back and forth on the input tape.
Since $M$ takes $T(n)$ time, it must only traverse `across' 
the central padding `$\#^n$' atmost $\frac{T(n)}{n}$ times.
Since the space usage is $S(n)$, each time $M$ crosses the padding completely, it communicates
at most $S(n)$ bits across the padding.
Thus, the total number of bits that can be communicated is atmost $\frac{T(n) \cdot S(n)}{n}$
and thus we have $\frac{T(n) \cdot S(n)}{n} \in \Omega(n)$, giving us $T(n) \cdot S(n) \in \Omega(n^2)$.
\end{proof}

\begin{proof}[Proof Sketch for \thmref{check-order-semantic-tradeoff}]
The reduction in the proof of \thmref{check-order-semantic-linear-space-lower-bound} can be modified to prove
the time-space tradeoff.
This time, we can show a reduction from $T(n) \cdot S(n) \in \Omega(n^2)$ as we did in the proof of 
\thmref{check-order-semantic-linear-space-lower-bound}.
The only difference will be to add extra events in the run, corresponding to the padding string $\#^n$,
for which we can use  $n$ write events $\wt(d)$ in the fragment $\kappa$ in thread $t_1$ after the $\rd(u)$ event of $t_1$; here $d$ is a fresh memory location.
Observe that none of the $\wt(d)$ events are read by any read events.
Thus, the answer of $\checkOrder{\rfnovaleq}{\tr}{\theta_1}{\theta_2}$ does not get affected.
Besides the reduction itself is a streaming one pass  reduction that takes
$O(n)$ time and $O(1)$ space. 
\end{proof}

\subsection{Proof of \thmref{check-order-non-context-free}}

We will use the pumping lemma to establish \thmref{check-order-non-context-free}.
Recall that, a context-free language $L$ satisfies the following property:

\begin{align*}
\textsf{Pumpable}(L) \equiv & \text{ There is a } n \geq 1 \text{ s.t. forall } s \in L \text{ with } |s| > n, \text{ there are strings  } u,v,w,x,y \text{ s.t. } \\
& s = u{\concat}v{\concat}w{\concat}x{\concat}y, |v{\concat}w{\concat}x| \leq n, |v{\concat}x| \geq 1, \text{ and  for all } i\geq 0, u{\concat}v^i{\concat}w{\concat}x^i{\concat}y \in L
\end{align*}

We will show that $\textsf{Pumpable}(L^{c,d}_{\textsf{ordered}})$ does not hold:

\[
L^{c,d}_{\textsf{ordered}} = \setpred{\tr \in \alphabet^*c \alphabet^* d \alphabet^*}{c \text{ and } d \text{ are not causally concurrent in } w \text{ under } \rfnovaleq}
\]

To argue this, we will construct a class of strings $\set{s_n}_{n \geq 1}$
such that $s_n \in L^{c,d}_{\textsf{ordered}}$ (for all $p \geq 1$) and satisfies $|s_n| > n$,
but every way of splitting $s_n$ can be pumped down to a language that is not in 
$L^{c,d}_{\textsf{ordered}}$.

\myparagraph{Construction}{
Our construction of the string $s_n$ essentially mimics the construction 
for the proof of \thmref{check-order-semantic-linear-space-lower-bound}.
Thus, we will have 2 threads $t_1$ and $t_2$ in $s_n$, and the same set of variables as in this proof.
Without loss of generality, we will take $c = \ev{t_1, \rd, u}$ and $d = \ev{t_2, \wt, u}$.

More concretely, $s_n$ will be the run corresponding to the instance $0^n\#0^n \in \langeq_n$.
The idea behind the construction stems from the observation that the construction is tight,
and removing one or more events from the run results into one of: 
(a) non well-formed run, (b) the two focal events
$e_c = \ev{t_1, \rd, u}$ and $e_d = \ev{t_2, \wt, u}$ being unordered, or
(c) the focal events to completely vanish.
Any of these cases result in a string outside of $L^{c,d}_{\textsf{ordered}}$.

Formally, given $n \geq 1$, we construct the run $s_n = \pi \concat \kappa \concat \eta \concat \delta$
as shown in \figref{check-order-lb-trace},
where $\pi = \pi'_1  \pi_2 \ldots \pi_n$ and $\eta = \eta_1 \eta_2 \ldots \eta_n$ as described.
Here, $\pi'_1$ is the same string as $\pi_1$ but does not contain
the event $\ev{t_1, \wt, x_{\neg a_1}}$.
Now, we make the following observations;
we use the notation $\sigma \setminus e$ and $\sigma \setminus X$ to denote the
sequence obtained by removing from $\sigma$, the event $e$ and the set of events $X$ respectively.

\begin{claim}
\claimlabel{completely-in-one}
Let $X \subseteq \events{s_n}$ be a set of events such that $X \neq \emptyset$ and $|X| \leq n$
and $X$ contains events only from the segment $\pi \concat \kappa$.
The sequence $s_n \setminus X$ does not belong to $L^{c,d}_{\textsf{ordered}}$.
\end{claim}

\begin{proof}
First, if $X$ contains a write event but not an event that reads from it, then clearly $s_n \setminus X$ is not well-formed, and thus cannot belong to $L^{c,d}_{\textsf{ordered}}$.
In the rest of the proof, we assume that for every write event in $X$, the corresponding read event is also in $X$.

If $X$ contains the event $\ev{t_1, \wt, x_{a_1}}$, then $X$ must also contain $\ev{t_2, \rd, x_{b_1}}$
which is not in $\pi \concat \kappa$; so $X$ cannot contain this event.
If $X$ contains the event $\ev{t, \wt, y_{a_i}}$ for some $i \geq 2$, then it must also contain
$\ev{t_, \rd, y_{a_i}}$.
More importantly, the proof of \thmref{check-order-semantic-linear-space-lower-bound} argues that in fact
in this case the write events in $t_2$ can be carefully reordered so that the resulting reordering is $\rfnovaleq$ equivalent to $s_n \setminus X$ and thus $c$ and $d$ can be reordered.
For example, if $X$ contains $\ev{t_1, \wt, y_{a_1}}$, then the reordering that places $\ev{t_2, \rd, x_{b_1}}$ after $\ev{t_1, \wt, x_{a_1}}$
but places $\ev{t_2, \wt, c}\ev{t_2, \wt, y_{b_2}}\ev{t_2, \wt, c}$ after $\ev{t_2, \rd, x_{b_1}}$
but before $\ev{t_1, \wt, y_{a_3}}$ is the correct reordering that witnesses reordering of $c$ and $d$.
The same argument also shows why $X$ cannot contain any $\ev{t_1, \rd, y_{a_i}}$, $\wt{t_1, \wt, c}$ or $\wt{t_1,\rd, c}$.
Also, $\ev{t_1, \wt, u}$ cannot be in $X$; if so, then $\ev{t_1, \rd, u} \in X$ and thus $s_n \setminus X$ 
does not contain $c$.
\end{proof}

\begin{claim}
\claimlabel{completely-in-two}
Let $X \subseteq \events{s_n}$ be a set of events such that $X \neq \emptyset$ and $|X| \leq n$
and $X$ contains events only from the segment $\eta \concat \delta$.
The sequence $s_n \setminus X$ does not belong to $L^{c,d}_{\textsf{ordered}}$.
\end{claim}

\begin{proof}
If $X$ contains $\ev{t_2, \wt, u} = d$, then $s_n \setminus X$ trivially is not in $L^{c, d}_{\textsf{ordered}}$.
If $X$ contains $\ev{t_2, \rd, x_{b_1}}$, then there is no ordering from the event $\ev{t_1, \wt, x_{a_1}}$
to an event of thread $t_2$, as a result of which $d$ can be reordered before $c$ as show in  \figref{check-order-lb-reordering}.
If $X$ contains any $\ev{t_2, \wt, c}$ of $\kappa_i$, then we do not get an ordering from $\ev{t_1, \rd, y_{a_{i+1}}}$ to $\ev{t_2, \wt, y_{b_{i+1}}}$. As a result, we can reorder $c$ and $d$.
The same reasoning can be used to argue why $X$ cannot contain $\ev{t_2, \wt, y_{b_i}}$ for any $i\geq 2$.
\end{proof}

Now, we consider a case-by-case analysis of how the substrings $u, v, w, x, y$ of $s_n$ are picked subject to the constraints $s_n = u\concat v\concat w\concat x\concat y$, $|v\concat w\concat x| \leq n$, $|v \concat x| \geq 1$.

\begin{description}
	\item[Case $v \concat w \concat x \subseteq \pi\concat \kappa$ or $v \concat w \concat x \subseteq \eta\concat \delta$.]
	In this case, we can pump down (choose $i=0$) and the resulting string $s'_n = u\concat w\concat y \not\in L^{c, d}_{\textsf{ordered}}$ due to \claimref{completely-in-one} and \claimref{completely-in-two}.
	\item[Case $v\concat w\concat x$ spans $\pi \concat \kappa \concat \eta$.] 
	In this case, observe that there is no $j$ such that $v\concat w\concat x$ intersects with both $\pi_j$ and $\eta_j$ because of the restriction that $|v\concat w\concat x| \leq n$.
	Hence, choosing $i = 0$ (pumping down) again leads to a run $s'_n = u\concat w\concat y \not\in L^{c, d}_{\textsf{ordered}}$, and this can be established using arguments similar to the proofs of \claimref{completely-in-one} and \claimref{completely-in-two}.
\end{description}
}

\subsection{Proof of \thmref{quadratic-time-hardness}}
\applabel{ov-hardness}

The proof of this theorem can be proved in a similar manner
as an analogous result of~\cite{Mathur2020b} in the context of data race detection.
Given a set of events $X \subseteq \events{\tr}$,
a partial order $P \subseteq X \times X$ which totally orders
events of each thread, and a reads-from relation $RF : X \partialto X$
that maps each read event in $X$ to a write event in $X$ with the same variable,
the RF-Poset realizability problem for $(X, P, RF)$ asks
if there is a linearization of $P$ whose reads-from function matches $RF$.
The following is the statement of the analogous result in \cite{Mathur2020b}:

\begin{theorem}[Lemma 5.6 in \cite{Mathur2020b}]
\thmlabel{rf-poset}
Assuming SETH holds. RF-Poset realizability for posets with $n$ events 
cannot be solved in time $O(n^{2-\epsilon})$ for every $\epsilon > 0$,
even for inputs with $2$ threads and $7$ variables.
\end{theorem}

The proof of the above statement in fact constructs a simple RF-Poset
with only two threads, and is such that
the RF-Poset can be translated into a simple linearizartion (that first linearizes the events
of the first thread, followed by the events of the second thread).
This means that the RF-Poset realizability holds iff the linearization
can be reordered so that the two focal events (read and write of $z$) can be flipped.
Since our proof is not a direct reduction from RF-Poset realizability, we need
to prove our result separately.
However, since most parts of the proof are identical, we skip the entire construction and only outline the high level details, and highlight the low level details about where our construction differs.

Consider the sequence of threads $\tau_A$ and $\tau_B$ constructed
by \thmref{rf-poset} (dependin upon the two sequence of 
Boolean vectors $A$ and $B$ given as part of the OV instance).
We will use two fresh variables $z$ and $u$.
Let $\tau'_A$ be the sequence obtained by 
(a) inserting a 
$\wt(z)$ event after the event $\wt^{a_1}_1(x_1)$,
and 
(b) inserting a $\wt(u)$ event before the event
$\rd^{a_{n/2}}_1(x_2)$ in $\tau_A$.
Likewise, $\tau'_B$ be the sequence obtained by 
(a) inserting 
the event $\rd(z)$ before the event $\rd^{b_1}_1(x_1)$,
and
(b) inserting $\wt(u)$ event after $\wt^{b_{n/2}}_1(x_2)$ in $\tau_B$.
Observe that when the entire $\tau'_A$ appears after $\tau'_B$, we have
$(\wt^{a_1}_1(x_1), \rd^{b_1}_1(x_1)) \in \rdf{}$ and this imposes one of the desired orderings of the construction
in~\cite{Mathur2020b}.
However the other ordering is not imposed, but this can be imposed when we instead ask the check order question.

Formally, let $\sigma = \tau'_A \tau'_B$.
We claim that in $\sigma$, the two events $e_1 = \ev{\tau_A, \wt(u)}$
and $e_2 = \ev{\tau_B, \wt(u)}$ are reorderable under $\rfnovaleq$
iff the partial order $P$ constructed in \cite{Mathur2020b} is realizable.
For the forward direction, consider the witness reordering $\sigma'$ in which
$e_1$ and $e_2$ are reordered. 
Consider the run $\rho$ obtained by removing from $\sigma'$ the events of variables $z$ and $u$.
Observe that this is a witness to the realizability of the RF poset constructed in \cite{Mathur2020b} because $\sigma'$ is $\rdf{}$-equivalent to $\sigma$ and further $e_2$ appears before $e_1$ in $\sigma'$,
and thus $\sigma'$ satisfies all the constraints of the RF poset instance.
For the reverse direction, suppose that the RF poset instance of \cite{Mathur2020b} is realizable,
then, consider the trace that realizes the poset, say $\rho$.
In $\rho$, we can add the two write events on $u$ and the read-write pair on $z$
as described above.
We remark that the resulting trace $\sigma'$ is $\rdf{}$-equivalent to 
$\sigma$ --- the order between all thread-wise events is preserved as it was also a constraint in the poset.
Further, the reads-from of all the events on $x_1, x_2, \ldots, x_7$ is preserved since this $\rho$ is an instance of RF-poset and adding events on $z$ and $u$ doesnt change any other variables' reads-from.
Since the ordering on $f_1 = \wt^{a_1}_1(x_1)$ and $f_2 = \rd^{a_{n/2}}_1(x_1)$ was preserved (as $f_1 <^\rho f_2$) because of the constraint in 
the poset, it implies that the event immediately preceding $f_1$ (namely $g_1 = \ev{\tau_A, \wt(z)}$) and
immediately succeeding $f_2$ (namely $g_1 = \ev{\tau_B, \rd(z)})$) must also be preserved as 
$g_1 <^{\sigma'} g_2$, and since these two are the only two events on $z$, they are also a RF pair.
Finally, the two focal events get reversed because of the other ordering in the poset.
his means $\sigma'$ thus obtained is a witness of the reoderability of $e_1$ and $e_2$ in $\sigma$.

\subsection{Proof of \thmref{check-order-rf-time-upper-bound}}
\applabel{poly-algo}

\myparagraph{Overview}{
	Our proof is derived from similar statements in~\cite{Gibbons1994} and~\cite{Mathur2020b}.
	The idea behind the algorithm is to search for a path over an 
	abstract representation of the search space. 
	The search space will be represented as a graph, 
	also called the `frontier graph'~\cite{Gibbons1994}. 
	The nodes of this graph are subsets of $\events{\tr}$ 
	which are downward closed subsets of $(\po{\tr} \cup \rdf{\tr} \cup (e, f))^*$, 
	assuming this is a partial order (if not, we can directly say NO). 
	The edges represent `extensions' by one event. 
	The existence of a witnessing $\rdf{}$-equivalent run $\tr'$ 
	can then be checked by checking if there is a path from 
	the node representing the empty subset to the (unique) 
	node corresponding to the set $\events{\tr}$. 
	The proof of correctness will argue the correctness of this abstraction.
}

\begin{definition}[Ideal]
Let $P$ be a partial order on $\events{\tr}$ such that 
$(\po{\tr} \cup \rdf{\tr}) \subseteq P$.  
An ideal $X$ of $P$ is a subset of $\events{\tr}$ such that for every 
pair of events $e, f \in \events{\tr}$ such that $(e, f) \in P$, if $f \in X$, then $e \in X$.
\end{definition}

\begin{remark}
Observe that an ideal contains no information about relative ordering (linearization) of events, 
but just the set of events. 
Further, the empty set $\emptyset$ is also an ideal of $P$.
\end{remark}

\begin{definition}[Extension]
Given two ideals $X_1, X_2$ of a partial order $P$, 
we say that $X_2$ extends $X_1$ if there is an event $e$ such that
\begin{enumerate}
	\item $X_2 = X_1 \uplus \set{e}$,
	\item if $e$ is a write event (on variable $x$), and if there is 
	another write event $e' \neq e$ on variable $x$ in $X_1$, 
	then we must have $r \in X_1$ for every $r$ such that $\rdf{\tr}(r) = e'$.
\end{enumerate}
We also say that the event $e$ extends $X_1$ to $X_2$.
\end{definition}

\begin{proposition}
Observe that for an ideal $X$, there are at most $|\threads|$ ideals
 that are extensions of $X$.
\end{proposition}

\begin{remark}
The notion of extension above is `conservative' in the sense that it 
is aware of the future. 
In particular, our goal is to incrementally construct extensions, 
all the way until we reach the set $\events{\tr}$. 
As a result, some write $e'$ is not allowed to be added to an 
ideal because that will disable a future read $r$. 
If on the other hand, our goal was to not reach the ideal $\events{\tr}$ 
but only to reach a subset that does not contain the read $r$, then $e'$ could be added.
\end{remark}

\begin{definition}[Frontier Graph]
\deflabel{frontier-graph}
Let $P$ be a partial order on $\events{\tr}$. 
The frontier graph $G_P= (V_P, E_P)$ of $P$ is a directed graph such that:
\begin{enumerate}
\item $V_P$ is the set of ideals of $P$
\item $(X_1, X_2) \in E_P$ if $X_2$ is an extension of $X_1$. 
\end{enumerate}
Edges in this graph are implicitly labeled by events. 
For the edge $(X_1, X_2) \in E_P$ such that $\set{e} = X_2 - X_1$, 
the label is $e$ and we will denote this as $X_1 \rightarrow_{e} X_2$.
\end{definition}

\begin{remark}
The size of $G_P$ is $O(|\threads| \cdot |\tr|^{|\threads|})$ 
because there are at most $|\tr|^{|\threads|}$ nodes, each having at most 
$|\threads|$ outgoing edges. 
The time to construct it is
$O(|\threads| \cdot |\tr|^{|\threads|})$
because for every edge you spend time $O(|\tr|)$ to check if the edge corresponds to an extension.
\end{remark}

\begin{definition}[Respecting a partial order]
Let $S$ be a set and let $P$ be a partial order over $S$. 
Let $S' \subseteq S$ and let $\rho$ be a permutation of elements of $S'$.
We say that $\rho$ respects $P$ if for every $(e, f) \in P$, 
if $f \in S'$ then $e \in S'$ and further, $e$ appears before $f$ in $\rho$.
\end{definition}

\begin{definition}[$(\po{}, \rdf{})$-preserving subrun]
Given a run $\rho$, we say that $\rho$ is $(\po{}, \rdf{})$-preserving subrun of $\tr$ if
\begin{enumerate} 
\item $\events{\rho} \subseteq \events{\tr}$,
\item $\events{\rho}$ respects $\po{\tr} \cup \rdf{\tr}$
\item For every read event $r \in \events{\tr}$, with $w = \rdf{\tr}(r)$, 
there is no other write event $w' \neq w$ on the same variable as $r$ 
that appears between $w$ and $r$ in $\rho$.
\end{enumerate}
\end{definition}

\begin{remark}
If $\rho$ is a $(\po{}, \rdf{})$-preserving-subrun of $\tr$, and 
further if $\events{\rho} = \events{\tr}$, then $\rho$ is $\rdf{}$-equivalent to $\tr$.
\end{remark}
\begin{remark}
Not every $(\po{}, \rdf{})$-preserving-subrun of $\tr$ 
can be extended to an $\rdf{}$-equivalent run. 
In other words, there is a $\tr$ and a $(\po{}, \rdf{})$-preserving-subrun $\rho$ 
of $\tr$ for which every extension
$\rho \cdot \gamma$ is not $\rdf{}$-equivalent to $\tr$ 
\end{remark}

\begin{lemma}
\lemlabel{frontier-graph-reachability-preserves-po-rf}
Let $P$ be a partial order over $\events{\tr}$ and let 
$G_P = (V_P, E_P)$ be the frontier graph of $P$. 
For every ideal $X \in V_\tr$ such that $X$ is reachable from $\emptyset \in V_P$, 
there is a run $\rho$ such that $\rho$ is a $(\po{}, \rdf{})$-preserving-subrun of $\tr$ and $\rho$ 
respects $P$ (i.e., for every two events $e, f \in \events{\rho}$ such that $(e, f) \in P$, 
we also have that $e$ appears before $f$ in the run $\rho$).
\end{lemma}

\begin{proof}
We prove the following stronger statement that, 
in fact, every path $X_0, X_1, \ldots, X_n$ from $\emptyset$ to $X$ labeled 
$\sigma_X = e_1e_2\ldots e_n$ is indeed a $(\po{}, \rdf{})$-preserving-subrun of $\tr$ that respects $P$. 
Here, where $e_i$ is the $i^\text{th}$ event in the path, i.e., $X_i \rightarrow_{e_{i+1}} X_{i+1}$. 
Observe that $X = \{e_1, \ldots, e_n\}$.

We prove this by inducting on the length of $\tr_X$ (alternatively on the size of $X$).

{\bf Base case}: In this case $X = \emptyset$ and $\tr_X = \epsilon$ and the statement holds vacuously.

{\bf Inductive case}: Suppose we have that for every labeled path of length $\leq i$, the statement holds true. Consider now a labeled path $\tr_X$ of length $i+1$. Let $e$ be the last event in this path. We consider two cases based on the type of $e$.

\begin{description}
\item[$e = \ev{t, \wt(x)}$.] That is, $e$ is a write event. 
Suppose on the contrary that $\tr_X$ is not $(\po{}, \rdf{})$-preserving-subrun of
$\tr$ that respects $P$. 
Consider the penultimate prefix $\tr_{Y}$ such that $\tr_X = \tr_{Y} \circ e$; 
we will also use $Y = X \setminus \set{e}$. 
By inductive hypothesis, $\tr_Y$ is a $(\po{}, \rdf{})$-preserving-subrun of $\tr$ 
that also respects $P$. T
hen, there is an event $f \in Y$ such that $(e, f) \in \po{\tr} \cup \rdf{\tr} \cup P$ . 
But this means that $Y$ is not an ideal of $P$, a contradiction.
        
\item[$e = \ev{t, \rd(x)}$.] That is, $e$ is a read event. 
Suppose on the contrary that $\tr_X$ is not $(\po{}, \rdf{})$-preserving-subrun 
of $\tr$ that respects $P$. 
Then either we have that the penultimate set $Y = X \setminus \set{e}$ is not an ideal, 
as in the previous case giving a contradiction, 
or we have the more interesting case where $\tr_X$ is not a 
$(\po{}, \rdf{})$-preserving-subrun of $\tr$ becuase there is a write 
$f' \neq f$ (where $f = \rdf{\tr}(e)$) such that $f'$ appears later than 
$f$ in $\tr_Y$. 
Consider the largest prefix $\tr_Z$ of $\tr_Y$ that does not contain $f'$. 
Also, let $\tr_U$ be the immediately longer path, i.e., 
$\tr_U = \tr_Z \circ f'$. 
We will use $Z$  and  $U$ to denote the ideals r
eached after taking the paths $\tr_Z$ and $\tr_U$ respectively. 
Observe that $U = Z \uplus \set{f'}$, $e \not\in Z$, $f = \rdf{\tr}(e) \in Z$ and 
$Z \rightarrow_{f'} U$. This contradicts that $U$ is an extension of $Z$.   
\end{description}
 \end{proof}

\begin{remark} 
The graph $G_P$ defined above does not capture all $(\po{}, \rdf{})$-preserving-subruns of $\tr$. 
It only captures those that are prefixes of some run that is $\rdf{}$-equivalent to $\tr$. 
Thus the converse of \lemref{frontier-graph-reachability-preserves-po-rf}, 
as stated, is not true. The following lemma is true though (since it talks about $\rdf{}$-equivalent executions).
\end{remark}

\begin{lemma}
\lemlabel{converse-frontier-graph}
Let $P$ be a partial order over $\events{\tr}$ and let 
$G_P = (V_P, E_P)$ be the frontier graph of $P$. 
Let $\rho = f_1f_2\ldots f_N$ be a run that is $\rdf{}$-equivalent to $\tr$ such that $e_2$ 
appears before $e_1$ in $\rho$. Then there is a path in $G_P$ that is labeled with $\rho$.
\end{lemma}

\begin{proof}
We establish that $\rho$ is a path in the graph $G_P$ by inductively establishing that $X_i$ is a node of $G_P$ and $X_{i-1} \rightarrow_{f_i} X_i$ is an edge of $G_P$ for every $i \geq 1$.

{\bf Base Case (i=1).} Since $\rho$ is rf-equivalent to $\tr$, 
$f_1$ must be the first event of its own thread and further it must be a write event. 
Also, $f_1$ cannot be the focal event $e_1$. 
Thus, the set $X_1 = \set{f_1}$ is indeed an ideal of $P$ and thus $X_1 \in V_P$ 
is a node of the graph $G_P$. 
Next, since there is no other write event in $X_0 = \emptyset$, $f_1$ trivially extends $X_0$ to $X_1$. 
Thus, $X_0 \rightarrow_{f_1} X_1$ is an edge in $G_P$.

{\bf Inductive Case.} Suppose that $X_i$ is an ideal of $G_P$. 
Consider the set $X_i = X_{i-1} \uplus \set{f_i}$.  
Since $\rho$ is $\rdf{}$-equivalent to $\tr$, all events $\po{}$-ordered before 
$f_i$ must be in the prefix $f_1, \ldots, f_{i-1}$ and thus in the set $X_{i-1}$. 
Likewise, if $f_i$ is a read event, then $\rdf{\tr}(f_i)$ is also in $X_{i-1}$. 
Finally since $\rho$ does not execute $e_1$ before executing $e_1$, 
if $f_i = e_1$, then $e_1 \in X_{i-1}$. 
In other words, $X_i$ is an ideal of $P$ and thus a node in $G_P$.
Next, consider the case when $f_i$ and suppose on thae contrary that there is a write event 
$w \neq f_i$ and a read event $r$ with $\rdf{\tr}(r) = w$ 
such that $w \in X_{i-1}, r \notin X_{i-1}$. 
This will contradict that $\rho$ is rf-equivalent to $\tr$ as $f_i$ appears between 
$w$ and $r$ in $\rho$. 
Hence, $f_i$ extends $X_{i-1}$ to $X_i$. As a result, $X_{i-1} \rightarrow_{f_i} X_i$ is an edge in $G_P$  
\end{proof}      

\subsubsection{Algorithm for solving $\checkOrder{\rfnovaleq}{\cdot}{\cdot}{\cdot}$}

\begin{algorithm}[t]
    \KwIn{Run $\tr\in\alphabet^*$, events $e_1, e_2 \in \events{\tr}$ such that $e_1$ appears before $e_2$ in $\tr$}
    \KwOut{YES iff there is a run $\tr'$ which is $\rdf{}$-equivalent to  $\tr$ in which $e_2$ 
    appears before $e_1$.}
    
    Construct the transitive reduction of the quasi order $P = (\po{\tr} \cup \rdf{\tr} \cup \set{e_2, e_1})$\;
    \lIf{$P$ has a cycle}{\Return NO}
    Construct the frontier graph $G_P = (V_P, E_P)$ as in \defref{frontier-graph} \;
    \If{$(\emptyset, \events{\tr}) \in E_P^+$}{
    	\Return YES
    }
    \Else{
    	\Return NO
    }
    \caption{Polynomial time algorithm for Checking Causal Orderedness under $\rfnovaleq$}
    \algolabel{poly-check-order-rf}
\end{algorithm}

\begin{theorem}
\algoref{poly-check-order-rf} runs in time $O(|\threads|\cdot|\tr|^{|\threads| + 1})$ for an input run $\tr$.
Further, \algoref{poly-check-order-rf} returns YES iff $\neg \checkOrder{\rfnovaleq}{\tr}{e_1}{e_2}$
\end{theorem}

\begin{proof}
The time spent in constructing $P$ and checking cycles in it is $O(|\tr|)$.
 The time taken to build $G_P$
is $O(|\threads|\cdot|\tr|^{|\threads| + 1})$ and the time taken to check reachability in $G_P$
is $O(|\threads|\cdot|\tr|^{|\threads|})$

Let us now argue correcntess. \\
($\Rightarrow$) Suppose the algorithm says YES. Then $P$ is a partial order. 
Further the node $X = \events{\tr}$ is reachable from $\emptyset$ in $G_P$. 
Then by \lemref{frontier-graph-reachability-preserves-po-rf}, there is a run $\rho$ that is $\rdf{}$-equivalent to $\tr$ and flips $e_1$ and $e_2$.

($\Leftarrow$) Suppose there is an $\rdf{}$-equivalent run $\rho = f_1f_2\ldots f_{|\tr|}$ of 
$\tr$ such that $e_2$ appears before $e_1$ in $\rho$. First, observe that $\rho$ respects $\po{\tr} \cup \rdf{\tr} \cup (e_2, e_1)$ and thus $P$ must be acyclic. 
By \lemref{converse-frontier-graph}, 
we must have that $X = \events{\tr} = \set{f_1, \ldots, f_{|\tr|}}$ is reachable from $\emptyset$ in $G_P$. Thus the algorithm returns YES.
\end{proof}

\subsection{Proof of \thmref{check-order-rf-space-upper-bound}}
\applabel{linear-space-upper-bound}

\begin{proof}
At a high level, the algorithm enumerates permutations of
the given run $w$, one at a time, and checks if each such permutation is
$\rdf{}$-equivalent to $w$ and contains $e$ and $f$ appearing in the reverse order.
Further, this task can be performed in total additional space (besides the input $w$)
that is linear in $|w|$. 
In the following, we explain how these two tasks can be performed
by a deterministic Turing machine $M$ with a linearly bounded work tape.

\myparagraph{Generating permutations in linear space}{
	Given the run $w$, $M$ first stores the lexicographically minimum
	permutation $w_\textsf{min}$ of $w$ into a separate worktape.
	This can be done by maintaining a counter that counts the number of occurences
	of each symbol in $\alphabet$ and copying the required copies of each letter
	to the work tape.
	Next, given the current contents $u$ of the work tape,
	$M$ identifies the longest non-increasing  suffix $v$ of $u$,
	identifies the symbol $a$ right before it (and thus $u = u_1\cdot a \cdot v$ for some $u$).
	Next, $M$ identifies the rightmost symbol $b$ of $v$ that is lexicographically larger than $a$
	(and thus $v = v_1 \cdot f \cdot v_2$), swaps $a$ and $b$ to obtain
	the string $x = u_1 \cdot b \cdot v_1 \cdot a \cdot v_2$, and finally reverses the
	suffix $v_1 \cdot a \cdot v_2$ to obtain the next permutation $u' = u_1 \cdot b \cdot v_2^\textsf{rev} \cdot a \cdot v_1^\textsf{rev}$.
	Clearly, all this operations can be performed in place on the worktape.
}

\myparagraph{Checking each permutation}{
	Given a string $u$ on the worktape, $M$ also needs to check
	if $u \rfnovaleq w$ and whether the order of occurence of $e$ and $f$ 
	is different in $u$ and $w$.
	This can also be checked using no additional space as follows.
	First, $M$ checks if $u$ and $w$ have the same program order
	(i.e., $\po{w} = \po{u}$) by making $|\threads|$ many passes over $u$
	and $w$, one for each $t \in \threads$, checking if the projection
	of $u$ and $w$ to $t$ match.
	Next, $M$ checks if $\rdf{w} = \rdf{u}$ by making a separate pass for each read event $r$,
	and verifying if the corresponding last write event $w$ on the same variable as $r$
	and occuring before $r$ is performed by the same thread, and at the same local index in that thread.
	Finally, checking if $e$ and $f$ appear reversed corresponds to checking if the
	$i^\text{th}$ event of thread $t$ appears after the $j^\text{th}$ event of thread $t'$ in $u$,
	where $t, t'$ are respectively the threads performing $e, f$, and
	$i, j$ are their respective local indices in $t, t'$.
	The check of whether some two occurences of $c$ and $d$ can be reversed is analgous.
	$M$ performs each of these checks in place on the worktape, using only
	$\log |w|$ many extra bits to keep track of counters and local indices.
}
\end{proof}


\section{Proofs from Section \secref{geq}}
\applabel{geq}

\subsection{Proof of \thmref{wc-hardness}}

\begin{proof} 
Let us focus only on words $w$ where $w$ is of the form $a b^n c^m$. Given the relation $I$, we claim that $a$ and last $c$ are concurrent iff $n < m$, and ordered otherwise. It is clear that if $m \le n$, then $c$ can be reordered to before $a$ using a swap sequence like this:
\[a b^n c^m \to a bc b^{n-1} c^{m-1} \to \dots \to a (bc)^m b^{n-m} \to bc a (bc)^{m-1} b^{m-1} \to \dots \to (bc)^m a b^{m-1}\]
where each arrow in the first group corresponds to a $(b,bc)$ swap and each arrow in the second group corresponds to a an $(a, bc)$ swap. 
Let us argue why otherwise, the last $c$ is ordered wrt $a$. The only way that the last occurrence of $c$ can be reordered against $a$ is that all occurrences of $c$ have already been moved behind $a$; since occurrences of $c$ do not commute against each other. Every time an occurrence of $c$ is reordered with $a$, it must be through a swap $\dots a bc \to bc a$, because those are the only elements of $I$ that involve an $a$ and a $c$ on the two sides of a swap. This swap {\em consumes} one $b$, and this $b$ cannot move back, unless it moves back together with a $c$, which would be counterproductive. Therefore, we need at least as many $b$'s as $c$'s to swap all the $c$'s behind the $a$. 

A monitor, therefore, should be able to distinguish the two sets of strings $a b^n c^m$ where $n < m$ and where $n \ge m$ from each other. But, this involves counting and therefore is not regular. 
\end{proof}

\subsection{Proof of Theorem \ref{thm:sgrain}}
\begin{proof}
\noindent $\Leftarrow$ direction: If $gg' \not \equiv_{\rdf{}} g'g$ for some pair of grains, then it is straightforward to see that if they are consecutive and swapped, the soundness will be violated. For the additional conditions, the  grains from the run in Example \ref{ex:dreads} satisfy $gg' \equiv_{\rdf{}} g'g$ but violate the extra condition on the grains for the theorem and as argued the corresponding commutativity relation is not sound in the context of the run.

\noindent $\Rightarrow$ direction: We have to show that if $[w]_G \not \subseteq [w]_{\rdf{}}$, then at least one of the conditions of the theorem is violated. Assume there exists $u \in [w]_G$ where $u \not \in [w]_{\rdf{}}$. The latter can happen only if in $u$, either the program order or the reads-from relation are changed compared to $w$. Since $gg' \equiv_G g'g$ implies that program order is never changed (and any standard Mazurkiewicz swap preserves program order), the only point of change can be in the reads-from relation. 

We prove,  by contradiction, that the reads-from relation cannot change. Let us consider the sequence of swaps that would get us from $w$ to $u$:
\[w \stackrel{s_1}{\to} v_1 \stackrel{s_2}{\to} v_2 \stackrel{s_3}{\to} \dots  \stackrel{s_n}{\to} u\]

Let us assume $v_i$ is the first word in the sequence such that $v_i \not \equiv_{\rdf{}} w$, and as such $v_i \not \equiv_{\rdf{}} v_{i-1}$. Therefore, the swap $s_i$ is to blame. By definition, this cannot be a Mazurkiewicz swap. Therefore, it is a swap of two grains $g$ and $g'$ where $(g,g') \in I_G$. Note that any pair of $\wt(x)$ and $\rd(x)$ that are both on one side of this swap will be unaffected by the swap. Therefore, a change in the reads-from relation must be of one of the following forms, for a given variable $x$:
\begin{itemize}
\item $\wt(x)$ is before $gg'$ in $v_{i-1}$ and the corresponding $\rd(x)$ is somewhere in $gg'$. A swap can brake such a relation only if $\rd(x)$ is in $g$, there are no other writes to $x$ before $\rd(x)$ in $g$, and $g'$ contains a write to $x$. This is however in violation of the condition stated in the theorem. 
\item $\wt(x)$ is before $gg'$ in $v_{i-1}$ and the corresponding $\rd(x)$ is after $gg'$: This means that $gg'$ contains no write to variable $x$, and therefore, the swap cannot affect the fact that $\rd(x)$ reads from $\wt(x)$.    
\item $\wt(x)$ is in $gg'$ in $v_{i-1}$ and the corresponding $\rd(x)$ is also in $gg'$: This would contradict $gg' \equiv_{\rdf{}} g'g$. 
\item $\wt(x)$ is in $gg'$ in $v_{i-1}$ and the corresponding $\rd(x)$ is after $gg'$. This is in violation of the condition stated in the theorem. 
\end{itemize}
\end{proof}


\section{Proofs of Section \ref{sec:gmonitor}}

\subsection{Proof of Theorem \ref{thm:monitor-sound}}
\seclabel{app-transitions-grains}

The proof sketch we provided in the text, through a two-pass construction is formal enough to convince a reader with a good command of automata theory. Alternatively, one can give the full construction to this monitor as follows.


\begin{figure}[t]
\fbox{
\parbox{1.05\textwidth}{
\begin{center}
\begin{tabular}{lcl}
State & Event & State Update \\ \hline \hline
$\langle -, E, g_\emptyset, i, C \rangle$ & $e \in \Sigma$ & $\langle -, E,  g_\emptyset, i, C\rangle$ \\
$\langle -, E,  g_\emptyset, i, C \rangle$ & $\focalEv_1$ & $\langle \focalEv_1, E,  g_\emptyset, i, C\rangle$ \\
 \hline
$\langle \focalEv_1, E, g, i, C \rangle$ & $e= \evt{\wt(x)}{i}$ & $\langle \focalEv_1 -, \{\mathit{grain}(e)\}, g, \mathit{false}, C[x \mapsto \mathit{nondet}]\rangle$ \\
$\langle \focalEv_1, E, g, i, C \rangle$ & $e= \evt{\rd(x)}{i}$ & $\langle \focalEv_1 -, \{\mathit{grain}(e)\}, g, \mathit{false}, C\rangle$ \\ \hline
$\langle \focalEv_1, E, g, i, C \rangle$ & $\rhd$ & $\langle \focalEv_1 -,  \emptyset , g_\emptyset, \mathit{true}, C\rangle$ \\

$\langle \focalEv_1 -, E, g, \mathit{true}, C \rangle$ & $e=\evt{\wt(x)}{i}$ & $\langle \focalEv_1 -,  E , \mathit{update}(g,e), \mathit{true} , C \rangle$ \\
$\langle \focalEv_1 -, E, g, \mathit{true}, C \rangle$ & $e=\evt{\rd(x)}{i}$ & $\langle \focalEv_1 -,  E , \mathit{update}(g,e), \mathit{true}, C \rangle$ \\

$\langle \focalEv_1 -, E, g, \mathit{true}, C \rangle$ & $\lhd$ & $\langle \focalEv_1 -,  E \odot \mathit{ND}(g) , g_\emptyset, \mathit{false}, C \otimes \mathit{ND}(g)\rangle$ \\ \hline

$\langle \focalEv_1 -, E, g, \mathit{false}, C \rangle$ & $e=\evt{\wt(x)}{i}$ & $\langle \focalEv_1 -,  E \odot \mathit{grain}(e) , g, \mathit{false}, C[x \mapsto \mathit{nondet}] \rangle$ \\ 
$\langle \focalEv_1 -, E, g, \mathit{false}, C[x \mapsto \mathit{false}]  \rangle$ & $e=\evt{\rd(x)}{i}$ & $\langle \focalEv_1 -,  E \odot \mathit{grain}(e) , g, \mathit{false}, C  \rangle$ \\ 
$\langle \focalEv_1 -, E, g, i, C \rangle$ & $\focalEv_2$ & $\langle \focalEv_1 - \focalEv_2, E, g, i, C\rangle$ \\ 
\hline

$\langle \focalEv_1 - \focalEv_2, E, g, i, C \rangle$ & $e \in \Sigma$ & $\langle \mathit{Ord}(E,\mathit{ND}(\mathit{grain}(e))) 
, E, g, i, C \otimes \mathit{ND}(\mathit{grain}(e))\rangle$ \\ \hline
$\langle \focalEv_1 - \focalEv_2, E, g, i, C \rangle$ & $\rhd$ & $\langle \focalEv_1 - \focalEv_2 \rhd, E, g_\emptyset, \mathit{true}, C \rangle$ \\ 
$\langle \focalEv_1 - \focalEv_2 \rhd, E, g, i, C \rangle$ & $e=\evt{\wt(x)}{i}$ & $\langle \focalEv_1 - \focalEv_2 \rhd, E, \mathit{update}(g,e), \mathit{true}, C \rangle$ \\ 
$\langle \focalEv_1 - \focalEv_2 \rhd, E, g, i, C[x \mapsto \mathit{false}] \rangle$ & $e=\evt{\rd(x)}{i}$ & $\langle \focalEv_1 - \focalEv_2 \rhd, E, \mathit{update}(g,e), \mathit{true}, C \rangle$ \\ 
$\langle \focalEv_1 - \focalEv_2 \rhd, E, g, i, C \rangle$ & $\lhd$ & $\langle  \mathit{Ord}(E,\mathit{ND}(g)), E, g_\emptyset, \mathit{false}, C \otimes \mathit{ND}(g)\rangle$ \\ \hline
$\langle \mathit{false}, E, g, i, C \rangle$ & $e=\evt{\wt(x)}{i}$ &  $\langle \mathit{false}, E, g, i, C\rangle$\\
$\langle \mathit{false}, E, g, i, C[x \mapsto \mathit{false}] \rangle$ & $e=\evt{\rd(x)}{i}$ &  $\langle \mathit{false}, E, g, i, C\rangle$\\
\hline
\end{tabular}%
\begin{align*}
\mathit{update}(g,\evt{\opfont{op}(x)}{i}) &= \left\{ \begin{array}{ll}
                                      g.E = g.E \cup \{e\}, g.V = g.V \cup \{x\} & \text{if } (\opfont{op}=\rd \wedge \wt(x) \not \in g.E) \\
                                      g.E = g.E \cup \{e\} & \text{owise} 
                                    \end{array}\right. \\
\end{align*}%
\vspace{-30pt}%
\begin{align*}
\mathit{Ord}(E,g) &\iff \exists g' \in E: \ \depend{g,g'} & 
C \otimes g &= C - \{x| \evt{\wt(x)}{i} \in g.E\} \cup g.V  \\
E \odot g &= \left\{ \begin{array}{ll}
                                     E \cup g  & \text{if }  \mathit{Ord}(E,g) \\
                                     E & \text{owise} 
                                    \end{array}\right. & 
\mathit{ND}(g) &= \langle g.E, g.V \cup \{x|\ \evt{\wt(x)}{i} \in g.E \wedge \text{ nondet}\} \rangle 
\end{align*}%
\end{center}}}\vspace{-5pt}
\caption{Grain Concurrency Monitor: The monitor starts in $\langle -, \emptyset, g_\emptyset, \mathit{false}, \emptyset \rangle$ and accepts if it is in a state $\langle \mathit{false}, \dots \rangle$ once the input run is read. $g_\emptyset$ corresponds to an empty grain signature. $\mathit{grain}(e)$ is syntactic sugar for $\mathit{update}(g_\emptyset,e)$. With $\mathit{ND}$, the monitor nondeterministically decides for which of the writes that appear in the grain, all the read operations are also assumed to be in the grain.}\vspace{-10pt}
\label{fig:gcm}
\end{figure}

The state of the grain monitor is conceptually the same as the trace concurrency monitor, except that $E$ is no longer a set of {\em events}, but rather a set of {\em grain signatures}. 
Additionally, the state includes one grain signature $g$ that maintains information about the current grain,  a flag $i: Bool$ that is true whenever the monitor is in the middle of reading a grain, and a flag $C: \vars \to Bool$ which is the set of variables wrt which some previously closed grain was (nondeterministically) assumed to satisfy the part of condition \ref{eq:full} which assumes all the reads, that correspond to a write in some grain, also belong to the grain; therefore, the monitor does not expect to see any reads on these variables before it sees a write. 
The monitor makes a nondeterministic guess in the second component of a grain signature, for each variable on whether all the reads of a given write are included in the grain, and later checks  the (right) context of the grain to verify this guess. In effect, the monitor is able to make both choices about the grain and therefore, try its luck with both versions of the (sound) commutativity relations that are implied by the choice. A wrong choice will later result in a halt. 
\figref{gcm} lists the transitions of the monitor, and the definitions of the operators used.

\begin{proof}
The full proof of why the detailed monitor is correct is long and tedious, through case analysis. We sketch the high level idea behind the most interesting case of the proof. The proof is by induction on the length of the portion of the input run so far processed by the monitor. We assume that the monitor has processed the prefix $\sigma$ (which includes $e$ but not $e'$) and is about to read the next entity $a$, and assume that it satisfies the following induction hypothesis:

\begin{quote}
\em
There exists a resolution of nondeterministic choices such that: 
\begin{itemize}
\item For any entity $b$ (event or grain) in $\sigma$:  
\[ (\mathit{grain}(b) \subseteq E \iff b \text{ is ordered wrt } e)\]
\item $E \subseteq \{\mathit{grain}(d)|\ d \in \sigma\}$
\item For every entity in $E$, the monitored has correctly guessed the status of condition \ref{eq:full}.  
\end{itemize}
and any error in underestimating $E$ will eventually result in a halt.
\end{quote}

We split the proof on whether $a$ is concurrent or ordered wrt $e$: 

\begin{itemize}
\item Ordered: Consider the definition of  $[\sigma a]_G$. It is simpler to move to the corresponding grain monoid and consider the equivalence class of $[h_G(\sigma a)]_{\widehat{I}_G}$. By definition, $e$ and $a$ are always ordered the same way in $[h_G(\sigma a)]_{\widehat{I}_G}$. 
 
In the grain monoid, this implies the existence of a path in the partial order representation of the class. Let this path be:

\[e \to a_1 \to \dots \to a_m \to a\]  

where each $a_i$ is either in $\Sigma$ or in $\Sigma_G$. The induction hypothesis implies that:

\[\{e\} \cup \mathit{events}(a_1) \cup \dots \mathit{events}(a_m) \subseteq E\]

Since $a_{m} \to a$, we know the two items do not commute. The reasons for this could be that:
\begin{itemize}
\item They share a thread: then the monitor correctly decides that the new event $a$ is ordered and adds its signature to $E$.
\item They share a variable $x$, at least one writes to $x$, and condition $\ref{eq:full}$ does not hold. Since the grain summaries of $a_m$ in $E$ is correctly computed (and guessed) by the induction hypothesis, and the correct guess for $a$ is an option, we correctly see $a$ as ordered with respect to the summary $E$.
\end{itemize}
As such, the monitor adds $a$ to $E$ and re-established all the hypotheses. 

If as result of a wrong guess (with $\mathit{ND}$), we incorrectly establish that $a$ and $a_m$ commute, then we have under-estimated $E$ and have to argue that the computation will eventually halt.

The wrong guess implies that there exists a $\rd(x)$ in the remainder of the concurrent run that is matched with a $\wt(x)$ in $a$. The monitor made the mistake of not including $x$ in the $V$ component of he grain signature of $a$, and therefore $\mathit{commute}$ issued the wrong verdict. The same choice (with $\mathit{ND}$) is reflected in the $C$ component of the monitor's state as $C[x \mapsto \mathit{true}]$. Therefore, when the monitor eventually reaches this $\rd(x)$, it halts because it does not have any transitions defined for this configuration. 

\item Concurrent: By definition, $a$ cannot be ordered against any entity $b$ outlined in the induction hypothesis. Let us assume that the monitor correctly guesses all the relevant components of the grain for $a$. If the monitor incorrectly decides that $a$ is ordered wrt $E$, then it means that it does not commute with at least one grain signature in $E$. By induction hypothesis, this signature must belong to some entity (grain or event) that has already been observed in $\sigma$ and is ordered wrt $e$. Therefore, by definition, $a$ is also ordered wrt $e$. 

If the monitored correctly sees $a$ as concurrent, then it does not update $E$ and therefore maintains the invariants. If $a$ is concurrent, but the monitor incorrectly sees it dependent on $E$, then this is an over-estimation of $E$, which can be dismissed. There exists another nondeterministic choice (as outlined above) that will get $E$ right. 
\end{itemize}

\end{proof}

\subsection{Proof of~\thmref{monitor-sound2}}
\seclabel{grain-concurrency-constant-space}

Let $L_\textsf{WF}$ be the set of words described by the regular expression in \equref{wf}.
Conside the set:
\begin{align*}
\begin{array}{rl}
L_\textsf{concurrent-marked-grains} = \{w \in L_\textsf{WF} |& \textsf{The two focal events demarcated by }\focalEv_1 \text{ and } \focalEv_2 \\ &\text{are concurrent under the marked grains in } w\}.
\end{array}
\end{align*}
From \thmref{monitor-sound}, we have that $L_\textsf{concurrent-marked-grains}$ is a regular set.

We will now focus on the set of words of the form $L_\textsf{WF-unmarked-grains} = \alphabet^*\focalEv_1\alphabet^*\focalEv_2\alphabet^*$ and the following subset of it:
\begin{align*}
\begin{array}{rl}
L_\textsf{concurrent-unmarked-grains} = \{w \in L_\textsf{WF-unmarked-grains} |& \textsf{The two focal events demarcated by }\focalEv_1 \text{ and } \focalEv_2 \\ &\text{are concurrent under any choice of grains in } w\}.
\end{array}
\end{align*}


Observe that the set $L_\textsf{concurrent-unmarked-grains}$ is obtained by projecting the language
 $L_\textsf{concurrent-unmarked-grains}$ from the alphabet $\alphabet \uplus \set{\focalEv_1, \focalEv_2, \bgn, \egn}$ to the sub-alphabet $\alphabet \uplus \set{\focalEv_1, \focalEv_2}$.
Since $L_\textsf{concurrent-marked-grains}$ is regular, and since regular languages are closed under projection, we immediately get that  $L_\textsf{concurrent-unmarked-grains}$ is also regular and hence there is a constant space monitor that recognizes it.


\section{Proofs of Section \ref{sec:sgrains}}\label{app:sgrains}

\subsection{Proof of Theorem \ref{thm:swgc}}

\begin{proof}
The high level idea is to try to linearize the graph $\mathbb{G}_{w, G}$, with the addition of an extra directed edge from the node whose grain contains $e_2$ to the one whose node contains $e_1$. Since there is no path from $e_1$ to $e_2$ in the graph, the addition of the edge cannot put the pair of events in the same strongly connected component. Moreover, the addition of the edge ensures that in every linearization of the graph $e_2$ appears before $e_1$. 

The claim is that the step-by-step linearization succeeds and produces an $\rdf{}$-equivalent run. 

At each step, let $\sigma$ be the linearization so far and let $R$ be the set of residual vertices left in the graph. Let $\mathbb{G}|_R$ be the graph induced on the vertices in $R$ through the edges of $\mathbb{G}$. 

Induction Hypothesis:
\begin{itemize}
\item[(i)]  $\sigma$ contains everything in a maximal strongly connected component completely, or not at all, never partially. As such, the same is true about each grain, which is the smallest strongly connected component in the absence of a larger one. 
\item[(ii)] Every event from $w$ is either in $\sigma$ or in some grain in $R$. 
\item[(iii)]  $\sigma$ preserves ${\sf po}$ and $\rdf{}$ of $w$.
\item [(iv)] There does not exist an edge between a node in $R$ to a note containing any element in $\sigma$ in graph $\mathbb{G}$.
\end{itemize}

Base case: $\sigma = \epsilon$, and all invariants hold. 

Induction step: depending on the composition of $R$, we extend $\sigma$ and reprove the induction hypothesis. Consider $\mathbb{G}|_R$, and the condensation of $\widehat{\mathbb{G}}|_R$ in which every edge in every maximally strongly connected component has been contracted. Observe that $\widehat{\mathbb{G}}|_R$ is acyclic. Therefore, there exists a node in it with no predecessors. Let us call this node $v$: 

\begin{itemize}
\item Case 1: $v$ is not a contracted node, and it corresponds to a node in $\mathbb{G}|_R$. We let $\sigma = \sigma v$, with the understanding that $v$ is a word that represents the corresponding grain; the grain can correspond to a single letter and the word can be that letter. Remove the node $v$ from $\mathbb{G}|_R$ (and all its adjacent edges).

Let us reprove the induction hypothesis:

\begin{itemize}
\item[(i)]  By definition of  $\mathbb{G}|_R$, $v$ is in a strongly connected component of size 1, and therefore this assumption holds again. 
\item[(ii)] Trivially true, because we just shift events from one side to the other. 
\item[(iii)] Any predecessors (through ${\sf po} \cup \rdf{}$) of the events in $v$ would have a path to $v$. If $v$ is a minimal element, that means that all predecessors should  be in $\sigma$ or in $v$ . Inside the word in $v$, we do not change the order of any events, and therefore ${\sf po}$ and $\rdf{}$ cannot be broken inside $v$. By induction hypothesis, inside $\sigma$, we have maintained ${\sf po}$ and $\rdf{}$. Therefore, it remains to argue that ${\sf po}$ and $\rdf{}$ are not broken when we concatenate the two. 

If ${\sf po}$ is broken, this means that some element in $\sigma$ must have been ${\sf po}$ ordered after some element in $p$. But, this is a contradiction to induction hypothesis (iv).  

It remains to argue that any dangling reads in $v$ (i.e. reads whose matching writes do not belong to $v$)  are  matched with the correct write as a result of concatenating $\sigma$ and $v$. Assume that is not the case; dangling read $\rd(x)$ is matched with $\wt_1(x)$ in $\sigma v$, while it was matched with $\wt_2(x)$ in $w$. We argued that $\wt_2(x)$ is in $\sigma$.  Therefore, it must be the case that $\wt_2(x)$ appears before $\wt_1(x)$ in $\sigma$. There are two possibilities for the original arrangement of the three events in $w$:

\begin{itemize}
\item $\wt_1(x) \dots \wt_2(x) \dots \rd(x)$: This means that as a result of our linearization of $\sigma$ so far, we ended up reordering the two writes. This is only possible if their corresponding grains commute. Otherwise, there would be an edge between them that would prevent us from linearizing them in the new order. But, since the dangling $\rd(x)$ is not part of the grain of $\wt_2(x)$, by definition, it cannot commute against any other grain with which it shares the variable; specifically, not the grain that includes $\wt_1(x)$. Contradiction!
\item $\wt_2(x) \dots \rd(x) \dots \wt_1(x)$: This means that as a result of our linearization of $\sigma$ so far, we ended up reordering $\rd(x)$ and $\wt_1(x)$. This is only possible if their corresponding grains commute. Otherwise, there would be an edge between them that would prevent us from linearizing them in the new order. But, since the dangling $\rd(x)$ is not part of the grain of the write operation that it reads from (i.e. $\wt_2(x)$), by definition, it cannot soundly commute against another grain with which it shares $x$ and the other grain writes to $x$; specifically, not the grain that includes $\wt_1(x)$. Contradiction!
\end{itemize}

\item [(iv)] It is implied by the choice of $v$ as a node with no predecessors in the remaining graph.
\end{itemize}

\item Case 2: $v$ is a contracted node, and therefore corresponds to a set of nodes $V_v$ from $\mathbb{G}|_R$. We reference induction hypothesis (i) to state the fact that all nodes in $V_v$ belong in $\mathbb{G}|_R$.  Let $u$ be the subsequence of $w$ that includes precisely all the events from the grains in $V_v$. Let $\sigma=\sigma u$.  Remove the node $v$ from $\mathbb{G}|_R$.

Let us reprove the induction hypothesis:

\begin{itemize}
\item[(i)]  By definition of  $\mathbb{G}|_R$, this holds
\item[(ii)] Trivially true, because we just shift events from one side to the other. 
\item[(iii)] Any predecessors (through ${\sf po} \cup \rdf{}$) of the events in $u$ would have a path to $v$. If $v$ is a minimal element, then all such predecessors are either in $\sigma$ or inside $u$. 

Inside the word in $u$, we do not change the order of any events (compared to how they appear in $w$), and therefore ${\sf po}$ and $\rdf{}$ cannot be broken inside $u$.  

It remains to argue that ${\sf po}$ and $\rdf{}$ are not broken when we concatenate $\sigma$ and $u$. The argument for ${\sf po}$ is similar to the previous case. 

We must argue that any dangling reads in $u$ are matched to the correct write as a result of concatenating $\sigma$ and $u$. Assume that is not the case; dangling read $\rd(x)$ is matched with $\wt_1(x)$ in $\sigma u$, while it was matched with $\wt_2(x)$ in $w$. 

$\wt_1(x)$ and $\wt_2(x)$ must both be in $\sigma$, because we are not reordering anything in $u$ and $\rd(x)$ would not otherwise be a dangling read.  Therefore, it must be the case that $\wt_2(x)$ appears before $\wt_1(x)$ in $\sigma$ and neither belongs to $u$. There are two possibilities for the original arrangement of the three events in $w$:

\begin{itemize}
\item $\wt_1(x) \dots \wt_2(x) \dots \rd(x)$: This means that as a result of our linearization of $\sigma$ so far, we ended up reordering the two writes. This is only possible if their corresponding grains commute. Otherwise, there would be an edge between them that would prevent us from linearizing them in the new order. But, since the dangling $\rd(x)$ is not part of the grain of $\wt_2(x)$, by definition, it cannot commute against any other grain with which it shares the variable; specifically, not the grain that includes $\wt_1(x)$. Contradiction!
\item $\wt_2(x) \dots \rd(x) \dots \wt_1(x)$: This means that as a result of our linearization of $\sigma$ so far, we ended up reordering $\rd(x)$ and $\wt_1(x)$. This is only possible if their corresponding grains commute. Otherwise, there would be an edge between them that would prevent us from linearizing them in the new order. But, since the dangling $\rd(x)$ is not part of the grain of the write operation that it reads from (i.e. $\wt_2(x)$), by definition, it cannot soundly commute against another grain with which it shares $x$ and the other grain writes to $x$; specifically, not the grain that includes $\wt_1(x)$. Contradiction!
\end{itemize}
\item [(iv)] It is implied by the choice of $v$ as a node with no predecessors in the remaining graph.

\end{itemize}

\end{itemize}

\end{proof}

\subsection{Proof of Theorem \ref{thm:collapse}}

\begin{proof}
Observe that $\mathbb{G}_{w,G}$ is an acyclic graph for a valid set of {\em contiguous} grains $G$, and as such the condensed graph $\widehat{\mathbb{G}}$ and $\mathbb{G}_{w,G}$ coincide. Therefore, the proof of the theorem says that any linearization of $\mathbb{G}_{w,G}$, in which the grains appear as contiguous subwords is $\rdf{}$-equivalent to the original run $w$. It remains to argue that any such linearization can be acquired through a sequence of swaps. 

The graph $\mathbb{G}_{w,G}$ coincides with the partial order that represents the the equivalence class of $w$ under the grain monoid induced by $G$. More precisely, $[H_G(w)]_{\widehat{I}_G}$ is an equivalence class in a classic partially commutative monoid, and therefore precisely coincides with all the linearization of the partial order induced on the elements of $H_G(w)$ by $\widehat{I}_G$. Observe that the edges of $\mathbb{G}_{w,G}$ coincide with the constraints in this partial order, by construction. Therefore, any linearization $u \in (\Sigma \cup \Sigma_G)^*$ of the graph, with the view of noting down every grain as $a_g$ (its corresponding alphabet symbol in $\Sigma_G$) can be obtained from $H_G(w)$ through a sequence of swaps from $\widehat{I}_G$. Since every such swap has a corresponding swap at the level of $\Sigma^*$, one can mirror the same swap sequence and transform $H^-1_G(u)$ to $w$.  
\end{proof}


\section{Proofs from~\secref{beyond-contiguous-grains}}

\subsection{Proof of~\propref{split}}

\begin{proof}[Proof Sketch.]
A straightforward induction on $i$ can be used to establish that $\grain^{(i)}$ is minimal.
\end{proof}

\subsection{Proof of~\lemref{minimal-grains-sufficient}}

\begin{proof}
First, let's establish soundess of $\indrel_{\grains'}$ using the characterization of~\thmref{sgrain}.
Suppose on the contrary that $\indrel_{\grains'}$ is not sound.
This means, by \thmref{sgrain}, we have a pair of grains $(g'_1, g'_2) \in \indrel_{\grains'}$
that violate one of the conditions of \thmref{sgrain}; without loss of generality, we assume that $g'_1$ appears before $g'_2$ in $\tr$.
Let $g_1, g_2 \in $ be the larger grains in $\grains$ such that 
$g'_i \in \splitGrains(g_i)$ for $i \in \set{1, 2}$.
The violations witnessed by $(g'_1, g'_2)$ can be one of the following.
\begin{itemize}
	\item $g'_1g'_2 \not\rfnovaleq g'_2g'_1$. In this case, there must be a variable $x$ and events $e = \rd(x)$,
	$f = \rdf{e}$ and $f' = \wt(x)$ ($f' \neq f$)
	such that one of the following holds: 
	(a) $e \in g'_1$ but $f \not\in g_1$ and $f'\in g'_2$,
	(b) $e \in g'_1$ but $f \in g_1 - g'_1$ and $f'\in g'_2$,
	(c) $e \in g'_2$ but $f \not\in g_2$ and $f'\in g'_1$, or
	(d) $e \in g'_2$ but $f \in g_2 - g'_2$ and $f'\in g'_1$,.
	In cases (a) and (c), we have $g_1g_2 \not\rfnovaleq g_2g_2$, contradicting soundness of $\indrel_\grains$
	In case (b) (resp. case (d)), we have that $g'_1 \not\in \splitGrains(g_1)$
	(resp. $g'_2 \not\in \splitGrains(g_2)$) since the minimal grain containing 
	$f$ must also contain $e$, contradicting minimality of splity grains (\propref{split}).
	
	\item There is a variable $x \in \VariableOf{g'_1} \cap \VariableOf{g'_2}$ such that
	$\wt \in \OpOf{g'_1, x} \cup \OpOf{g'_2, x}$ and an event $e = \rd(x)$ (with $f = \rdf{e}$) such that one of the following holds:
	(a) $f \in g'_1$ and $e \in g_1 - g'_1$,
	(b) $f \in g'_1$ and $e \not\in g_1$,
	(c) $e \in g'_1$ and $f \not\in g_1 - g'_1$,
	(d) $e \in g'_1$ and $f \not\in g_1$,
	(e) $f \in g'_2$ and $e \in g_2 - g'_2$,
	(f) $f \in g'_2$ and $e \not\in g_2$,
	(g) $e \in g'_2$ and $f \not\in g_2 - g'_2$, or
	(h) $e \in g'_2$ and $f \not\in g_2$.
	In cases (a), (c), (e) and (g), minimality of either $g'_1$ or $g'_2$ is violated.
	In cases (b), (d), (f) and (h), soundness of $\indrel_\grains$ is violated.
\end{itemize}

Let us now prove that any two events 
that are declared grain graph concurrent using $\grains$ will also be declared so using $\grains'$.
Towards this, let $\GGraph{\tr, \grains} = (V, E)$ and
$\GGraph{\tr, \grains'} = (V', E')$ be the respective grain graphs.
Observe that for any two grains $g'_1, g'_2 \in \grains'$,
if $(g'_1, g'_2) \in E'$, then $(g'_1, g'_2) \not\in \widehat{\indrel}_{\grains'}$
and there exist events $e_1 \in g'_1$ and $e_2 \in g'_2$ such that $e_1$ appears before $e_2$
and $(e_1, e_2) \not\in \indrel_\maz$.
This means $(g'_1, g'_2) \not\in \indrel_{\grains'}$
and thus $(g_1, g_2) \not\in \indrel_\grains$, where $g_1$ and $g_2$ are the corresponding
grains of $\grains$ from which $g'_1\in g_1$ and $g'_2 \in g_2$ are obtained after splitting.
Since we have $e_1 \in g_1$ and $e_2 \in g_2$, we can conclude that $(g_1, g_2) \in E$.
As a result, if there is a path
$(h'_1, h'_2), (h'_2, h'_3) \ldots (h'_k, h'_{k+1})$ in $\GGraph{\tr, \grains'}$,
then the following is a path in  $\GGraph{\tr, \grains}$:
$(h_1, h_2), (h_2, h_3) \ldots (h_k, h_{k+1})$,
where $h'_i \in h_i$ is the corresponding larger grain.
Hence, if $e_1$ and $e_2$ are declared grain graph concurrent by $\grains$, they
will also be declared so using $\grains'$
\end{proof}

\subsection{Proof of~\lemref{minimal-grains-active-bound}}

\begin{proof}
Let $\pi$ be a prefix.
For a grain $g$ that is active at the end of $\pi$,
we let $\vars_g$ be the set of variables $x$ such that there is a read of $x$ which is not in $g$,
but its corresponding write event is in $g$.
Observe that since each of the grains in $\grains$ is minimal, we must have
$\vars_g \neq \emptyset$ for each $g \in \actv{\pi, \grains}$.
Now, consider two distinct grains $g, g' \in \actv{\pi, \grains}$.
We must have $\vars_g \cap \vars_{g'} = \emptyset$, as otherwise
two different write events on the same variable will be pending in $\pi$ which is impossible.
This gives us $|\actv{\pi, \grains}| \leq |\vars|$.
\end{proof}

\subsection{Proof of~\propref{summarized-reachability}}

$(\Leftarrow)$. It is easy to see that if there is a path from 
$\focalGrain{1}$ to $\focalGrain{2}$ in $\SGGraph{\pi, G}$, then there must also be
a path from $\focalGrain{1}$ to $\focalGrain{2}$ in $\GGraph{\pi, G}$
and thus a paths in $\GGraph{\tr, G}$.
This is because edges of  $\SGGraph{\pi, G}$ are paths of  $\GGraph{\pi, G}$.

$(\Rightarrow)$.
Consider a path $\rho$ from $\focalGrain{1}$ to $\focalGrain{2}$ in $\GGraph{\tr, G}$.
This path can be expressed as $\rho = \rho_1 \rho_2 \ldots \rho_k$, 
where for each sub-path $\rho_i$, the start and the end vertices are vertices of $\SGGraph{\pi, G}$,
and each of the intermediate vertices are not.
In other words, $\textsf{source}(\rho_i) \deadpath{\pi, \grains} \textsf{target}(\rho_i)$.
As a result, $(\textsf{source}(\rho_i), \textsf{target}(\rho_i))$ is an edge in $\SGGraph{\pi, G}$.
This gives a path from $\focalGrain{1}$ to $\focalGrain{2}$ in $\SGGraph{\tr, G}$.

\subsection{Proof of \thmref{weak-concurrency-fixed-grains-monitor}}


\newcommand{\stOf}[1]{\textsf{proj}(#1)}
\newcommand{\gIDOf}[1]{\textsf{ID}(#1)}
\newcommand{\grainOf}[1]{\grain_{#1}}

We will prove this theorem by proving that after having processed prefix $\pi$
of the input word $\tr$, the state $q$ of the automaton $\autsup{{\sf GG}}$
reflects the summarized graph $\SGGraph{\pi, G}$.
Towards this, let us define the notation $\stOf{\pi, G}$
to denote the finite projection of $\SGGraph{\pi, G} = (V_\pi, E_\pi)$,
i.e., $\stOf{\pi, G} = \stuple{V, E, C, P, SC, SP}$
such that the following holds.
In the following, we use $\gIDOf{g}$ to denote the grain identifer of a grain $g$.
Also, we use $\grainOf{\pi, u}$ to denote the grain identifier
of the latest grain in $\pi$ with identifier $u$. 
\begin{itemize}
	\item $V = \setpred{\gIDOf{g}}{g \in V_\pi}$.

	\item $E = \setpred{(\gIDOf{g_1}, \gIDOf{g_2})}{(g_1, g_2) \in E_\pi}$.

	\item For each $u \in V$, 
	$\contents(u) = \contents_\pi(\grainOf{u, \pi})$.

	\item For each $u \in V$, $\pvars(u) = \pvars_\pi(\grainOf{u, \pi})$.

	\item For each $u \in V$,
	$\summaries(u) = \summaries_\pi(\grainOf{u, \pi})$.

	\item For each $u \in V$,
	$\spvars(u) = \spvars_\pi(\grainOf{u, \pi})$.
\end{itemize}

The invariant that the automaton maintains is the following:
\begin{claim}
\claimlabel{automaton-invariant-scattered}
Let $\tr \in L_\textsf{VMG}$ and let $\grains$ be the corresponding grains.
Let $\pi$ be some prefix of $\tr$. 
Let $q$ be the state of $\autsup{{\sf GG}}$ after processing $\pi$.
We have, $q = \stOf{\pi, \grains}$.
\end{claim}

\begin{proof}
We establish this by inducting on the length of $\pi$.
In the base case, $\pi = \epsilon$ is the empty trace and thus $\SGGraph{\pi, \grains}$
is the empty graph and the corresponding state of the automaton reached,
namely $q_0$ also matches $q_0 = \stOf{\pi, \grains}$.
Let us now consider the run $\pi = \rho \cdot e$, where $e \in \gMarkAlphabet$
and let $q_\rho$ be the state of $\autsup{{\sf GG}}$ after having processed $\rho$.
By the inductive hypothesis, we have $q_\rho = \stOf{\rho, \grains}$
We can now establish the invariant about $q_\pi = \delta_{\sf GG}(q_\rho, e)$
by doing a case-by-case analysis on $e$.
In the following we will use the notation $q_\rho = \stuple{V_\rho, E_\rho, \contents_\rho, \pvars_\rho, \summaries_\rho, \spvars_\pi}$.

\begin{description}
	\item[Case $e = (i, (\bgn, Y))$]. Since $\tr$ belongs to $L_\textsf{VMG}$,
	we know that at the end of $\rho$, there is no active grain with identifier $i$,
	and hence $i \not\in V$.
	Clearly, $\SGGraph{\pi, \grains}.V = \SGGraph{\rho, \grains}.V \uplus \set{i}$.
	Also, no new edges must be added in $\SGGraph{\pi, \grains}.E$ over $\SGGraph{\rho, \grains}.E$.
	Likewise, since the only new event added is the begin event of a new transaction, the contents of
	each tracked grain stays the same. 
	The pending variables of the grain with ID $i$ is captured by the set $Y$.
	Finally, since there is no path in $\GGraph{\pi, \grains}$ that originate at $i$,
	neither the summaries nor the pending vars of reachable grains change.
	Thus, $q_\pi = \stOf{\pi, \grains}$.

	\item[Case $e = (i, \egn)$, $i\in\set{\focalEv_1, \focalEv_2}$]. No change in the summarized graph happens and this is also reflected in $q_\pi$ which is the same as $q_\rho$.

	\item[Case $e = (i, \egn)$, $i\not\in\set{\focalEv_1, \focalEv_2}$].
	In this case, the grain with identifier $i$ is not present in $\SGGraph{\pi, \grains}$ since it is no longer active.
	Indeed, we have $i \not\in V_\pi = V_\rho - \set{i}$,
	since $i$ is a dead grain at the end of $\pi$.
	For the same reason, $\pvars_\pi(i) = \emptyset$ and $\contents_\pi(i) = \emptyset$.
	Now, the set of dead paths are as follows: 
	$g \deadpath{\pi, \grains} g'$
	iff 
	$\grainOf{i, \pi} \not\in {g, g'}$ and either
	(a) $g \deadpath{\rho, \grains} g'$ or 
	(b) ($g \deadpath{\rho, \grains} \grainOf{i, \pi}$ and $\grainOf{i, \pi} \deadpath{\rho, \grains} g'$).
	Thus, $\SGGraph{\pi, \grains}.E = (\SGGraph{\rho, \grains}.E - \setpred{(\grainOf{i, \pi}, g), (g, \grainOf{i, \pi})}{g \neq \grainOf{i, \pi}}) \cup \setpred{(g, g')}{(g, \grainOf{i, \pi}) \in \SGGraph{\rho, \grains}.E, (\grainOf{i, \pi}, g') \in \SGGraph{\rho, \grains}.E}$.
 	Indeed, the function $\mergeEdge{E}{i}$ in the monitor captures this accurately.
	Finally, the summaries of a grain $g$ in the summarized graph change based on the new dead paths.
	In particular, for a grain $g$, the set of grains that are reachable from $g$ using a dead path
	are either those that were already reachable in $\rho$, or those that
	were reachable from $\grainOf{i, \pi}$, given $g \deadpath{\rho, \grains} \grainOf{i, \pi}$.
	This means that $\SGGraph{\pi, \grains}.\summaries(g) = \SGGraph{\rho, \grains}.\summaries(g) \cup \SGGraph{\rho, \grains}.\contents(\grainOf{\pi, i}) \cup \SGGraph{\rho, \grains}.\summaries(\grainOf{\pi, i})$
	if $(g, \grainOf{i, \pi}) \in \SGGraph{\rho, \grains}.E$,
	and $\SGGraph{\pi, \grains}.\summaries(g) = \SGGraph{\rho, \grains}.\summaries(g)$ for other active grains.
	Indeed this is accurately reflected in $\summaries_\pi$ using $\mergeSum{SM, M}{E}{i}$.
	The reasoning for $\spvars_\pi$ follows similar reasoning.

	\item[Case $e = (i, a)$, $a \in \alphabet$].
	In this case, no new active grain is seen.
	However, new edges may be formed between active grains.
	First note that no new edge between grains $g$ and $g'$ can be added at this point
	if $\grainOf{i, \pi} \not\in \set{g, g'}$ since no new dead paths can be formed at this point.
	Now consider a grain $g$ in $\SGGraph{\rho, \grains}.V = \SGGraph{\pi, \grains}.V$.
	An edge from $g$ to $\grainOf{i, \pi}$ may be inferred
	if a new dead path $g \deadpath{\pi, \grains} \grainOf{i, \pi}$ may be inferred.
	This can happen if there is an existing dead path $g \deadpath{\rho, \grains} g'$
	for some $g'$ (active or dead) such that there is an event $e \in g'$ (on variable $x$) 
	that is dependent with $a$ and $x$ is either pending in $g'$ or in $\grainOf{i, \pi}$.
	If $g'$ is dead, then inductively we have
	$e \in \SGGraph{\rho, \grains}.\summaries(g)$ and $x \in \SGGraph{\rho, \grains}.\spvars(g)$.
	Otherwise we would have 
	$g' = g$ and thus inductively  $e \in \SGGraph{\rho, \grains}.\contents(g)$ and $x \in \SGGraph{\rho, \grains}.\pvars(g)$.
	This is captured using $\depEdge{\contents(j){\cup}\summaries(j),\; a,\; \pvars(j){\cup}\spvars(j){\cup}\pvars(i)}$ in the monitor.
\end{description}
\end{proof}

As a corollary of \claimref{automaton-invariant-scattered}, it is easy to see that
for a run in $\tr \in L_\textsf{VMG}$ annotated with grains $\grains$, 
if $\tr$ is accepted, then there is a path from $\focalGrain{1}, \focalGrain{2}$
in $\SGGraph{\tr, \grains}$, and further
if $\focalGrain{1}, \focalGrain{2}$
in $\SGGraph{\tr, \grains}$, then the automaton reaches a state $q \in F_{\sf GG}$
where there is a path from node $\focalEv_1$ to node $\focalEv_2$.
This proves the theorem.

\subsection{Proof of \thmref{weak-grain-concurrency-regular}}


First, we outline the language $L_\textsf{VMG}$ and show that it is regular.
Observe that $L_\textsf{VMG} = L_\textsf{valid} \cap L_\textsf{minimal}$
where $L_\textsf{valid}$ is the set of annotated runs with correctly marked grains and pending variables
and $L_\textsf{minimal}$.
We will show that both of these languages are regular, and thus their intersection is regular.

First consider the language $L_\textsf{valid}$;
it can expressed as $L_\textsf{valid} = L_\textsf{valid-grains} \cap L_\mathsf{\focalEv}$,
where the first language is the collection of all words such that
for every grain identifier, if we project the word to that grain identifier,
then no two grains overlap, while the second language is the set of all
words that contain exactly one occurrence of the begin event 
corresponding to each of $\set{\focalEv_1, \focalEv_2}$.
For a grain identifier $i \in \grainIDs$,
let us use $L_\textsf{i}$ to denote the set of words that correspond
to the contents of a valid grain whose identifier is $i$.
Then, $L_\textsf{grains, i} = (\sum_{Y\subseteq \vars}(i, (\bgn, Y)) \cdot L_\textsf{i, Y} (i, \egn))^*$
is the set of words that correspond to valid sequences of grain $i$.
Then, consider the homomorphism $\pi_i$ that projects all letters whose grain id is $i$ to themselves,
and all other letters to $\epsilon$.
Then, $L_\textsf{valid-grains-non-pending} = \bigcap\limits_{i \in \grainIDs} (\pi_i^{-1} L_\textsf{grains, i})$  
is the set of words whose projection to any given grain ID is well formed;
this is obtained by intersecting the inverse homomorphic image of regular languages, hence it is regular.
Now, a simple automaton can also check if the pending variables are consistent with the annotation;
let $L_\textsf{pending}$ be the langauge of this automaton.
Hence, $L_\textsf{valid-grains} = L_\textsf{valid-grains-non-pending} \cap L_\textsf{pending}$
and thus regular.
Finally, $L_\mathsf{\focalEv}$ is the set of all words that contain exactly one occurence of
a letter from $\setpred{(\focalEv_1, (\bgn, Y))}{ Y \subseteq X}$
and one occurence of a letter from  $\setpred{(\focalEv_2, (\bgn, Y))}{ Y \subseteq X}$;
clearly this is regular.

Now, we consider the set of runs where the grains are minimal.
Minimality can also be checked using an automaton which tracks, for each active grain,
whether there is at least one pending read not yet seen, by guessing this read and later validating it (in a left to right pass).
Thus, $L_\textsf{VMG}$ is regular.

Now, the set of runs in which two given events (marked with $\focalEv_1$ and $\focalEv_2$)
are deemed concurrent are essentially those that are obtained by the homomorphic image of
the words in $L_\textsf{VMG}$, ensure that these two events are in the focal grains,
and are also accepted by $\autsup{\sf GG}$.
Here, the homomorphism we consider maps begin and end letters to $\epsilon$,
and for other letters it removes their grain identifiers.
This is clearly a regular language.
\end{document}